\tikzstyle{block}=[draw opacity=0.7,line width=1.4cm]
\newtheorem{theorem}{Theorem}
\newtheorem{definition}{Definition}
\newtheorem{remark}{Remark}
\newtheorem{corollary}{Corollary}
\def \R{{\mathbb{R}}}
\def \N{{\mathbb{N}}}
\newcommand{\myparallel}{{\mkern3mu\vphantom{\perp}\vrule depth 0pt\mkern2mu\vrule depth 0pt\mkern3mu}}
\DeclareFontFamily{OT1}{pzc}{}
\DeclareFontShape{OT1}{pzc}{m}{it}{<-> s * [1.2200] pzcmi7t}{}
\DeclareMathAlphabet{\mathpzc}{OT1}{pzc}{m}{it}
\title[Geometric formulation of the Cauchy invariants in flat and curved spaces]
      {Geometric formulation of the Cauchy invariants for incompressible Euler flow in flat and curved spaces}
\author[Nicolas Besse \& Uriel Frisch]
       { Nicolas Besse \footnotemark[1]  \footnotemark[2] , Uriel Frisch \footnotemark[1]} 
\begin{document}

\maketitle

\renewcommand{\thefootnote}{\fnsymbol{footnote}}

\footnotetext[1]{ 
 Universit\'e C\^ote d'Azur, Observatoire de la C\^ote d'Azur, Nice, France}
  
\footnotetext[2]{
{\tt Nicolas.Besse@oca.eu}}

\renewcommand{\thefootnote}{\arabic{footnote}}

\begin{abstract}

Cauchy invariants are now viewed as a powerful tool for investigating
the Lagrangian structure of three-dimensional (3D) ideal flow
\citep{FZ14,PZF16}. Looking
at such invariants with the modern tools of differential geometry and
of geodesic flow on the space SDiff of volume-preserving
transformations \citep{Arn66}, all manners of generalisations are
here derived. The Cauchy
invariants equation and the Cauchy formula, relating the vorticity and
the Jacobian of the Lagrangian map, are shown to be  two expressions of this
Lie-advection invariance, which are duals of each
other
(specifically, Hodge dual). Actually, this is
shown to be an instance of a general result which holds
for flow both in flat (Euclidean) space and in a curved Riemannian space:
any Lie-advection invariant $p$-form which is exact (i.e. is a
differential of a $(p-1)$-form) has an associated Cauchy invariants
equation and a Cauchy formula.
This constitutes a new fondamental result in linear transport theory,
providing a Lagrangian formulation of Lie advection for some classes
of differential forms. The result has a broad applicability: examples
include the magnetohydrodynamics (MHD) equations and various extensions thereof, discussed by
\citet{LMM16} and include also the equations of  \citet{Tao16}, Euler 
equations
with modified Biot--Savart law, displaying finite-time blow up. Our
main result is also used for new derivations --- and several new results
--- concerning local helicity-type invariants for fluids and MHD
flow in flat or curved spaces of arbitrary dimension.
\end{abstract}

\begin{flushleft}
{\bf Keywords:} Cauchy invariants, Riemannian manifolds, vorticity
$2$-form,
relabelling symmetry, incompressible Euler equations, Lie advection, Extended MHD.  
\end{flushleft}


\section{Introduction}
\label{sec:intro}
About half a century before the discovery of the integral invariant of
velocity circulation, \citet{Cau27} found a \textit{local}
form of this conservation law, now called the \textit{Cauchy invariants},
which constitutes the central topic of the present paper. The somewhat
tortuous history of the Cauchy invariants has been documented by
\citet{FV14}.  Starting in the sixties, the Cauchy invariants were rediscovered
by application of the Noether theorem, which relates continuous invariance
groups and conservation laws; at first this was done without attribution to
Cauchy \citep{Eck60, Sal88, PM96}. But, eventually, near the end of the 20th
century, proper attribution was made \citep{AZY96, ZK97}.

In recent years, there has been growing interest in Cauchy invariants because
of the development of new applications, such as analyticity in time of
fluid-particle trajectories
\citep[][see also \citet{CEA15a, CEA15b}]{FZ14, ZF14, RVF15, BF17},
and the design of very accurate semi-Lagrangian numerical schemes
for fluid flow \citep{PZF16}.

Our geometric approach to the Cauchy invariants will allow us to achieve two
goals. On the one hand to unify various vorticity results: as we shall see,
the 3D Cauchy invariants equation, as originally formulated, the Cauchy formula relating
current and initial vorticity and Helmholtz's result on conservation of
vorticity flux may all be viewed as expressing the geometrical conservation 
law of vorticity. On the other hand it will
allow us to extend the
invariants into various directions: higher-order Cauchy invariants,
magnetohydrodynamics (MHD),
flow in Euclidean spaces of any dimension, and flow in curved spaces.  Of course,
flows of practical interest are not restricted to flat space \citep{MR99,KS97}. Curved spaces
appear not only in General Relativistic fluid dynamics \citep{Wei72, CB08}, but also
for flows in the atmosphere and oceans of planets \citep{SAM68},
for studies of the energy inverse cascade on 
negatively curved spaces
\citep[][see also \citet{KM12, AK98}]{FG14}, and also for
flows on curved biological membranes \citep{Sei91, RN11, LR15}.
Moreover,  recently, \citet{GV16} have used differential geometry tools such as pullback transport
to extend the generalised Lagrangian theory (GLM) of \citet{AMI78} to curved spaces.  
Hereafter, the notation 1D, 2D and 3D will refer to the usual 
one, two and three-dimensional flat (Euclidean) spaces.

For carrying out this program our key tools will be differential
geometry and, to a lesser extent, variational methods.

In differential geometry we shall make use of Lie's
generalisation of advection (transport). The Lie advection  of a scalar quantity is just
its invariance along fluid-particle trajectories. But, here, we consider more
general objects, such as vectors, $p$-forms and tensors. For example, for our purpose, it
is more convenient to consider the vorticity as a $2$-form (roughly an
antisymmetric second-order tensor), rather than as a vector field.  These
non-scalar objects live in vector spaces spanned by some basis, and Lie
advection requires taking into account the distortion of the underlying vector
space structure, which moves and deforms with the flow.  The generalisation of
the particular (material) derivative to tensors is thus the Lie derivative.

As to variational (least-action) methods, an important advantage is
that they are applicable with very little change to both flat and
curved spaces, provided one uses Arnold's formulation of ideal
incompressible fluid flow as geodesics on the space SDiff of volume
preserving smooth maps \citep{Arn66,AK98}. Since the  1950s, to derive or
rediscover the Cauchy invariants equation, a frequently used approach has
been via  Noether's theorem with the
appropriate continuous invariance group, namely the relabelling
invariance in Lagrangian coordinates. The latter can be viewed as a
continuous counterpart of the permutation of Lagrangian labels if the
fluids were constituted of a finite number of fluid elements; note
that continuous volume-preserving transformations may be approached by
such permutations \citep{Lax71, Shn85}.

The outline of the paper is as follows.
Sec.~\ref{sec:ARACLI} is about Lie derivatives, an extension to flow
on manifolds  of what
is called in fluid mechanics the Lagrangian or material derivative.
We then prove a very general result about
Lie-advection invariance for exact $p$-forms of order $p\geq 2$, namely that 
there are generalised Cauchy invariants equations (see Theorem~\ref{th:GCI}), a very
concrete Lagrangian expression  of Lie-advection invariance. 
This theorem is applicable both to linear transport theory, when the advecting
velocity is prescribed, and to nonlinear (or selfconsistent) transport, when
the Lie-advected quantity (e.g. the vorticity) is coupled back to the
velocity (e.g. through the Biot--Savart law). Contrary to most modern
derivations of Cauchy invariants, our proof  does not make use of Noether's
theorem. Actually, for the case of linear transport, there may not even be 
a suitable continuous symmetry group to ensure the existence of a Noether
theorem. Sec.~\ref{rk:DGCI} is about generalised Cauchy formulas, which are
actually the Hodge duals of the Cauchy invariants equations. Theorem~\ref{th:GCI} has
a broad applicability, as exemplified in the subsequent sections.
Sec.~\ref{ex:B} is about ideal incompressible MHD. Sec.~\ref{ss:ACF} is about
adiabatic and barotropic compressible fluids. Sec.\ref{ss:BCMHD} is about
barotropic ideal compressible MHD. Sec.~\ref{ss:ECMHD} is about extended ideal
compressible MHD. Finally, Sec.~\ref{ss:TaoEuler} is about Tao's recent
modification of the 3D Euler equation allowing finite-time blowup and its
geometric interpretation.

Then, in Secs.~\ref{sec:VIFIH}, we turn to various applications in ordinary
hydrodynamics. Problems of helicities for hydrodynamics and MHD and their
little-studied local variants are presented in Sec.~\ref{sec:AHMHD}.
Concluding remarks and a discussion of various open problems are found in
Sec.~\ref{sec:conc}.  There are two sets of
Appendices. Appendix~\ref{sec:GVDIE} gives proofs of certain technical
questions, not found in the existing literature.  Appendix~\ref{sec:CICF},
``Differential Geometry in a Nutshell,'' has a different purpose: it is meant
to provide an interface between the fluid mechanics reader and the sometimes
rather difficult literature on differential geometry. Specifically, whenever
we use a concept from differential geometry that the reader may not be
familiar with, e.g., a ``pullback,'' we give a soft definition in simple
language in the body of the text and we refer to a suitable subsection of
Appendix~\ref{sec:CICF}. There, the reader will find more precise definitions
and, whenever possible, short proofs of key results, together with precise
references (including sections or page numbers) to what, we believe, is
particularly readable specialised literature on the topic.

\section{A general result about Lie advection and Cauchy invariants}
\label{sec:ARACLI}

\subsection{A few words about differential geometry}
\label{sec:FWDG}
In the present paper we prefer not starting with a barrage of
mathematical definitions and we rather appeal to the reader's
intuition. For those hungry of precise definitions, more
elaborate -- but
still quite elementary -- material and guides to the literature are
found in Appendix~\ref{sec:CICF} and its various subsections. 
For reasons explained in the Introduction, we feel that 
it is essential not to restrict our discussion to flat spaces. Otherwise we
would have used a `half-way house' approach where all the differential
geometry is expressed in the standard language of vector operations, as
done, for example in the paper of \citet{Lars96}.

The concept of a differentiable manifold $M$ generalises to an
arbitrary dimension $d$ that of a curve or a surface embedded in the
3D Euclidean space $\R^3$. To achieve this in an intrinsic fashion without
directly using Cartesian coordinates, the most common procedure
makes use of collections of local charts, which are smooth bijections
(one-to-one correspondences) with pieces of $\R^d$. 

By taking
infinitesimal increments near a point $a\in M$, one obtains tangent
vectors, which are in the $d$-dimensional tangent
space $TM_a$, a generalisation of the tangent line to a curve and the
tangent plane to a surface. The union of all these tangent vectors
$\cup_{a\in M} TM_a$, denoted $TM$, is called the tangent bundle.  

As for ordinary vector spaces, one can define the dual of the tangent
bundle, noted $T^\ast M$, which can be constructed through linear
forms, called $1$-forms or cotangent vectors, acting on vectors of the
tangent bundle $TM$. The set of all these cotangent vectors is called
the cotangent bundle, noted $T^\ast M$. Similarly, $p$-forms, where
$p$ is an integer, are skew symmetric $p$-linear forms over the
tangent bundle $TM$. Note that in a flat (Euclidean) space $\R^d$ with
coordinates $x=\{x_i\}$, where $i = 1,\ldots d$, a 1-form is simply an
expression $\sum_i a_i(x)dx_i$, which depends linearly on the
infinitesimal increments $dx_i$. It is also interesting to note that
1-forms were in common use in fluid mechanics in the works of
D'Alembert, Euler and Lagrange more than a century before vectors were
commonly used, say, in the lectures of Gibbs.

An  important operator on $p$-forms is the \textit{exterior
  derivative}, $d$,
which linearly maps $p$-forms to $(p+1)$-forms (see
Appendix~\ref{ssec:CICF:EDIP}). An explicit definition of $d$  is not 
very helpful to build an intuitive feeling,
but it is worth pointing out that the square of $d$
is zero or, in words, an exact form (a form that is the exterior
derivative of another one) is closed (its exterior derivative
vanishes). Under certain conditions, to which we shall come back, the
converse is true.

\subsection{Lie advection: an extension of the Lagrangian (material) derivative}
\label{sec:CLT}
In this section we present some standard mathematical concepts needed
to introduce our theorem on \textit{generalised Cauchy invariants},
stated in the next section. For this, we need to generalise the fluid
mechanics concept of \textit{Lagrangian invariant}, which applies to a
scalar quantity  that does not change along fluid particle
trajectories.
The generalisation is called \textit{Lie-advection invariance}
(alternative terminologies found in the literature are
``Lie-transport'' and ``Lie-dragging''). 

First we introduce the \textit{pullback} and \textit{pushforward}
operations, which arise naturally when applying a change of variable,
here, between Lagrangian and Eulerian coordinates at a fixed time $t$
(later, we shall let this \textit{dynamical time}  vary). The Lagrangian
variable (initial position of the fluid particle), denoted by $a$, is
on a manifold $M$ (called here for concreteness \textit{Lagrangian}), while the Eulerian variable (current position of
the fluid particle), denoted by $x$, is on a manifold $N$ (called here
\textit{Eulerian}). The sets
$M$ and $N$ may or may not coincide.  The Lagrangian map
linking $a\in M$ to $x\in N$ is defined as follows
\begin{equation}
\begin{tabular}{rcll}
  $\varphi :$ & $M $ & $ \rightarrow $ &$N$\\
   & $a$ & $\mapsto$ & $x=\varphi(a)$.
\end{tabular}
\label{LaMa}
\end{equation}
The change of variable $a\rightarrow x=\varphi(a)$ induces two
operations that connect objects (such as functions, vectors, forms
and tensors), defined on $M$ to corresponding ones, defined on
$N$. They are the \textit{pushforward} operator, which sends objects
defined on $M$ to ones defined on $N$ and its inverse, the
\textit{pullback} operator.  To  define these transformations
precisely, it is convenient to consider successively the cases where 
these operators act on 
real-valued  functions (scalars), then on
vectors, then on $1$-forms, and finally on more involved
objects such as $p$-forms, obtainable from the former ones by linear
combinations of tensor products
(see Appendix~\ref{ssec:CICF:Te}). 

\begin{figure}
  \psfrag{g}{$\gamma_s$}
  \psfrag{TMa}{$TM_a$}
  \psfrag{M}{$M$}
  \psfrag{X}{$X$}
  \psfrag{a}{$a$}
  \psfrag{N}{$N$}
  \psfrag{TNx}{$TN_x$}
  \psfrag{xefa}{$x=\varphi(a)$}
  \psfrag{fogs}{$\varphi\circ\gamma_s$}
  \psfrag{fsX}{$\varphi_\ast X$}
  \psfrag{f}{$\varphi$}
  \psfrag{fi}{$\varphi^{-1}$}
  \psfrag{push}{$\varphi_\ast=\mathrm{pushforward}$}
  \psfrag{pull}{$\varphi^\ast=\mathrm{pullback}$}
  \psfrag{phrase1}{$\mbox{Set of objects}$}
  \psfrag{phrase2}{$\mbox{defined on }$}
  \psfrag{phrase3}{$\mbox{such as tensor fields, ...}$}
  \includegraphics[scale=0.3]{./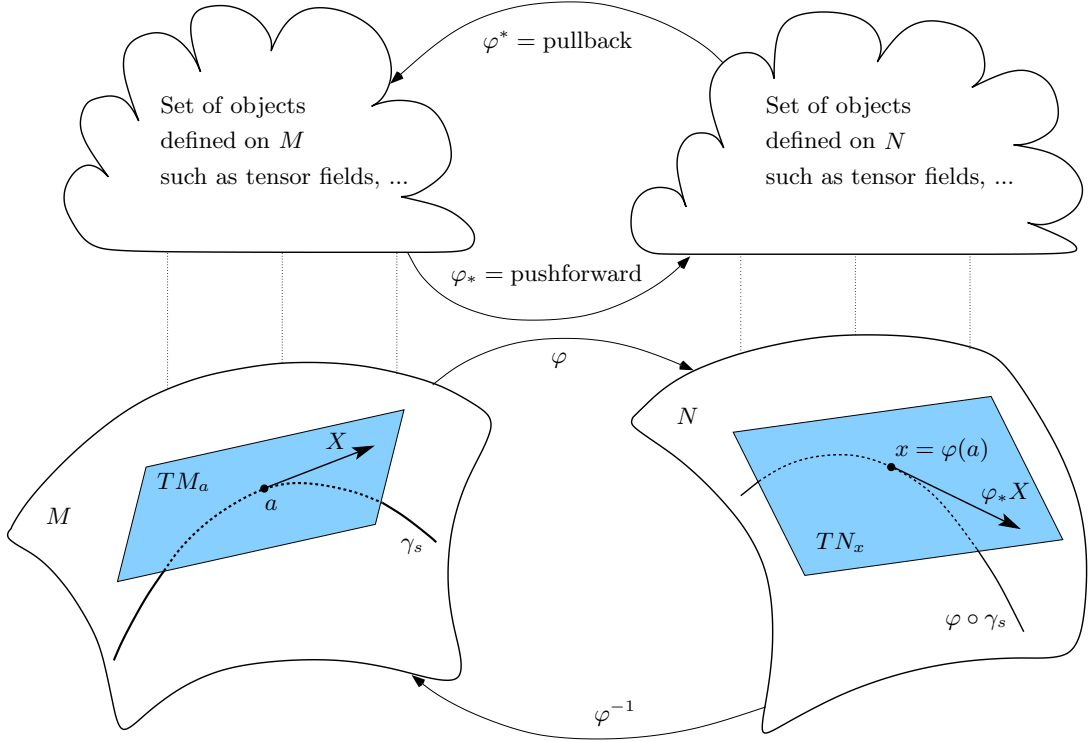}
  \caption{The curve $\gamma_s$ is the integral curve of a given
    vector field $X$, while the curve $\varphi\circ\gamma_s$
    is the integral curve of $\varphi_\ast X$. The pullback transformation $\varphi^\ast$ is a change
    of variables from Eulerian to Lagrangian coordinates, while the pushforward transformation $\varphi_\ast$
    is a change of variables from Lagrangian to Eulerian coordinates.}
  \label{difgeo}
\end{figure}

For the case of scalars, namely elements of 
$\mathcal{F}(M)$,  the set of real-valued smooth functions
defined on $M$, the pullback is simply a change of variable from
Eulerian to Lagrangian variables and the pushforward is the converse.
Specifically, the  pushforward of a function $f:M\rightarrow \R$ on
$M$ is $\varphi_\ast f=f\circ\varphi^{-1}$, where the symbol $\circ$
denotes the usual composition of maps. Conversely, the pullback
of a function $f:N\rightarrow \R$ on $N$ is $\varphi^\ast
f=f\circ\varphi$.

Now, we turn to vector fields, denoted $\mathcal{T}_0^1(M)$ on $M$, a subset
of the tangent bundle $TM$. (Why we use the notation $\mathcal{T}_0^1(M)$ will
become clear later.) At this point we cannot just make a change of
variable, because the Lagrangian and the Eulerian vectors take values in
different tangent spaces. But we can reinterpret tangent vectors to a
manifold in terms of differentials of scalar functions defined on that
manifold.  To implement this, it is useful to consider a vector field
$X(a)$ on $M$
as the generator of a suitable flow on $M$. For this, we need an auxiliary
time variable, denoted $s$, to parametrise a family of smooth maps
$\gamma_s:M\rightarrow M$. Observe that the time $s$ is not related to the
dynamical time $t$, which so far is held fixed. The maps satisfy the following
equations 
\[
\dot{\gamma_s}:=\frac{d\gamma_s}{ds} =X(\gamma_s), \quad \gamma_0(a)=a, \ \ a\in M.
\] 
The pushforward of
the vector field $Y\in \mathcal{T}_0^1(N)$ (also called the differential of
the map $\varphi$ or the tangent map) is now defined locally at the point
$a\in M$, as the linear map $ \varphi_\ast := T_a\varphi:TM_a \rightarrow
TN_{\varphi(a)}$, obtained by simply identifying the
resulting vector with the tangent vector to the mapped curve.
This is illustrated in figure~\ref{difgeo}.  Translated into
equations it means that 
\[
\varphi_\ast X= \varphi_\ast \left(\frac{d\gamma_s}{ds} \right) \!\bigg
|_{s=0} := \left(\frac{d}{ds}\varphi \circ\gamma_s\right)\!\bigg |_{s=0}
=T\varphi \circ X \circ \varphi^{-1},
\]
where $T$ denotes the tangent map and is given locally by the Jacobian matrix
$J_\varphi=J(\varphi)=\partial \varphi /\partial a$.
Recalling that $a^i$ denotes local coordinates on $M$ and $x^i$ local
coordinates on $N$, in terms of these local coordinates, this formula is
expressed equivalently as
\[
(\varphi_\ast X)^i(x)=\frac{\partial \varphi^i}{\partial
  a^j}(a)X^j(a)=\frac{\partial x^i}{\partial a^j}(a)X^j(a).
\]
To define the inverse operation, the pullback denoted $\varphi^\ast$, we just interchange $\varphi$
and $\varphi^{-1}$. Thus we have 
$\varphi_\ast=(\varphi^{-1})^\ast=(\varphi^\ast)^{-1}$ and 
$\varphi^\ast=(\varphi^{-1})_\ast=(\varphi_\ast)^{-1}$).  It thus  follows that the pullback
of a vector field $Y\in \mathcal{T}_0^1(N)$ on $N$ is
\[
\varphi^\ast Y=(T\varphi)^{-1} \circ Y \circ \varphi, \quad \mbox{or componentwise }\ \
(\varphi^\ast Y)^i(a)=\frac{\partial (\varphi^{-1})^i}{\partial x^j}(x)Y^j(x)=\frac{\partial a^i}{\partial x^j}(x)Y^j(x).
\]
Therefore we find that $\varphi^\ast Y=(\varphi^\ast Y)^i(a)(\partial/\partial
a^i)=Y^i(x)(\partial/\partial x^i)$ and $\varphi_\ast X =(\varphi_\ast
X)^i(x)(\partial/\partial x^i)=X^i(a)(\partial/\partial a^i)$.  Notice that
$\varphi$ must be a diffeomorphism (one-to-one smooth map) in order for the
pullback and pushforward operations to make sense; the only exception to this
is the pullback of functions (and covariant tensors, see
Appendix~\ref{ssec:CICF:PP}), since the inverse map is then not needed. Thus
vector fields can only be pulled back and pushed forward by diffeomorphisms.

We can extend pullback and pushforward operations to linear forms on vector
fields, that is 1-forms or covectors. The set of such $1$-forms fields on $M$ is denoted by $\mathcal{T}_1^0(M) \subset T^\ast M$
(see Appendices~\ref{ssec:CICF:Te}~and~\ref{ssec:CICF:EADF}).  
In order to define the pullback of a $1$-form $\alpha\in\mathcal{T}_1^0(N)$, we  introduce the linear map
$\varphi^\ast:T^\ast N_{\varphi(a)}\rightarrow T^\ast M_a$, defined by
\begin{equation}
 \label{duality_formvector}
\langle \varphi^\ast\alpha ,X\rangle : =  \langle\alpha ,\varphi_\ast X\rangle, \quad X\in\mathcal{T}_0^1(M), \ \  \alpha\in\mathcal{T}_1^0(N),
\end{equation}
where the duality bracket $\langle \cdot ,\cdot\rangle$ is the natural pairing
between the spaces $TM_a$ and $T^\ast M_a$ or between the spaces
$TN_{\varphi(a)}$ and $T^\ast N_{\varphi(a)}$. The pushforward of a $1$-form
$\beta\in\mathcal{T}_1^0(M)$, is defined by changing $\varphi$ to
$\varphi^{-1}$, i.e.  $\varphi_{\ast}:=(\varphi^{-1})^\ast$. In terms of local
coordinates we have
\[
(\varphi^\ast \alpha)_i(a)=\frac{\partial x^j}{\partial a^i}(a)\alpha_j(x),\quad \quad
(\varphi_\ast \beta)_i(x)=\frac{\partial a^j}{\partial x^i}(x)\beta_j(a).
\]
Therefore we find that $\varphi^\ast \alpha:=(\varphi^\ast \alpha)_i(a)da^i=\alpha_i(x)dx^i$ and
$\varphi_\ast \beta:=(\varphi_\ast \beta)_i(x)dx^i=\beta_i(a)da^i$.

Pullback and pushforward operations are easily generalised to
tensor fields $\Uptheta \in\mathcal{T}_p^q(M)$, where $\mathcal{T}_p^q(M)$
denotes the set of $p$-covariant and $q$-contravariant tensor fields on
$M$. Such generalisations follow naturally since a $p$-covariant and
$q$-contravariant tensor can be written  as linear combinations of
tensor products of $p$ $1$-forms
and $q$ vectors (see Appendix~\ref{ssec:CICF:Te}).

In order to define the \textit{Lie derivative}, we bring in the dynamical time
$t$. For this, we specialize to the case where the Lagrangian and the
Eulerian points are on the same manifold (with $N=M$) and we  consider
a time-dependent vector field $v_t$, the \textit{velocity field,} taken in
$\mathcal{T}_0^1(M)$ for all $t\geq 0$. This velocity field is
prescribed a priori and we do not have to specify which dynamical
equation it satisfies. We define a time-dependent Lagrangian
map $\varphi_t$ in the usual fluid mechanical sense as mapping the
initial position of a fluid particle, following the flow, to its
position at time $t$, namely as the 
solution of the ordinary differential equation (ODE)
\begin{equation}
\label{ODE_LaFlow}
\dot{\varphi}_t=\frac{d}{dt}\varphi_t=v(t,\varphi_t), \quad \varphi_0={\rm Identity}.
\end{equation}
From this equation, 
we also define a $2$-time Lagrangian map $\varphi_{t,s}$ with $t\ge
0$ and $s\ge 0$ as the map from the position of a fluid particle at time $s$ to
its position at time $t$. Allowing the time to run backwards, we do not
impose $t>s$. Obviously, we have
$\varphi_{t,0}=\varphi_t$. Furthermore, we obviously have the group composition rule
\begin{equation}
\label{group_prop_transitif}
{\varphi}_{t,s}=\varphi_{t,\tau}\circ \varphi_{\tau,s}\qquad \forall\,
t\ge 0, \forall\,\tau\ge 0, \forall\, s\ge 0.
\end{equation}

In this dynamical setting, the pullback and pushforward operations
consist roughly in following a given tensor field, while taking into account
the geometrical deformation of the tensor basis, along the Lagrangian
flow.  This will naturally lead to considering a derivative with respect
to the Lagrangian flow, called the Lie derivative.  The Lie
derivative of a structure (for instance a vector, a $1$-form or a
tensor field) with respect to the time-dependent vector field $v_t$
measures the instantaneous rate of geometrical variation of the
structure (tensor basis) as it is transported and deformed by the
Lagrangian flow
$\varphi_t$ generated by $v_t$. 

Specifically, we first define the Lie derivative acting on a time-independent tensor field $\Uptheta \in
\mathcal{T}_p^q(M)$.  To the Lagrangian map $\varphi_{\tau,t}$ we
associate its pullback  $\varphi_{\tau,t}^\ast$, constructed just as
earlier in this section, when the dynamical time was held fixed. The
Lie derivative with respect to $v_t$  is defined by 
\begin{equation}
\label{def:LieDer}
\mathsterling_{v_t} \Uptheta := \left(\frac{d }{d\tau}\varphi_{\tau,t}^\ast \Uptheta\right) \!\bigg |_{\tau=t}.
\end{equation}

Now,  we turn to a time-dependent tensor field
$\Uptheta_t \in \mathcal{T}_p^q(M)$ and we derive the \textit{Lie derivative theorem}
$\forall t\geq 0$. For this, we calculate the time-derivative of
$\varphi_{t,s}^\ast \Uptheta_t$, 
using the product rule for derivatives and  obtain
\[
\frac{d }{d t}\varphi_{t,s}^\ast \Uptheta_t =\frac{d }{d\tau}\varphi_{\tau,s}^\ast \Uptheta_\tau \bigg |_{\tau=t}
= \frac{d }{d\tau}\varphi_{\tau,s}^\ast \Uptheta_t \bigg |_{\tau=t} +
\varphi_{t,s}^\ast \frac{d }{d\tau}\Uptheta_\tau \bigg |_{\tau=t}.
\]
Then, using the group composition rule \eqref{group_prop_transitif}, this equation becomes
\[
\frac{d }{d t}\varphi_{t,s}^\ast \Uptheta_t
= \frac{d }{d\tau}(\varphi_{\tau,t}\circ \varphi_{t, s})^\ast \Uptheta_t \bigg |_{\tau=t} +
\varphi_{t,s}^\ast \partial_t \Uptheta_t. 
\]
Using a property for the pullback of map composition (see Appendix~\ref{ssec:CICF:PP}), 
namely $(\varphi \circ \psi)^\ast= \psi^\ast \varphi^\ast$, we obtain
\[
\frac{d }{d t}\varphi_{t,s}^\ast \Uptheta_t
= \frac{d }{d\tau} \varphi_{t, s}^\ast \varphi_{\tau,t}^\ast\Uptheta_t \bigg |_{\tau=t} +
\varphi_{t,s}^\ast \partial_t \Uptheta_t=
\varphi_{t, s}^\ast\frac{d }{d\tau}\varphi_{\tau,t}^\ast\Uptheta_t \bigg |_{\tau=t} +
\varphi_{t,s}^\ast \partial_t \Uptheta_t.
\]
Finally, using the definition of the Lie derivative \eqref{def:LieDer}, this equation leads 
to the following formula, known as the Lie derivative theorem:
\begin{equation}
\label{LDT}
\frac{d }{dt}\varphi_{t,s}^\ast  \Uptheta_t= \varphi_{t,s}^\ast(\partial_t  \Uptheta_t+ \mathsterling_{v_t}  \Uptheta_t), \quad \forall t\geq 0, \ \  \forall s\geq 0.
\end{equation}

In this paper, a central role will be played by tensor fields that
are \textit{Lie-advection invariant} (in short
\textit{Lie invariant}). 
A Lie-advection invariant 
tensor field  $\Uptheta_t$ is such that its Lagrangian pullback, i.e. its 
pullback  to time $t=0$,  is equal to the initial tensor field, that is
\begin{equation}
\label{Lie-pullback}
\varphi_{t}^\ast  \Uptheta_t= \Uptheta_0.
\end{equation}
In fluid mechanics terms, one then states that the tensor field $\Uptheta_t$ is frozen into the flow $\varphi_t$.
From the Lie derivative theorem \eqref{LDT}, we immediately find that
this is equivalent to having the tensor field $\Uptheta_t$ satisfying
the equation 
\begin{equation}
\label{frozen}
\partial_t  \Uptheta_t+ \mathsterling_{v_t}  \Uptheta_t=0,
\end{equation}
which is called the Lie advection equation. A tensor field $\Uptheta_t$ 
satisfying the Lie-advection  equation \eqref{frozen} is said to be  Lie-advected by the flow of $v_t$.

It is easily checked
that when $\Uptheta_t$ is a scalar field (denoted $\theta_t$) and when the
manifold reduces to an Euclidean space,  \eqref{frozen} becomes just
\begin{equation}
\label{frozenscalar}
\partial_t  \theta_t+ v_t^i \partial_i\theta_t=0,
\end{equation}
where $\partial_i$ is the Eulerian derivative.
Hence, in the scalar case,
Lie-advection invariance of $\theta_t$ is the same as stating that $\theta_t$
is a Lagrangian (material) invariant in the usual fluid mechanical
sense. The advantage of the Lie-advection invariance formulation for
higher-order objects is that, e.g., in 3D the vorticity -- when considered
as a $2$-form -- is then also Lie-advection invariant, as noticed for
the first time  (in 19th century language) by \citet{Hel58}.

\subsection{Generalised Cauchy invariants}
\label{sec:GCI}

In this section we state a general theorem about Lie-advection invariance using differential geometry tools.
The result is a natural generalisation
of Cauchy invariants that arises when we consider,
in an Euclidean space $\R^d$ or on a
$d$-dimensional Riemannian manifold $(M,g)$, a Lie-advected  $p$-form
with a crucial additional constraint of exactness (or some genereralization).
We recall that a $d$-dimensional Riemannian manifold $(M,g)$ is a differentiable manifold $M$ of dimension $d$, together with a $2$-covariant
tensor field, the metric tensor $g$, which associates to any point $a
\in M$ a $2$-covariant tensor $\mathcal{T}_2^0(M)$(see Appendices~\ref{ssec:CICF:Te}~and~\ref{ssec:CICF:RM}). The
metric tensor $g$ allows one both to define a metric on $M$ for measuring distances between two points on $M$,
and  to define a suitable scalar product for vectors lying in a
tangent space (see Appendix~\ref{ssec:CICF:RM}). 

The main new result of the present section will be to show that, to each
exact Lie-advected $p$-form, corresponds a generalised Cauchy
invariant.
This is of course a result with applications beyond hydrodynamics,
but it is not just a rewriting of Lie-advection invariance in
Lagrangian coordinates: the Cauchy-invariants formulation requires an
additional condition other than Lie-advection invariance.
The method of proving this is quite general but, of course,
also applies to Euler flow in the ordinary flat 3D space. In that
case, we already have the original proof of \citet{Cau27}, which
juggles with Eulerian and Lagrangian coordinates and thus has a flavour
of pullback-pushforward  argument. In addition, we have all the
relatively recent derivations using Noether's theorem in conjunction
with a variational formulation of the Euler equations and the
relabelling invariance \citep[see, e.g.,][]{Sal88}. What we now
present constitutes in a sense a third approach, rooted in differential geometry
and allowing generalisation to a variety of  hydrodynamical and MHD
problems,
discussed in Sections~\ref{ss:broad}, \ref{sec:VIFIH} and \ref{sec:AHMHD}.

Let $\Omega \subset M$ be a bounded region of the $d$-dimensional Riemannian manifold $(M,g)$.
We remind the reader that a  $p$-form $\gamma \in \Lambda^p(\Omega)$ is exact if it is the
exterior derivative of a $(p-1)$-form 
$\alpha\in  \Lambda^{p-1}(\Omega)$, that is
\begin{equation}
\label{def:exact_form}
\gamma=d\alpha, \quad \gamma \in \Lambda^p(\Omega), \ \ \alpha\in  \Lambda^{p-1}(\Omega),
\end{equation}
where $d$ denotes the exterior derivative (see Appendix~\ref{ssec:CICF:EDIP}). 
We recall that a family of $p$-forms $\gamma_t  \in \Lambda^p(\Omega)$, $t>0$, are Lie-advected by the flow of $v_t$
if they satisfy the Lie advection equation
\begin{equation}
\label{eqn:LieAdvection}
\partial_t \gamma_t + \mathsterling_{v_t} \gamma_t =0, \quad \mbox{on} \ \Omega\subset M, \quad \mbox{with} \ \gamma_0 \ \mbox{given}.
\end{equation}
Here the vector field $v_t$ is the generator of the Lagrangian flow
$\varphi_t$ defined by \eqref{ODE_LaFlow}.
 
\begin{theorem}(Generalised Cauchy invariants equation).
\label{th:GCI}
For $t>0$, let $\gamma_t \in \Lambda^p(\Omega)$ be a time-dependent family of exact $p$-forms 
(i.e. satisfying  \eqref{def:exact_form})
that are Lie-advected (i.e. satisfy  \eqref{eqn:LieAdvection}); then we have the generalised
Cauchy invariants equation
\begin{equation}
\label{eqn:GCI}
\frac{1}{(p-1)!}\,\delta_{j_1\ldots j_{p-1}}^{i_1\ldots i_{p-1}} d\alpha_{i_1\ldots i_{p-1}}\,
\wedge dx^{j_1}\wedge \ldots \wedge dx^{j_{p-1}}=\gamma_0.
\end{equation}
\end{theorem}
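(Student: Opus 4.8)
The plan is to derive \eqref{eqn:GCI} from two facts already established: first, that the Lie-advection equation \eqref{eqn:LieAdvection} is equivalent, through the Lie derivative theorem \eqref{LDT}, to the pullback identity \eqref{Lie-pullback}; and second, that the exterior derivative is natural with respect to pullback, $\varphi^\ast\!\circ d=d\circ\varphi^\ast$ (see Appendix~\ref{ssec:CICF:EDIP}). Everything is done at a fixed dynamical time $t>0$. The point to keep in mind is that it is the \emph{exactness} of $\gamma_t$, and not merely its Lie-advection invariance, that makes the argument close; this is why \eqref{eqn:GCI} is more than a transcription of \eqref{eqn:LieAdvection} into Lagrangian coordinates.

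First I would transport the form back to the initial time. Since $\gamma_t$ obeys \eqref{eqn:LieAdvection}, the Lie derivative theorem \eqref{LDT} with $s=0$ gives $\frac{d}{dt}\varphi_t^\ast\gamma_t=\varphi_t^\ast(\partial_t\gamma_t+\mathsterling_{v_t}\gamma_t)=0$, hence $\varphi_t^\ast\gamma_t$ is constant in $t$; evaluating at $t=0$, where $\varphi_0=\mathrm{Identity}$, gives $\varphi_t^\ast\gamma_t=\gamma_0$, which is \eqref{Lie-pullback}. Next I would invoke exactness: for the fixed $t$, choose any $(p-1)$-form $\alpha_t\in\Lambda^{p-1}(\Omega)$ with $\gamma_t=d\alpha_t$. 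Because pullback commutes with $d$,
\[
\gamma_0=\varphi_t^\ast\gamma_t=\varphi_t^\ast(d\alpha_t)=d\bigl(\varphi_t^\ast\alpha_t\bigr).
\]
This already shows the construction does not depend on the choice of primitive: replacing $\alpha_t$ by $\alpha_t+d\beta_t$ changes $\varphi_t^\ast\alpha_t$ by the exact form $d(\varphi_t^\ast\beta_t)$, which the outer $d$ annihilates.

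It then remains to write $d(\varphi_t^\ast\alpha_t)$ out in Lagrangian coordinates. Expressing $\alpha_t$ in the Eulerian frame with antisymmetric components $\alpha_{i_1\ldots i_{p-1}}$ and abbreviating $x^i=\varphi_t^i(a)$ so that $dx^i=(\partial\varphi_t^i/\partial a^k)\,da^k=d(\varphi_t^i)$, the pullback is, up to the factor $1/(p-1)!$ and the generalised Kronecker delta that performs the antisymmetrisation over the component indices, the $(p-1)$-form built from $\alpha_{i_1\ldots i_{p-1}}\circ\varphi_t$ wedged with $dx^{j_1}\wedge\cdots\wedge dx^{j_{p-1}}$. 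Applying $d$ and using the Leibniz rule, every term in which $d$ falls on one of the factors $dx^{j}=d(\varphi_t^j)$ vanishes, because $d\circ d=0$; only the term where $d$ hits the scalar component function $\alpha_{i_1\ldots i_{p-1}}\circ\varphi_t$ survives. Together with the identity $\gamma_0=d(\varphi_t^\ast\alpha_t)$ of the previous paragraph, this is exactly \eqref{eqn:GCI}.

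I do not expect any step to be a genuine obstacle; the only care required is bookkeeping --- keeping the overall normalisation and the generalised Kronecker delta consistent with whatever convention is chosen for the components $\alpha_{i_1\ldots i_{p-1}}$, and applying the naturality relation $\varphi^\ast d=d\varphi^\ast$ correctly. If there is a single key idea, it is the observation that exactness turns Lie-advection invariance directly into a statement about $d$ of a pulled-back potential, and that the differentials $dx^i=d(\varphi_t^i)$ of the components of the Lagrangian map are closed, so that the exterior derivative in Lagrangian coordinates acts only on the transported potential --- which is what produces a quantity independent of the dynamical time, hence a (generalised) Cauchy invariant.
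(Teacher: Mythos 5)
Your proposal is correct, but it reaches \eqref{eqn:GCI} by a genuinely different route from the paper's proof. The paper proves Theorem~\ref{th:GCI} by a direct component computation: it expands $\varphi_t^\ast\gamma$ in Lagrangian components, substitutes the componentwise relation $\gamma_{l_1\ldots l_p}=\delta_{l_1\ldots l_p}^{k i_1\ldots i_{p-1}}\partial_{x^k}\alpha_{i_1\ldots i_{p-1}}$, and then reassembles everything through the Laplace (recursive) expansion of the generalised Kronecker symbol and the chain rule $\partial_{x^k}=(\partial a^m/\partial x^k)\partial_{a^m}$, until the wedge-product form emerges. You instead use the pullback identity \eqref{Lie-pullback} together with the naturality of the exterior derivative, $\varphi_t^\ast d\alpha_t=d(\varphi_t^\ast\alpha_t)$, and then a one-line Leibniz computation in which every term where $d$ hits a factor $dx^i=d(\varphi_t^i)$ dies because $d^2=0$. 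What each approach buys: yours is shorter, makes transparent that it is exactness (not Lie advection alone) that closes the argument, and shows gauge independence ($\alpha_t\mapsto\alpha_t+d\beta_t$) for free; the paper's calculation is heavier but delivers the explicit index form with all its Kronecker-delta bookkeeping, which is then reused verbatim in Corollary~\ref{Cor:inversionGCI} and in the Hodge-dual Cauchy formula. In fact the paper itself contains your key identity elsewhere — item 2 of Remark~\ref{rem:closedexact} and the rewriting of the invariant as an exact differential in Section~\ref{ss:SIGCI} — just not as the proof of the theorem. One caveat you correctly flag yourself: the prefactor in \eqref{eqn:GCI} depends on the summation convention (with unrestricted summation over the $i$-indices an extra $1/(p-1)!$ would be needed, whereas with ordered multi-indices the stated factor is right); this looseness is present in the paper's own intermediate steps as well, so it is a matter of bookkeeping rather than a gap in your argument.
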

Here,  $x=\varphi_t$ denotes  Eulerian coordinates and $\delta_{j_1\ldots j_{p}}^{i_1\ldots i_{p}}$  the generalised
Kronecker symbol (see Appendix~\ref{ssec:CICF:PKD}). Note that, henceforth, in
connection with Cauchy invariants, we use the singular for ``equation'', since in
modern writing a vector or a tensor is considered a single object.
\begin{proof} 
Since $\gamma$ is Lie-advected,  by the Lie derivative theorem \eqref{LDT}, we have
$\varphi_t^\ast\gamma = \gamma_0$. Then, we write
$\varphi_t^\ast\gamma$ in terms of its component in the
$a$-coordinates (see Appendix~\ref{ssec:CICF:PP}), to obtain 
\begin{eqnarray}
\gamma_0 &=&\varphi_t^\ast\gamma\nonumber \\
&=& \sum_{i_1<\ldots < i_p} (\varphi_t^\ast \gamma)_{i_1 \ldots i_p}  \,da^{i_1}\wedge \ldots \wedge da^{i_p} \nonumber \\
&=& \frac{1}{p!} \frac{\partial x^{j_1}}{\partial a^{i_1}} \ldots \frac{\partial x^{j_p}}{\partial a^{i_p}}\,
\gamma_{j_1\ldots j_p}(x)\,  da^{i_1}\wedge \ldots \wedge da^{i_p}. \label{gci_1}
\end{eqnarray}  
Next, using the generalised Kronecker symbol $\delta_{j_1\ldots
  j_{p}}^{i_1\ldots i_{p}}$, we obtain
\begin{eqnarray*}
  \gamma &=& d\alpha \nonumber \\
         &=& d \sum_{i_1<\ldots < i_{p-1}} \alpha_{i_1\ldots  i_{p-1}} (a)\,\wedge da^{i_1}\wedge \ldots \wedge da^{i_{p-1}}\\
         &=&  \sum_{i_1<\ldots < i_{p-1}} d\alpha_{i_1\ldots  i_{p-1}} \,\wedge da^{i_1}\wedge \ldots \wedge da^{i_{p-1}}\\
  &=& \sum_{i_1<\ldots < i_{p-1}} \frac{\partial}{\partial a^k} \alpha_{i_1 \ldots i_{p-1}} da^k\wedge da^{i_1}\wedge \ldots \wedge da^{i_{p-1}}\\
  &=&   \sum_{i_1<\ldots < i_{p-1}} \delta_{j_1\ldots j_{p}}^{k i_1\ldots i_{p-1}}
  \frac{\partial}{\partial a^k} \alpha_{i_1\ldots  i_{p-1}} da^{j_1}\wedge \ldots \wedge da^{j_{p}},
\end{eqnarray*}  
from which we deduce
\begin{equation}
\label{gci_2}
\gamma_{l_1\ldots  l_{p}}(x) =  \delta_{l_1\ldots l_{p}}^{k i_1\ldots i_{p-1}}\frac{\partial}{\partial x^k} \alpha_{i_1\ldots  i_{p-1}}(x). 
\end{equation}  
Substituting \eqref{gci_2} into   \eqref{gci_1} we obtain
\begin{eqnarray}
  \gamma_0&=&  \frac{1}{p!} \delta_{l_1\ldots l_{p}}^{j_p j_1\ldots j_{p-1}}
  \frac{\partial x^{l_1}}{\partial a^{i_1}} \ldots \frac{\partial x^{l_p}}{\partial a^{i_p}}\,
  \frac{\partial}{\partial x^{j_p}}  \alpha_{j_1\ldots  j_{p-1}}
  \,  da^{i_1}\wedge \ldots \wedge da^{i_p} \nonumber\\
  &=&  \frac{1}{p!} \delta_{l_1\ldots l_{p}}^{j_p j_1\ldots j_{p-1}}
  \frac{\partial x^{l_1}}{\partial a^{i_1}} \ldots
  \frac{\partial x^{l_p}}{\partial a^{i_p}} \frac{\partial a^k}{\partial x^{j_p}}\,
  \frac{\partial}{\partial a^{k}}  \alpha_{j_1\ldots  j_{p-1}}
  \,  da^{i_1}\wedge \ldots \wedge da^{i_p} \nonumber\\
  &=&\frac{1}{p!} (-1)^{p-1}\delta_{l_1\ldots l_{p}}^{j_1\ldots j_{p}}
  \frac{\partial x^{l_1}}{\partial a^{i_1}} \ldots
  \frac{\partial x^{l_p}}{\partial a^{i_p}} \frac{\partial a^k}{\partial x^{j_p}}\,
  \frac{\partial}{\partial a^{k}}  \alpha_{j_1\ldots  j_{p-1}}
  \,  da^{i_1}\wedge \ldots \wedge da^{i_p}.   \label{gci_3}
\end{eqnarray}  
Using now the Laplace expansion of determinants, we may define recursively
\begin{eqnarray}
\delta_{i_1\ldots i_{p}}^{j_1\ldots j_{p}}&=&
\left |
\begin{tabular}{lll}
  $\delta_{i_1}^{j_1}$ & $\ldots $ & $ \delta_{i_p}^{j_1} $\\
  $\vdots$ & $\ddots$ & $\vdots$\\
   $\delta_{i_1}^{j_p}$ &$\ldots $ &  $ \delta_{i_p}^{j_p} $
\end{tabular}
\right| \nonumber \\  
  &=&\sum_{k=1}^p(-1)^{p+k}
  \delta_{i_k}^{j_p}\delta_{i_1\ldots \widehat{i}_k \ldots i_{p}}^{j_1\ldots j_{k}\ldots\widehat{j}_p},
  \label{recursiv_kron}
\end{eqnarray}
where the hat indicates an omitted  index in the sequence. Using \eqref{recursiv_kron},
equation \eqref{gci_3} becomes
\begin{eqnarray*}
  \gamma_0&=& \frac{1}{p!}\sum_{n=1}^p(-1)^{n-1}
  \delta_{l_n}^{j_p}\delta_{l_1\ldots \widehat{l}_n \ldots l_{p}}^{j_1\ldots j_{p-1}}
  \frac{\partial x^{l_1}}{\partial a^{i_1}} \ldots
  \frac{\partial x^{l_p}}{\partial a^{i_p}} \frac{\partial a^k}{\partial x^{j_p}}\,
  \frac{\partial}{\partial a^{k}}  \alpha_{j_1\ldots  j_{p-1}}
  \,  da^{i_1}\wedge \ldots \wedge da^{i_p} \nonumber \\
  &=&\frac{1}{p!}\sum_{n=1}^p(-1)^{n-1}
  \delta_{l_1\ldots \widehat{l}_n \ldots l_{p}}^{j_1\ldots j_{p-1}}
  \frac{\partial x^{l_1}}{\partial a^{i_1}} \ldots
  \frac{\partial x^{l_p}}{\partial a^{i_p}} \frac{\partial a^k}{\partial x^{l_n}}\,
  \frac{\partial}{\partial a^{k}}  \alpha_{j_1\ldots  j_{p-1}}
  \,  da^{i_1}\wedge \ldots \wedge da^{i_p} \nonumber \\
  &=&\frac{1}{p!}\sum_{n=1}^p(-1)^{n-1}
  \delta_{l_1\ldots \widehat{l}_n \ldots l_{p}}^{j_1\ldots j_{p-1}}
  \frac{\partial x^{l_1}}{\partial a^{i_1}} \ldots
  \widehat{\frac{\partial x^{l_n}}{\partial a^{i_n}}} \ldots
  \frac{\partial x^{l_p}}{\partial a^{i_p}}
  \delta_{i_n}^k\,
  \frac{\partial}{\partial a^{k}}  \alpha_{j_1\ldots  j_{p-1}}
  \,  da^{i_1}\wedge \ldots \wedge da^{i_p} \nonumber \\
    &=&\frac{1}{p!}\sum_{n=1}^p(-1)^{n-1}
  \delta_{l_1\ldots \widehat{l}_n \ldots l_{p}}^{j_1\ldots j_{p-1}}
  \frac{\partial x^{l_1}}{\partial a^{i_1}} \ldots
  \widehat{\frac{\partial x^{l_n}}{\partial a^{i_n}}} \ldots
  \frac{\partial x^{l_p}}{\partial a^{i_p}}
  \frac{\partial}{\partial a^{i_n}}  \alpha_{j_1\ldots  j_{p-1}}
  \,  da^{i_1}\wedge \ldots \wedge da^{i_p} \nonumber \\
  &=&\frac{1}{p!}\sum_{n=1}^p(-1)^{n-1}
  \delta_{l_1\ldots \widehat{l}_n \ldots l_{p}}^{j_1\ldots j_{p-1}}
  \,  \left(\frac{\partial x^{l_1}}{\partial a^{i_1}}da^{i_1}\right)\wedge \ldots\wedge
  \left(\frac{\partial}{\partial a^{i_n}}  \alpha_{j_1\ldots  j_{p-1}} da^{i_n}\right)\wedge \ldots \wedge 
  \left( \frac{\partial x^{l_p}}{\partial a^{i_p}}da^{i_p}\right) \nonumber \\
   &=&\frac{1}{p!}\sum_{n=1}^p
  \delta_{l_1\ldots \widehat{l}_n \ldots l_{p}}^{j_1\ldots j_{p-1}}\,
  d\alpha_{j_1\ldots  j_{p-1}} \wedge dx^{l_1}\wedge \ldots \wedge \widehat{dx^{l_n}}
  \wedge \ldots \wedge  dx^{l_p} \nonumber \\
   &=&\frac{1}{(p-1)!}
  \delta_{l_1 \ldots l_{p-1}}^{j_1\ldots j_{p-1}}\,
  d\alpha_{j_1\ldots  j_{p-1}} \wedge dx^{l_1}\wedge \ldots
 \wedge  dx^{l_{p-1}}, \nonumber \\
\end{eqnarray*}
which ends the proof.
\end{proof}

\begin{remark}{\rm
\label{rem:closedexact}
(\textit{Sufficient conditions for exactness of differential forms}).       
In Theorem \ref{th:GCI} on the construction of generalised Cauchy invariants,
we demand that the  $p$-form $\gamma$ be exact. There are several
ways to obtain such an exact $p$-form.
\begin{itemize}
\item[1.] In some problems  a $p$-form $\gamma$ appears naturally as the exterior differential of
a $(p-1)$-form $\beta\in \Lambda^{p-1}(\Omega)$, i.e
$\gamma=d\beta$. As we will see in later sections, this is the case for the vorticity $2$-form and the magnetic field $2$-form.
\item[2.] When $\gamma_t$ is Lie advected and the initial condition $\gamma_0$
  is  exact, it follows from the commutation of the exterior derivative and the
  pushforward operator $\varphi_{t\ast}$\, (see  Appendix~\ref{ssec:CICF:EDIP}), that  $\gamma_t$ is exact. Indeed $\varphi_t^\ast\gamma=\gamma_0$ implies that
\[
\gamma=\varphi_{t\ast}\gamma_0=\varphi_{t\ast}d\alpha_0= d(\varphi_{t\ast}\alpha_0).
\]
\item[3.] Let us introduce $Z^p(M; \R)$, the subspace of $\Lambda^p(M)$ constituted of all closed $p$-forms and
$B^p(M; \R)$, the subspace of $Z^p(M;\R)$ constituted of all exact
  $p$-forms. Obviously,  we have $B^p(M; \R)\subset Z^p(M; \R) \subset \Lambda^p(M)$.
  Altough  $B^p$ and $Z^p$ are infinite-dimensional, in many cases their
  quotient space, called the $p$-th cohomology vector space and noted
\[
H^p(M;\R):= \frac{Z^p(M;\R)}{B^p(M;\R)},
\]
is finite-dimensional. For example, this is the case when $M$ is a compact finite-dimensional manifold.
The dimension of the vector space $H^p$ is called the $p$-th Betti number, written $b_p = b_p(M)$ and defined by
$b_p(M) := {\rm dim }\  H^p(M;\R)$. Thus the Betti number $b_p(M)$ is
the
maximum number of closed $p$-forms on $M$, such that all linear
combinations with non-vanishing coefficients are not exact. The knowledge of the Betti numbers of a given manifold
$M$ for $p\geq 1$ yields an exact quantitative answer to the question about exactness of a closed $p$-form:
\[
\mbox{a closed $p$-form is exact if and only if } b_p(M)=0. 
\]
Two closed forms are equivalent or cohomologous if they differ by an exact form, and
a closed $p$-form is exact if and only if it is cohomologous to zero. The values of the Betti
numbers are related to the topological properties of the manifold $M$ (e.g., homology, connectedness, curvature, ...).
For more details on cohomology and homology we refer the reader to Appendix~\ref{ssec:CICF:HCHD} and references therein.

\item[4.] By the Poincar\'e theorem \citep[see, e.g.,][Theorem 6.4.14]{AMR88}, if the $p$-form $\gamma$ is closed on $\Omega\subset M$, i.e.
$d\gamma=0$ on $\Omega$, then $\gamma$ is locally exact; that is, there exists a neighborhood $U\subset \Omega$ 
about each point of $\Omega$, on which $\gamma_{|_{U}} = d\alpha $ for
some $(p-1)$-form
$\alpha\in\Lambda^{p-1}(U)$.
The same result holds globally on a {\it contractible} domain \citep[][see Lemma 6.4.18]{AMR88}.
A contractible domain is roughly one in which, for any given point, the whole domain can be continuously shrunk into it
(see  Appendix~\ref{ssec:CICF:Ma}). By the Poincar\'e lemma, if $M$ is a compact $d$-dimensional contractible manifold,
all the Betti numbers (for $p\geq 1$) vanish, i.e.  $b_1(M)=\ldots =
b_d(M) =0$,  and $b_0(M) = 1$.
Contractibility is, however, an excessivily strong requirement to
ensure that closeness implies exacteness. For differential forms of a
given order $p$, the vanishing of the  single Betti number, $b_p(M) =
0$ is actually sufficient to ensure this. 
\end{itemize}
}\end{remark}

\subsection{Alternative formulations and extensions of Theorem~\ref{th:GCI}}
\label{sec:AFE}
Hereafter we discuss alternative representations of Theorem~\ref{th:GCI}, which are local,
such as  the generalised Cauchy formula, or global, such as  the integral
formulation of the Cauchy invariants equations.
We also give extensions of  Theorem~\ref{th:GCI} for some non-exact differential forms.

\subsubsection{Generalised Cauchy  formula}
\label{rk:DGCI}
An important operation in differential geometry  is the Hodge duality, which associates to any $p$-form a Hodge-dual $(d-p)$-form such that their
exterior product is the fundamental metric volume $d$-form $\mu=\sqrt{{\rm g}}da^1\wedge \ldots da^d$, 
with $\sqrt{\mathrm{g}}=\sqrt{{\rm det}(g_{ij})}$ (see
Appendix~\ref{ssec:CICF:HSECD}). For example, in 3D the vorticity
2-form and the vorticity vector field (as known since the work of \citet{Hel58})
are Hodge duals of each other. It is therefore of
interest to rewrite the Cauchy invariants equation and its
generalisations in Hodge-dual form. For example, as we shall see in
the next section, this will give us the Cauchy vorticity formula.

The generalised Cauchy invariants equation \eqref{eqn:GCI} has a corresponding generalised Cauchy formula
obtained by applying the Hodge dual operator, denoted $\star$, to \eqref{eqn:GCI}, that is
\begin{equation}
\label{eqn:GCIstar}
\frac{1}{(p-1)!}\,\delta_{j_1\ldots j_{p-1}}^{i_1\ldots i_{p-1}} \star(d\alpha_{i_1\ldots i_{p-1}}\,
\wedge dx^{j_1}\wedge \ldots \wedge dx^{j_{p-1}})=\star \gamma_0.
\end{equation}
This generalised Cauchy formula can be written in the covariant,
contravariant or mixed form, by using what is known in differential
geometry as the raising-lowering duality. We have already seen that
the space $ \mathcal{T}_0^1(M)$ is the vector space of
$1$-contravariant vector fields, while $ \mathcal{T}_1^0(M)$, its
dual, is the vector space of linear forms on $\mathcal{T}_0^1(M)$,
i.e. the space of $1$-covariant vector fields (also called covector or
$1$-form fields). We then introduce the index raising operator
$(\cdot)^\sharp: \mathcal{T}_1^0(M)\rightarrow \mathcal{T}_0^1(M)$,
which in flat space transforms the differential of a function into its
gradient vector. In curved spaces $\alpha^\sharp$ denotes the
$1$-contravariant vector field obtained from the $1$-form field
$\alpha$, by using the index raising operation $\alpha^\sharp =
(\alpha_ida^i)^\sharp=(\alpha^\sharp)^i\partial_i=g^{ij}\alpha_j\partial_i$;
that is componentwise $(\alpha^\sharp)^i=g^{ij}\alpha_j$.
Conversely $v^\flat$ is the $1$-form field obtained from the vector
field by applying the index lowering operator $(\cdot)^\flat:
\mathcal{T}_0^1(M)\rightarrow \mathcal{T}_1^0(M)$ according to the
formula $v^\flat =
(v^i\partial_i)^\flat=(v^\flat)_idx^i=g_{ij}v^jdx^i$; 
componentwise, this is  $(v^\flat)_i=g_{ij}v^j$ (see
Appendix~\ref{ssec:CICF:RM}).  Therefore, to obtain
\eqref{eqn:GCIstar} in the desired formulation (covariant,
contravariant or mixed form), it is required to successively apply as
many times as necessary the lowering and raising operators.

\begin{remark}{\rm
    We observe that the generalised Cauchy invariants equation (i.e.  Theorem~\ref{th:GCI})
    requires only a structure of differentiable manifold, without the
    Riemannian structure. In contrast, 
    the generalised Cauchy formula \eqref{eqn:GCIstar} requires such a
    Riemannian structure (see Appendix~\ref{ssec:CICF:RM}),
    because of the use of  Hodge duality (see Appendix~\ref{ssec:CICF:HSECD}).
  }
\end{remark}

\subsubsection{Space-integrated form of generalised Cauchy invariants equations}
\label{ss:SIGCI}
  Since the generalised Cauchy invariant is an exact $p$-form, we can
  apply to it what are known as the Hodge decomposition and/or the Stokes theorem.  First we write the
  generalised Cauchy invariant as an explicit exterior
  differential. We have indeed 
  \[
   \frac{1}{(p-1)!}\,\delta_{j_1\ldots j_{p-1}}^{i_1\ldots i_{p-1}} d\alpha_{i_1\ldots i_{p-1}}\,
   \wedge dx^{j_1}\wedge \ldots \wedge dx^{j_{p-1}}=
     \frac{1}{(p-1)!}\,\delta_{j_1\ldots j_{p-1}}^{i_1\ldots i_{p-1}} d(\alpha_{i_1\ldots i_{p-1}}\,
   \wedge dx^{j_1}\wedge \ldots \wedge dx^{j_{p-1}}).
   \]
   Since  $\gamma_0=d\alpha_0$, using the Hodge decomposition for closed forms (see Appendix~\ref{ssec:CICF:HCHD}), we obtain
   \begin{equation}
    \label{eq:integratedGCIE}
     \frac{1}{(p-1)!}\,\delta_{j_1\ldots j_{p-1}}^{i_1\ldots i_{p-1}} \alpha_{i_1\ldots i_{p-1}}\,
   \wedge dx^{j_1}\wedge \ldots \wedge dx^{j_{p-1}} = \alpha_0 + d\beta + {h}.
   \end{equation}
   Here, if $M$ is a compact manifold without (resp. with) boundary, 
   $\beta$ is an arbitrary $(p-2)$-form 
   (resp. normal $(p-2)$-form with vanishing tangential components;
   see next-to-last paragraph of Appendix~\ref{ssec:CICF:HCHD} and references therein).
   In \eqref{eq:integratedGCIE}, the $(p-1)$-form ${h}$ is harmonic, that is $d{h}=0$ and $d^\star{h}=0$.
   Here, the operator $d^\star:\Lambda^{p}(\Omega)\rightarrow\Lambda^{p-1}(\Omega)$ with $p\geq 0$
   is the exterior coderivative, obtained from the
   exterior derivative, but acting on the Hodge-dual space (for details
   see Appendix~\ref{ssec:CICF:HSECD}). More precisely, if $\gamma \in \Lambda^p(\Omega)$
   then we have the $(p-1)$-form $d^\star\gamma = (-1)^{d(p-1)+1} \star d\, \star \gamma$. 
   Note that the latter looks actually more like an integration than a differentiation. 

   Now, we want to integrate this
   form over suitable domains, called $1$-chains, $2$-chains, \hbox{... .}
   In a flat space, a $1$-chain is just a finite set of 1D contours.
   For a general definition of $p$-dimensional $p$-chains on
   manifolds, see Appendix~\ref{ssec:CICF:IDF}.
   Let  $c$ be a $(p-1)$-chain on the
   manifold $M$. Choosing the $(p-2)$-form $\beta$ with suitable
   values on the boundary $\partial c$ of $c$ to avoid having a
   boundary contribution  (if a boundary is present), we obtain, using  the
   Stokes theorem (see Appendix~\ref{ssec:CICF:IDF}), 
   \[
    \frac{1}{(p-1)!}\,\delta_{j_1\ldots j_{p-1}}^{i_1\ldots i_{p-1}} \int_c  \alpha_{i_1\ldots i_{p-1}}\,
   \wedge dx^{j_1}\wedge \ldots \wedge dx^{j_{p-1}}=  \int_c \alpha_0 +  \int_c {h}.
   \]
   Moreover, if the Betti number $b_{p-1}(M)=0$, then the second term
   on the right-hand side of the previous formula vanishes.
   Considering now a $p$-chain $c$, using the Stokes theorem, we obtain
   \[
    \frac{1}{(p-1)!}\,\delta_{j_1\ldots j_{p-1}}^{i_1\ldots i_{p-1}} \int_{\partial c}  \alpha_{i_1\ldots i_{p-1}}\,
   \wedge dx^{j_1}\wedge \ldots \wedge dx^{j_{p-1}}=  \int_{\partial c }\alpha_0.
   \]

\subsubsection{Generalisation to some non-exact differential forms}
\label{rk:GCI}
From Theorem~\ref{th:GCI}, the following question arises naturally: can we
extend the result of Theorem~\ref{th:GCI} when the $p$-form $\gamma$
is not exact? The answer is yes under some conditions.

We suppose that 
the $p$-form $\gamma$ of Theorem~\ref{th:GCI} can be written as $\gamma={\rm Op\,}\pi$,
where $\pi$ is a $q$-form and the operator ${\rm Op\,}:\Lambda^q(\Omega)\rightarrow \Lambda^p(\Omega)$
is a linear operator which satisfies the following conditions:
\begin{itemize}
\item [$\, $(i)] The commutation relation $[{\rm Op\,},\mathsterling_{v}]=0$ holds.
\item [(ii)] The kernel of the operator ${\rm Op\,}$ is such that
  ${\rm Ker}\,{\rm Op\,}=\{ \mbox{closed }\mbox{$q$-form, i.e. } \kappa\in \Lambda^q(\Omega) \ | \ d\kappa=0 \}$.
\end{itemize}
From assumption $(i)$ the Lie-advection equation \eqref{eqn:LieAdvection} is equivalent to 
${\rm Op\,}(\partial_t \pi + \mathsterling_{v} \pi)=0$. From assumption $(ii)$, this equation
is also equivalent to $\partial_t \pi + \mathsterling_{v} \pi=\kappa$, with $\kappa$ a closed $q$-form.
Taking the exterior derivative to this equation, we obtain the equation $\partial_t d\pi + \mathsterling_{v} d\pi=0$,
to which we can apply  Theorem~\ref{th:GCI} with $p=q+1$, $\gamma=d\pi$ and $\alpha=\pi$.

We give now three examples.
Choosing ${\rm Op\,}\equiv d$, the first one is obvious. The second example is 
${\rm Op\,}\equiv \star d:\Lambda^{d-p-1}(\Omega)\rightarrow \Lambda^p(\Omega)$. 
where the star denotes the Hodge dual operator. Then we have
${\rm Ker}\,\star d= \{ \mbox{exact }\mbox{$q$-form } + \mbox{ harmonic }\mbox{$q$-form}\} 
\subset \{ \mbox{closed }\mbox{$q$-form}\}$, where a harmonic $q$-form $h$ satisfies
$dh=d^\star h=0$, with  $d^\star \equiv  (-1)^{dp+1} \star d\, \star$.
In addition, the operator $\star d$ satisfies the commutation relation $[\star d,\mathsterling_{v}]=0$
if and only if $[\star,\mathsterling_{v}]=0$ since $[d,\mathsterling_{v}]=0$. Generally the
Lie derivative and the Hodge star operator do not commute. When these operators do commute, i.e. when the commutation relation
$[ \mathsterling_{v},\star]=0$ holds we can extend Theorem~\ref{th:GCI} to forms which are the Hodge duals of exact forms.
An example of such commutation relation is when the vector field $v$ generates an isometry (see Appendix~\ref{ssec:CICF:HSECD}).
The third example is when the $p$-form $\gamma$ is co-exact, i.e. $\gamma=d^\star \beta$, with $\beta$ a $(p+1)$-form. Setting
${\rm Op\,}\equiv (-1)^{dp+1} \star d$, we fall in the case of the second example with $\pi=\star \beta \in \Lambda^{d-p-1}(\Omega)$. 
Of course, other interesting examples can be constructed.

\subsubsection{A Lagrangian Biot--Savart problem}
\label{ss:ALBSP}
So far, the Lie-advected $p$-form $\gamma$ was just assumed to be 
expressible as the exterior derivative $d\alpha$ of a $(p-1)$-form $\alpha$. As we
shall now see, the generalised Cauchy invariants equation \eqref{eqn:GCI},
allows an inversion, which can be viewed as solving a Biot--Savart problem in Lagrangian variables:
the corollary hereafter  gives an explicit expression for the  
$(p-1)$-form $\alpha$, in which we use the  notation
\[
\Delta_a = \sum_{i=1}^d {\partial}_{a^i}^{\,2},
\]
for the Laplacian in Lagrangian variables and $\Delta_a^{-1}$ for its formal inverse.

\begin{corollary} (A Lagrangian Biot--Savart problem).
\label{Cor:inversionGCI} 
Under assumptions of Theorem~\ref{th:GCI}, the generalised Cauchy invariants
equation \eqref{eqn:GCI} leads to

\begin{equation}
\label{eqn:inversionGCI}
\alpha_{i_1 \ldots i_{p-1}} = \delta^{k\ell} \Delta_a^{-1} \frac{\partial }{\partial a^k}
\left (
\gamma_{0\ell j_1\ldots j_{p-1}} \frac{\partial a^{j_1}}{\partial x^{i_1}} \ldots \frac{\partial a^{j_{p-1}}}{\partial x^{i_{p-1}}} 
\right), \quad  1\leq i_1 < \ldots < i_{p-1} \leq d.
\end{equation}  
\end{corollary}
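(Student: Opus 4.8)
The plan is to read the generalised Cauchy invariants equation \eqref{eqn:GCI} as a componentwise Poisson equation for $\alpha_{i_1\ldots i_{p-1}}$ written in Lagrangian coordinates, and then invert the flat Laplacian $\Delta_a$. I would start from two ingredients already produced inside the proof of Theorem~\ref{th:GCI}: the pullback identity \eqref{gci_1}, which in components reads $\gamma_{0\,\ell j_1\ldots j_{p-1}}(a)=\frac{\partial x^k}{\partial a^\ell}\frac{\partial x^{l_1}}{\partial a^{j_1}}\cdots\frac{\partial x^{l_{p-1}}}{\partial a^{j_{p-1}}}\,\gamma_{k l_1\ldots l_{p-1}}(x)$, and the exactness $\gamma=d\alpha$ together with \eqref{gci_2}. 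Multiplying the first identity by the inverse Jacobian $\partial a^{j_s}/\partial x^{i_s}$ on the last $p-1$ indices collapses the resulting Kronecker contractions and gives
\[
\gamma_{0\,\ell j_1\ldots j_{p-1}}\,\frac{\partial a^{j_1}}{\partial x^{i_1}}\cdots\frac{\partial a^{j_{p-1}}}{\partial x^{i_{p-1}}}
=\frac{\partial x^k}{\partial a^\ell}\,\gamma_{k\,i_1\ldots i_{p-1}}(x),
\]
i.e. the pullback to the Lagrangian manifold of the $1$-form obtained by saturating $\gamma$ with the coordinate fields $\partial_{i_1},\ldots,\partial_{i_{p-1}}$.

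Next I would insert $\gamma=d\alpha$. Up to the non-gradient pieces of $d\alpha$ discussed below, $\gamma_{k\,i_1\ldots i_{p-1}}(x)=\partial_{x^k}\alpha_{i_1\ldots i_{p-1}}(x)$, and then the chain rule $\frac{\partial x^k}{\partial a^\ell}\,\partial_{x^k}=\partial_{a^\ell}$ (acting on functions of $x=\varphi_t(a)$) turns the right-hand side into $\partial_{a^\ell}\bigl(\alpha_{i_1\ldots i_{p-1}}\circ\varphi_t\bigr)$. Applying the operator $\delta^{k\ell}\,\partial/\partial a^k$ and using $\delta^{k\ell}\,\partial_{a^k}\partial_{a^\ell}=\sum_k\partial_{a^k}^{\,2}=\Delta_a$ yields
\[
\Delta_a\,\alpha_{i_1\ldots i_{p-1}}
=\delta^{k\ell}\,\frac{\partial}{\partial a^k}\!\left(\gamma_{0\,\ell j_1\ldots j_{p-1}}\,\frac{\partial a^{j_1}}{\partial x^{i_1}}\cdots\frac{\partial a^{j_{p-1}}}{\partial x^{i_{p-1}}}\right),
\]
and the formal inversion of $\Delta_a$ is precisely \eqref{eqn:inversionGCI}; the range $1\le i_1<\cdots<i_{p-1}\le d$ just records that only strictly increasing multi-indices are independent.

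The step that needs care — and the main obstacle — is the replacement $\gamma_{k\,i_1\ldots i_{p-1}}=\partial_{x^k}\alpha_{i_1\ldots i_{p-1}}$. In fact $(d\alpha)_{k\,i_1\ldots i_{p-1}}=\partial_{x^k}\alpha_{i_1\ldots i_{p-1}}+\sum_{s=1}^{p-1}(-1)^s\partial_{x^{i_s}}\alpha_{k\,i_1\ldots\widehat{i_s}\ldots i_{p-1}}$, so the extra terms are genuinely present and do not vanish individually. They are harmless, however, because \eqref{eqn:GCI} constrains $\alpha$ only through $d\alpha$: it suffices to exhibit \emph{one} $(p-1)$-form whose exterior derivative is $\gamma$, and \eqref{eqn:inversionGCI} singles out a particular representative (a ``Coulomb gauge in $a$''). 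To make this rigorous I would verify directly that the $(p-1)$-form $\alpha$ defined by \eqref{eqn:inversionGCI} satisfies $d\alpha=\gamma$: computing $(d\alpha)_{k\,i_1\ldots i_{p-1}}$ and re-summing over the free indices, the contributions of the non-gradient terms recombine — with the recursive Kronecker identity \eqref{recursiv_kron} and the closedness $d\gamma=d^2\alpha=0$ — into a closed $(p-1)$-form, which drops out of $d\alpha$, while the gradient part reconstructs $\gamma$ exactly through $\Delta_a^{-1}\Delta_a=\mathrm{Id}$. Equivalently, in invariant terms, $\delta^{k\ell}\partial_{a^k}\bigl(\gamma_{0\,\ell j_1\ldots j_{p-1}}\prod_s\partial a^{j_s}/\partial x^{i_s}\bigr)$ is, up to sign, the flat Lagrangian codifferential of the pullback $\varphi_t^{\ast}\bigl(\iota_{\partial_{i_{p-1}}}\cdots\iota_{\partial_{i_1}}\gamma\bigr)$, and $\Delta_a^{-1}$ applied to it returns the sought potential up to a harmonic/closed ambiguity by the flat-space Helmholtz decomposition. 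The elementary componentwise verification is the route I would actually write out.
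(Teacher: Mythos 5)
Your computational core coincides with the paper's own proof. The paper reads \eqref{eqn:GCI} componentwise, multiplies by $p-1$ inverse Jacobian matrices, uses the skew-symmetry of $\alpha$ together with the generalised Kronecker contraction to arrive at the pointwise relation \eqref{eqn:invGCI_2}, namely $\partial\alpha_{i_1\ldots i_{p-1}}/\partial a^\ell=\gamma_{0\ell j_1\ldots j_{p-1}}\,(\partial a^{j_1}/\partial x^{i_1})\cdots(\partial a^{j_{p-1}}/\partial x^{i_{p-1}})$, and then applies $\delta^{k\ell}\partial_{a^k}$ and formally inverts $\Delta_a$ --- exactly your three moves; your starting point $\varphi_t^\ast\gamma=\gamma_0$ (i.e.\ \eqref{gci_1} contracted with inverse Jacobians) is equivalent to the componentwise form of \eqref{eqn:GCI}, since the latter was derived from it. Moreover, the step you flag as the ``main obstacle'' is precisely the step the paper passes over silently: its componentwise reading of \eqref{eqn:GCI} amounts to retaining only the term $\partial_{x^k}\alpha_{i_1\ldots i_{p-1}}$ of $(d\alpha)_{k\,i_1\ldots i_{p-1}}$, i.e.\ to asserting \eqref{eqn:invGCI_2} without antisymmetrisation. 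You are right that this is not a literal identity for an arbitrary potential: for $p=2$ at $t=0$ it would force $\partial_i v_{0\ell}=0$, so it can at best select a particular gauge of $\alpha$, consistent with the fact that the right-hand side of \eqref{eqn:inversionGCI} is unchanged under $\alpha\to\alpha+d\beta$ while the left-hand side is not.

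The gap in your proposal is that the repair you announce is not carried out. The statement that, for the $\alpha$ defined by \eqref{eqn:inversionGCI}, ``the non-gradient contributions recombine into a closed $(p-1)$-form which drops out of $d\alpha$'' is exactly what would have to be proved, and it is not a one-line consequence of \eqref{recursiv_kron} and $d\gamma=0$: the neglected terms enter the candidate $\alpha$ through $\Delta_a^{-1}\delta^{k\ell}\partial_{a^k}\bigl((\partial x^m/\partial a^\ell)\,G_{m i_1\ldots i_{p-1}}\bigr)$, a quantity mixing Lagrangian derivatives and Eulerian component indices, and its Eulerian exterior derivative is not obviously zero for $t>0$ (at $t=0$ it reduces to the familiar Coulomb-gauge projection $\alpha_0\mapsto\alpha_0-d\Delta_a^{-1}(\mbox{div-type term})$, where everything does work out). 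So, as written, your argument reproduces the corollary at the paper's own formal level --- the paper likewise calls the last step a ``formal inversion'' and does not address the gauge ambiguity --- but the stronger, gauge-aware claim you promise (``verify directly that $d\alpha=\gamma$'') remains an assertion; if you want it, the elementary componentwise verification you defer is the substantive part and must actually be written out.
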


\begin{proof}
The  generalised Cauchy invariants equation \eqref{eqn:GCI} gives componentwise
\[
\frac{1}{(p-1)!} \delta_{j_1\ldots j_{p-1}}^{i_1 \ldots i_{p-1}} \frac{\partial \alpha_{i_1 \ldots i_{p-1}}}{\partial a^\ell}
\frac{\partial x^{j_1}}{\partial a^{l_1}} \ldots \frac{\partial x^{j_{p-1}}}{\partial a^{l_{p-1}}}
= \gamma_{0\ell l_1\ldots l_{p-1}}. 
\]
Multiplying by $p-1$ suitably chosen inverse Jacobian matrices, we obtain
\[
\frac{1}{(p-1)!} \delta_{j_1\ldots j_{p-1}}^{i_1 \ldots i_{p-1}} \frac{\partial \alpha_{i_1 \ldots i_{p-1}}}{\partial a^\ell}
\frac{\partial x^{j_1}}{\partial a^{l_1}}\frac{\partial a^{l_1}}{\partial x^{k_1}} 
\ldots \frac{\partial x^{j_{p-1}}}{\partial a^{l_{p-1}}}\frac{\partial a^{l_{p-1}}}{\partial x^{k_{p-1}}} 
= \gamma_{0\ell l_1\ldots l_{p-1}}
\frac{\partial a^{l_1}}{\partial x^{k_1}} \ldots \frac{\partial a^{l_{p-1}}}{\partial x^{k_{p-1}}},
\]
that is
\begin{eqnarray}
\gamma_{0\ell l_1\ldots l_{p-1}}
\frac{\partial a^{l_1}}{\partial x^{k_1}} \ldots \frac{\partial a^{l_{p-1}}}{\partial x^{k_{p-1}}}
&=& \frac{1}{(p-1)!} \delta_{j_1\ldots j_{p-1}}^{i_1 \ldots i_{p-1}}  \delta_{k_1}^{j_1} \ldots \delta_{k_{p-1}}^{j_{p-1}}
\frac{\partial \alpha_{i_1 \ldots i_{p-1}}}{\partial a^\ell} \nonumber \\
&=& \frac{1}{(p-1)!} \delta_{k_1\ldots k_{p-1}}^{i_1 \ldots i_{p-1}} 
\frac{\partial \alpha_{i_1 \ldots i_{p-1}}}{\partial a^\ell}.
\label{eqn:invGCI_1}
\end{eqnarray}  
Since $\alpha_{i_1 \ldots i_{p-1}}$ is skew-symmetric, we have
\[
\frac{1}{(p-1)!} \delta_{k_1\ldots k_{p-1}}^{i_1 \ldots i_{p-1}} 
 \alpha_{i_1 \ldots i_{p-1}} =\alpha_{k_1 \ldots k_{p-1}},  
\]
and \eqref{eqn:invGCI_1} becomes
\begin{equation}
\label{eqn:invGCI_2}
\frac{\partial \alpha_{i_1 \ldots i_{p-1}}}{\partial a^\ell}=
\gamma_{0\ell j_1\ldots j_{p-1}}
\frac{\partial a^{j_1}}{\partial x^{i_1}} \ldots \frac{\partial a^{j_{p-1}}}{\partial x^{i_{p-1}}}.
\end{equation}  
By application of the differential operator $\delta^{k\ell} (\partial/\partial a^k)$ to \eqref{eqn:invGCI_2}
and summation over index $\ell$,  \eqref{eqn:invGCI_2} becomes
\[
\Delta_a \alpha_{i_1 \ldots i_{p-1}}=
\delta^{k\ell}  \frac{\partial }{\partial a^k}
\left (
\gamma_{0\ell j_1\ldots j_{p-1}} \frac{\partial a^{j_1}}{\partial x^{i_1}} \ldots \frac{\partial a^{j_{p-1}}}{\partial x^{i_{p-1}}} 
\right).
\]
This equation gives \eqref{eqn:inversionGCI} after formal inversion of the Laplacian operator $\Delta_a$,
expressed in Lagrangian variables.
We observe that this inversion is reminiscent of that of the
Biot--Savart law, with the left-hand side  of \eqref{eqn:invGCI_2}
playing roughly the role of the curl of the $(p-1)$-form $\alpha$.
\end{proof}  
    
\subsection{Broad applicability of Theorem~\ref{th:GCI}}
\label{ss:broad}

Our key result, namely Theorem~\ref{th:GCI}, may be viewed as a new fondamental
result in \textit{linear transport theory},  giving an alternative Lagrangian formulation
of Lie advection for a large class of differential forms.${}^{\mbox{\scriptsize{\footnotemark[1]}}}$
\footnotetext[1]{This was pointed out to us by Peter Constantin who made 
us realize that this new result in linear transport theory may be of
independent interest.}
Indeed there is no need to have a selfconsistent coupling between the transporter (vector fields $v$) 
and the transported (differential forms $\gamma$) to obtain generalised Cauchy invariants equations.
For the first time, it is  here shown that Cauchy invariants equations exist for non-selfconsistent 
linear transport.
It must be pointed out that, when the Cauchy
invariants were rediscovered in the 20th century, most of the time it
was by making use of Noether's theorem 
in the case of selfconsistent nonlinear equations
\citep{FV14}. Although
Noether's theorem is usually  not available for linear transport 
equations, our key result shows that such generalised Cauchy invariants still do exist
in linear transport theory. Consequently, our result is applicable to 
a large class of  fluid dynamical equations that rely on Lie advection. Hereafter,
we give some important examples. 
Some more material, dealing specifically
with helicity problems in fluids and MHD, will be
presented in Section~\ref{sec:AHMHD}.

\subsubsection{Induction equation in ideal incompressible MHD}
\label{ex:B}
In incompressible ideal MHD, the magnetic flux conservation law
(induction or Faraday's  equation) can be rewritten as a Lie advection
equation,
provided the magnetic  field is considered as a $2$-form   \citep[see,
  e.g.,][]{Fla63}.  Denoting the magnetic field $2$-form by $B$ and the 
magnetic (vector) potential $1$-form by $A$, we have
\begin{equation}
  \label{B-2form}
  B=dA,
\end{equation}
and the induction equation reads 
\begin{equation}
  \label{B-2form-eq}
  \partial_t B + \mathsterling_v B =0. 
\end{equation}
Indeed \eqref{B-2form-eq} results from the Maxwell-Faraday equation $\partial_t B + dE=0$, the Maxwell-Gauss equation $dB=0$, and
the (ideal) induction equation $E-{\rm i}_vB=0$, where $E$ is the dual $1$-form associated to the electric (vector) field.
Therefore from Theorem~\ref{th:GCI}, we obtain the  following Cauchy invariants equation
\begin{equation}
\label{cauchymhd}
dA_k\wedge dx^k = B_0=dA_0.
\end{equation}
Let us note that this equation and \eqref{B-2form-eq} can be extended to Riemannian manifolds of any dimension
by keeping the same covariant form, i.e. as they stand.

We observe that \eqref{B-2form} and \eqref{B-2form-eq} are known, at
least  for the  3D flat case \citep{Fla63}. 
As to \eqref{cauchymhd}, in the
flat case, it is the well-known law of conservation of magnetic flux, which
is here shown to be a Cauchy-type equation.

\subsubsection{Adiabatic and barotropic ideal compressible fluid}
\label{ss:ACF}

Here and in Section~\ref{ss:BCMHD} we use geometrical tools for writing fluid
equations that will be discussed in more details in Section~\ref{sec:VIFIH}.

An adiabatic ideal compressible fluid, with equation of state $p=p(\rho,\eta)$,
where the scalar $\rho$ and $\eta$ are respectively the density and the entropy, is governed
by the equations
\begin{eqnarray}
\partial_t v^\flat +\mathsterling_v v^\flat  &=& -\frac{dp}{\rho} + \frac{1}{2}d(v,v)_g \label{ACFv}\\
\partial_t \mathfrak{m} + \mathsterling_v \mathfrak{m} &=& 0 \label{ACFm}\\
\partial_t \eta + \mathsterling_v \eta &=& 0. \label{ACFe}
\end{eqnarray}  
Here, $\mathfrak{m}$ denotes the mass $d$-form defined by $\mathfrak{m}:=\rho\mu$. Since by definition
we have ${\rm div}_\mu v:= \mathsterling_v \mu$, \eqref{ACFm} is equivalent to $\partial_t \rho + {\rm div}_\mu (\rho v)=0$.
The Lagrangian formulation of \eqref{ACFm}-\eqref{ACFe} is
\[
\rho_t \circ \varphi_t = {\rho_0}/{J_\mu(\varphi_t)}, \quad \mbox{and} \quad \eta_t \circ \varphi_t = {\eta_0}, 
\]
where $J_\mu(\varphi_t):=\varphi_t^\ast\mu/\mu=\sqrt{g\circ \varphi_t}{\rm det}(\partial \varphi_t/\partial a)$ is the Jacobian of the
Lagrangian flow $\varphi_t$ generated from the vector field $v$, and $\rho_0= \rho_0(a)$ and $\eta_0=
\eta_0(a)$ are the   initial density and
entropy. We now introduce the $1$-form $\gamma$, with zero initial value (i.e. $\gamma_0=0$), which satisfies
the equation
\begin{equation}
  \label{gauge_acf}
  \partial_t \gamma +\mathsterling_v \gamma  =-\frac{dp}{\rho}.
\end{equation}
Using the Lie derivative theorem \eqref{LDT}, integration of \eqref{gauge_acf} yields the $1$-form $\gamma$ such that
\[
\gamma = -\varphi_{\ast t}\int_0^td\tau \varphi_\tau^\ast\Big(\frac{dp}{\rho}\Big).
\]
Defining the modified $1$-form velocity $\tilde{v}^\flat := {v}^\flat -\gamma$, and
the modified $2$-form vorticity $\tilde{\omega}:=d\tilde{v}^\flat$, from  \eqref{ACFv} and \eqref{gauge_acf},
we obtain
\[
\partial_t \tilde{\omega} + \mathsterling_v \tilde{\omega}=0.
\]
We can now apply Theorem~\ref{th:GCI} to this equation. We then obtain for \eqref{ACFv} the following
Lagrangian formulation
\[
d \tilde{v}_k^\flat\wedge dx^k = \omega_0:=dv_0^\flat.
\]
Let us note that the Ertel potential vorticity $3$-form $dv^\flat\wedge d\eta$ is a  Lagrangian invariant since
$(\partial_t + \mathsterling_v) dv^\flat\wedge d\eta=0$, which results from \eqref{ACFv}, \eqref{ACFe}
and the identity $dp\wedge d\rho\wedge d\eta=0$ by virtue of the dependence $p=p(\rho,\eta)$. In three dimension, $d=3$,
we can easily show that the scalar local Ertel potential vorticity $\star (dv^\flat\wedge d\eta)$ satisfies also a
Lie-advection equation; thus it is also a local conserved quantity. Let us
also note that in the barotropic case \citep{KC89},
since $p=p(\rho)$, we obtain $d(dp/\rho)=0$; thus we have $\gamma=0$, $\tilde{v}^\flat={v}^\flat$ and $\tilde{\omega}={\omega}:=dv^\flat$.

\subsubsection{Barotropic ideal compressible MHD}
\label{ss:BCMHD} Let $b$ be the magnetic vector field, $b^\flat$ its dual $1$-form and $B$ its dual $2$-form.
For an example of a detailed derivation of MHD models we refere to \citet{GP04}.
The barotropic ideal compressible MHD, in a coordinate-free form, reads
\begin{eqnarray}
\partial_t \mathfrak{m} + \mathsterling_v \mathfrak{m} &=& 0 \label{BCMHDm}\\
\partial_t v^\flat +\mathsterling_v v^\flat  &=& \frac{\mathsterling_b b^\flat - d(b,b)_g}{\rho} - d\Big(h- \frac{1}{2}d(v,v)_g\Big)
\label{BCMHDv}\\
\partial_t B+\mathsterling_v B&=&0.\label{BCMHDB}
\end{eqnarray}  
Here, the barotropic equation of state $p=p(\rho)$ is used, and the enthalpy $h$ is related to the pressure $p$ via the
relation $dh=dp/\rho$. In \eqref{BCMHDv}, the term $(\mathsterling_b b^\flat - d(b,b)_g)/\rho$ is the dual $1$-form 
of the Lorentz force field. It is obtained from the Amp\`ere law, $d\star B=\mu_0\, \mathfrak{j}$, where
$\mathfrak{j}$ is the current form while $j=(\star\, \mathfrak{j})^\sharp=(\star\, d\star B )^\sharp/\mu_0$ is the current-density vector
field. Indeed, in the three-dimensional case $d=3$, this $1$-form can be expressed as $-{{\rm i}_{j} B}/{\rho}$
which is the dual $1$-form of the vector field $j\times b/\rho$ (Lorentz force) where the current density vector
$j$ is related to the magnetic (vector) field $b$ by the Amp\`ere law
$\mu_0j=\nabla \times b$ (the displacement current being neglected).
In the three-dimensional case $d=3$, let us note that using the relations $B={\rm i}_{b/\rho}\rho \mu$ and
$[\mathsterling_{v},{\rm i}_{b/\rho}] ={\rm i}_{[v,b/\rho]}$ (see Appendix~\ref{ssec:CICF:EDIP}), equation \eqref{BCMHDB} is equivalent to
$\partial_t (b/\rho)+\mathsterling_v (b/\rho)=0$.  We now introduce the $1$-form $\gamma$,
with zero initial value (i.e. $\gamma_0=0$), which satisfies
the equation
\begin{equation}
  \label{gauge_bcmhd}
  \partial_t \gamma +\mathsterling_v \gamma  = \frac{\mathsterling_b b^\flat - d(b,b)_g}{\rho}. 
\end{equation}
Using the Lie derivative theorem \eqref{LDT}, integration of \eqref{gauge_bcmhd} yields the $1$-form $\gamma$ such that
\[
\gamma = \varphi_{\ast t}\int_0^td\tau \varphi_\tau^\ast\Big(\frac{\mathsterling_b b^\flat - d(b,b)_g}{\rho}\Big),
\]
where $\varphi_t$ is the Lagrangian flow generated from the vector field $v$.
Defining the modified $1$-form velocity $\tilde{v}^\flat := {v}^\flat -\gamma$, and
the modified $2$-form vorticity $\tilde{\omega}:=d\tilde{v}^\flat$, from  \eqref{BCMHDv} and \eqref{gauge_bcmhd},
we obtain
\begin{equation}
  \label{vor-2-form_bcmhd}
\partial_t \tilde{\omega} + \mathsterling_v \tilde{\omega}=0.
\end{equation}
Therefore, we can again apply Theorem~\ref{th:GCI}  to \eqref{BCMHDB} and \eqref{vor-2-form_bcmhd}. We then obtain
for the system \eqref{BCMHDv}-\eqref{BCMHDB} the following
Lagrangian formulation
\[
d \tilde{v}_k^\flat\wedge dx^k = \omega_0:=dv_0^\flat, \quad \mbox{and} \quad d {A}_k\wedge dx^k = B_0:=dA_0.
\]
Of course, the Lagrangian formulation of 
the equation of mass conservation \eqref{BCMHDm} is the same as in Sec.~\ref{ss:ACF}.
Let us note that we can extend this formulation to adiabatic ideal compressible MHD with the equation of state
$p=p(\rho,\eta)$ by adding to equations \eqref{BCMHDm}-\eqref{BCMHDB} the entropy equation \eqref{ACFe}.
Let us also note that in fact there are several ways in which the full nonlinear ideal MHD
equations can be recast as Lie-advection problems: for example one can use the dual $1$-forms
of the \citet{Els56} variables \citep{MM87}.

\subsubsection{Extended ideal compressible MHD}
\label{ss:ECMHD}
The extended MHD equations \citep{GP04, DML16, LMM16}, in covariant form, reads
\begin{eqnarray}
\partial_t \mathfrak{m} + \mathsterling_v \mathfrak{m} &=& 0 \label{ECMHDm}\\
\partial_t B_\pm+\mathsterling_{v_\pm} B_\pm&=&0,\label{ECMHDB}
\end{eqnarray}  
where, $B_\pm=dA_\pm$, $A_\pm=A+(d_e^2/\rho) \star d \star B+ \kappa_\pm v^\flat$,
and $v_\pm=v-\kappa_{\mp}\,(\star\, d \star B)^\sharp /\rho$. Here, the constants $\kappa_\pm$ are the solutions of the quadratic equation
$\kappa^2-d_i\kappa -d_e^2=0$, where $d_i$ and $d_e$ serve as the normalized ion and electron skin depths, respectively.
In addition the variables $\mathfrak{m}$ and $v$ denote the total-mass form and the center-of-mass velocity vector, respectively.
As in Section~\ref{ex:B} the magnetic potential $1$-form by $A$ is linked to the magnetic field $2$-form $B$
by $B=dA$. Let us note that here the assumption of a barotropic equation of state was used.
We can directly apply Theorem~\ref{th:GCI}  to \eqref{ECMHDB} for
obtaining the following Cauchy invariants equations
\[
d A_{\pm\, k}\wedge dx_{\pm}^k = B_{\pm\,0}:=dA_{\pm\, 0},
\]
where $x_{\pm\, t}$ are the Lagrangian maps generated by the vector fields $v_\pm$.
Once again, the Lagrangian formulation of
the equation of mass conservation \eqref{ECMHDm} is the same as in Sec.~\ref{ss:ACF}.
When $d_e\rightarrow 0$, we have $\kappa_\pm=d_i$ and we obtain what is called  Hall MHD.
When $d_i\rightarrow 0$, we have $\kappa_\pm=\pm d_e$ and we obtain what is called inertial MHD. Let us
note that when  $d_i\rightarrow 0$ and  $d_e\rightarrow 0$ simultaneously, we
obtain $\kappa_\pm=0$ and thus we do not recover
the full ideal compressible MHD, since both equations \eqref{ECMHDB}
degenerate into only one equation, namely 
\eqref{BCMHDB}.

\subsubsection{Tao's modification of the incompressible Euler equations in Euclidean space}
\label{ss:TaoEuler}
The dynamics of vorticity for the case of the ordinary incompressible Euler
equation will be discussed in detail in  Section~\ref{sec:VIFIH}, but we wish
to mention that recently \citet{Tao16} has proposed an interesting
modification of the incompressible Euler equations in Euclidean spaces that
preserves much of its differential geometric content, but sometimes allows
(proven) blowup, that is loss of regularity in a finite time.
This modfication consists in keeping the Lie-advection equation for the vorticity $2$-form
$\omega$, namely $(\partial_t +\mathsterling_{v})\, \omega=0$, but replacing the Biot--Savart law  $v^\flat=d^\star \Delta_{\rm H}^{-1}\omega$ by
the following selfconsistent coupling  $v^\flat=d^\star \!A\omega$. Here, $A$ is a linear
pseudodifferential operator which is self-adjoint (like $\Delta_{\rm H}^{-1}$) and has
the same degree of regularity as $\Delta_{\rm H}^{-1}$. \citet{Tao16} has shown that there exist some operators $A$
for which the corresponding classical solutions blow up in finite time.
Since the Lie-advection equation for the vorticity $2$-form is preserved in
these models,
by Theorem~\ref{th:GCI}, there is a
corresponding generalised Cauchy invariants equation. Indeed, since $\omega=dv^\flat$
and using the modified velocity $1$-form $u^\flat:= d^\star \!A d v^\flat$, we can
now define
two Lagrangian maps $x_t$ and $y_t$ by
\[
\dot{x}_t:=\frac{d x_t}{dt} = u(t,x_t) \quad \mbox{and} \quad \dot{y}_t:=\frac{d y_t}{dt} = v(t,y_t),
\]
where the vector fields $u$ and $v$ are linked by the relation $u=(d^\star \!A d v^\flat)^\sharp$.
Recalling that in Euclidean spaces covariant and contravariant components are identical,
the corresponding Cauchy invariants equation then reads
\begin{equation}
  \label{CIE_Tao}
d \dot{y}_k \wedge d x^k = \omega_0. 
\end{equation}

\begin{remark}{\rm
(\textit{Well-posedness: linear and nonlinear issues}).
\label{rem:non-vs-self-consistent}
As mentioned at the beginning of Section~\ref{ss:broad}, the coupling
between the $p$-form  $\gamma$  and the vector field $v$,
in the Lie-advection equation \eqref{frozen}, could be either
non-selfconsistent or selfconsistent. In the former case, also called
passive,  $v$ is prescribed at all times and  there is no feedback  of $\gamma$ on $v$. In
the latter case, $v$ is not prescribed (except perhaps at the initial time) and
the  feedback of $\gamma$ on $v$ is  given by at
least one additional equation linking  $v$ to $\gamma$;  an instance is the
full  Euler equation, where the vorticity $2$-form is the exterior
derivative of the velocity $1$-form (cf. Section~\ref{sec:VIFIH}).
    
In the non-selfconsistent case, when the vector field $v$ is Lipschitz
continuous (not necessarily divergence-free or incompressible),
the associated Lagrangian flow exists globally in time \citep{Tay96}.
Therefore,   \eqref{frozen} is well posed and
has global-in-time regular solutions; thus Lie and Cauchy invariants exist
globally in time too \citep{Tay96}.

In the selfconsistent case, well-posedness of the coupled system, i.e.
existence of solutions to the system
constituted of \eqref{frozen} plus the additional equation
linking  $v$ to $\gamma$, depends of course on the specific
selfconsistent coupling considered.

For example, in the case where the vector field $v$ is the velocity field
given by the 3D-Euclidean incompressible Euler equations  and
the $p$-form $\omega$ is the $2$-vorticity form, the selfconsistent
coupling is given by $\omega=dv^\flat$ (in the simplest case this
means that the vorticity vector is the curl of the velocity vector).
Using the Biot--Savart law, this selfconsistent coupling can be rewritten 
as $v^\flat=d^\star \Delta_{\rm H}^{-1}\omega$, where $d^\star$ is the exterior
co-derivative and $\Delta_{\rm H}:=dd^\star+d^\star d$ is the Laplace-de Rham
operator (see Appendix~\ref{ssec:CICF:HCHD}).
The corresponding Cauchy problem is known to be well posed in time when
the initial velocity is in H\"older or Sobolev spaces
with suitable indexes of regularity. This was established in the seminal
work of \citet{Lic25, Lic27}
and \citet{Gun26, Gun34} for the case of the whole Euclidean space and, of
\citet{EM70} for the case of bounded domains.
Therefore \eqref{frozen} has local-in-time regular
solutions, so that Lie and Cauchy invariants exist at least
for short times. 

Although the modified Euler equations of \citet{Tao16} satisfy helicity and energy (or
Hamiltonian) conservation laws and possess a Kelvin circulation theorem,
Tao has shown that there exist some operators $A$ for which the
corresponding classical solutions blow up in finite time.  It does not mean
that we can conjecture a finite-time blow-up for classical solutions of the
original incompressible Euler equations (for $d\geq3$), but rather that a
possible absence of blow-up cannot be proved with the only properties of the
Euler equations that are shared by these modified models. Although
Lie-advection equation for the vorticity $2$-form is preserved in these
models, the Cauchy invariants equation \eqref{CIE_Tao} shows that a
modification of the Biot--Savart law induces a change in the geometry of the
original incompressible Euler equations. Indeed \eqref{CIE_Tao} involves two
families of characteristic curves, whereas the original incompressible Euler
equations deal with only one such family. In other words,
on the set of incompressible vector fields we have
$d^\star \Delta_{\rm  H}^{-1}d={\rm Id}$, whereas $d^\star A d\neq{\rm Id}$.
}\end{remark}

\section{Vorticity and incompressible flow in hydrodynamics}
\label{sec:VIFIH}
In this section we apply our main result, Theorem~\ref{th:GCI}, to the incompressible
Euler equations on a $d$-dimensional Riemannian manifold.
This will extend to Riemannian manifolds of any dimension the notion
of Cauchy invariants,  
first introduced by \citet{Cau27} for the three-dimensional incompressible 
Euler equations in ``flat'' Euclidean spaces.
First, we need to write the Euler equations in a covariant form, i.e. in terms of a $1$-form
$v^\flat$ for the velocity vector field $v$; this is the aim of Sec.~\ref{ssec:CFVE}.
The velocity  $1$-form $v^\flat$ is here called the infinitesimal velocity
circulation, 
because if we were  in a flat space, we would have $v^\flat=\vec{v}\cdot \overrightarrow{dx}$.
Henceforth, ordinary vectors will be surmounted by an arrow when they might otherwise
be mistaken for $p$-forms. Then, the exterior derivative of the
covariant form of the Euler equations gives a Lie-advection
equation of the form \eqref{frozen} for the vorticity $2$-form $\omega$,
here called the covariant vorticity equation.
Henceforth, $\omega$ always denotes the vorticity $2$-form and not the vorticity vector;
the latter being $ \vec{\omega}$. In Sec.~\ref{sec:VarRepre}, applying
Theorem~\ref{th:GCI} to the covariant vorticity equation,
we show that the Cauchy invariants equation can have different representations.
In particular we show that the Cauchy invariants equation is an alternative formulation of
the well-known Lie advection of the vorticity $2$-form.  
From this point of view,  the Cauchy invariant and the 
Cauchy vorticity formula are representations of the same conservation law, 
related by Hodge duality. Finally, we note that
the covariant vorticity equation and the Cauchy invariants equation
on a manifold
have alternative derivations using variational methods in conjunction with the relabelling
symmetry and Noether's theorem.

\subsection{Covariant formulation of the vorticity equation}
\label{ssec:CFVE}
In this section the vorticity will be  considered as a $2$-form
$\omega$. We start with the incompressible Euler equations on a
$d$-dimensional Riemannian manifold $(M,g)$. Written in terms of the velocity
vector field $v$ and of the scalar pressure field $p$, they read
\begin{equation}
\label{euler_eq_contravariantFOR}
\partial_t v^i+ v^k\nabla_kv^i= -g^{ik}\partial_k p\ \ (\mbox{Euler}),  \quad
\nabla_iv^i=0\ \ (\mbox{incompressibility condition})\quad x \in\Omega, \ t\in]0,T]. 
\end{equation}
Here the symbol  $\nabla_k$ denotes the covariant derivative, which
can be seen as a generalisation to curved spaces of the classical
partial derivative $\partial_k$ of Euclidean spaces (for a
more detailed definition, see Appendix~\ref{ssec:CICF:RCCD}).
The geometric interpretation of the incompressible Euler equations is recalled in Appendix~\ref{sec:geodiffeuler},
while their simplest derivation is obtained from a variational
formulation (least action principle), as explained in Appendix~\ref{sec:eulerFLAP}.

The Euler equations and incompressibility condition, written in the
contravariant formulation \eqref{euler_eq_contravariantFOR}, can be
rewritten in the covariant formulation, i.e. in term of $1$-form
fields instead of vector fields.  Let $v^\flat$ be the $1$-form field
obtained from the vector field by the index lowering operator
$(\cdot)^\flat: \mathcal{T}_0^1(\Omega)\rightarrow
\mathcal{T}_1^0(\Omega)$; that is, we set $v^\flat =
(v^i\partial_i)^\flat=(v^\flat)_idx^i=g_{ij}v^jdx^i$. Using  the
preservation of the metric of the Riemann--Levi-Civita connection, namely $\nabla_k g_{ij}=0$,
we easily find
\begin{equation}
\label{euler_eq_covar}
\partial_t v_i+ v^k\nabla_kv_i= -\partial_i p,  \quad x \in\Omega\subset M, \ t\in]0,T], 
\end{equation}
and
\begin{equation}
\label{incomp_covar}
g^{ij}\nabla_iv_j= 0, \quad x \in\Omega, \ t\in]0,T]. 
\end{equation}
In compact form,  \eqref{euler_eq_covar} can be written as
\begin{equation}
\label{euler_eq_covar_compact}
\partial_t v^\flat+ (\nabla_{v}v)^\flat= -dp,  \quad x \in\Omega, \ t\in]0,T]. 
\end{equation}
Now we rewrite \eqref{euler_eq_covar_compact} as
\begin{equation}
\label{euler_eq_oneform}
\partial_t v^\flat + \mathsterling_{v} v^\flat + d\Big(p-\frac12 (v,v)_g \Big)= 0,  
\quad x \in\Omega, \ t\in]0,T], 
\end{equation}   
an equation which
differs from  a Lie advection condition for $v^\flat$ by just an
additional exact differential (which will disappear upon application
of yet another exterior differential). To obtain
\eqref{euler_eq_oneform} 
we use the Cartan formula $\mathsterling_{v}
v^\flat  = d\hskip 0.6pt{\rm i}_{v} v^\flat + {\rm i}_{v} d v^\flat$
for the Lie derivative (Appendix~\ref{ssec:CICF:EDIP}) and a rewrite
of the right-hand side of the Cartan formula, precisely
$d{\rm i}_vv^\flat +{\rm i}_vdv^\flat = (\nabla_v v)^\flat + \frac12 d (v,v)_g$. This is
established in  Appendix~\ref{appendixCF}.
In these equations  ${\rm i}_{v}:\Lambda^p(\Omega)\rightarrow \Lambda^{p-1}(\Omega)$ is the interior (or inner) product
with the vector $v$, which acts as an integration; as to
$(\cdot,\cdot)_g$,  it denotes the Riemannian scalar product for
vector fields in
$\mathcal{T}_0^1(\Omega)$,  defined by 
$(v,w)_g = g_{ij}v^iw^j$ (see Appendix~\ref{ssec:CICF:RM}).

From a fluid-mechanical point of view, specialising to the Euclidean case, it is of interest to rewrite the Euler
equations \eqref{euler_eq_oneform} in standard vector notation as
\begin{equation}
\label{notlamb}
\partial_t \vec{v} + \overrightarrow{\nabla}
|\vec{v}|^2 + \vec{\omega}\times
\vec{v} +\overrightarrow{\nabla} \left(p-\frac{|\vec{v}|^2}{2}\right) =0,
\end{equation}
where $\times$ denotes the vector product and $\overrightarrow{\nabla}$ is the standard
gradient operator in Euclidean coordinate.
This has some similarity to what
is known as Lamb's form of the incompressible Euler equations, in which 
$\vec{\omega}\times \vec{v}$ also appears. It would not
be advisable to simplify \eqref{notlamb} to Lamb's form by combining the two terms involving a gradient of the 
local kinetic energy, because the second and third term on the left-hand side
of \eqref{notlamb} are both needed to obtain a Lie derivative and all the
nice consequences.

Indeed, we can now define the vorticity $2$-form as the exterior derivative of the 
infinitesimal velocity circulation
$1$-form $v^\flat$, that is
\begin{equation}
\label{def_vorticity}
\omega = d  v^\flat.
\end{equation}
Taking the exterior derivative of the covariant formulation \eqref{euler_eq_oneform}
of the Euler equations, and using the commutation relation $[d,\mathsterling_v]=0$, we obtain 
\begin{equation} 
\label{vort-2-form-eq}
\partial_t\omega + \mathsterling_{v} \omega= 0.
\end{equation}
This establishes  that the vorticity $2$-form is Lie advected, a
result essentially known since \citet{Hel58}.
In terms of the $1$-form $v^\flat$, the incompressibility condition $\nabla_i v^i=0$ reads $d^\star v^\flat=0$ (see Appendix~\ref{ssec:CICF:HSECD}).
Using the Hodge theorem (see Appendix~\ref{ssec:CICF:HCHD}), we obtain the
Biot--Savart law $v=(d^\star\Delta_{\rm H}^{-1}\omega)^\sharp$, which selfconsistently expresses
the velocity vector field $v$ in terms of the vorticity $2$-form $\omega$. Indeed, using
the incompressibility condition  $d^\star v^\flat=0$, we have
$v=(d^\star\Delta_{\rm H}^{-1}\omega)^\sharp=(\Delta_{\rm H}^{-1}d^\star dv^\flat)^\sharp=(\Delta_{\rm H}^{-1}(d^\star d + d d^\star)v^\flat)^\sharp=(v^\flat)^\sharp=v$.

Finally from \eqref{vort-2-form-eq}, using the lesser known commutation relation $[\sharp^{d-p}\,\star, \mathsterling_v]=0$ (with $p=2$),
whose proof is given in Appendix~\ref{appendixCR},
we obtain that the vorticity vector is also Lie-advected. Here, by vorticity vector, we understand the $(d-2)$-vector $\vec{\omega}:=(\star \omega)^{\sharp^{d-2}}$
(in other words a $(d-2)$-contravariant tensor). Namely, we have
\[
\partial_t\vec{\omega} + \mathsterling_{v} \vec{\omega}= 0.
\]

\begin{remark}{ \rm
    \begin{itemize}
    \item[1.] An alternative derivation of the covariant vorticity equation
      \eqref{vort-2-form-eq}
      from the Euler equations \eqref{euler_eq_contravariantFOR} is to use
      the  relabelling symmetry
      and Noether's theorem (see  Appendix~\ref{sec:CV2FFNT}). This
      derivation leads to
      \begin{equation}
        \label{ConservationOfTwoVortictyForm}
        \varphi_t^\ast \omega =  \omega_0,
      \end{equation}
      from where Lie-advection of the vorticity $\omega$ follows readily
      (see \eqref{Lie-pullback} and \eqref{frozen}). 
    \item[2.] Let us note that in Appendix~B of \citet{GV16}, the authors give a variational derivation
      of the covariant Euler equations \eqref{euler_eq_oneform}.
    \item[3.] In Appendix~\ref{appendixCR}, the proof of the commutation relation $[\sharp^{d-p}\,\star, \mathsterling_v]=0$
      is done by following an algebraic approach. A dynamical approach based on infinitesimal pullback transport and the
      Lie-derivative theorem  could be used for an alternative proof, along
      the
      lines used in Appendix~\ref{ssec:CICF:HSECD}.
    \end{itemize}
}\end{remark}

\subsection{Cauchy invariants equation and Cauchy formula}
\label{sec:VarRepre}
We are now ready to extend to Riemannian manifolds of any dimension
the  Cauchy invariants equation
and the Cauchy formula.
We begin by observing that  all assumptions
of Theorem~\ref{th:GCI} are now satisfied: the vorticity $2$-form $\omega=dv^\flat$ is
exact and is Lie-advected. Therefrom follows 
Corollary~\ref{th:InvVorticity} for which we also give a direct simplified proof.
\begin{corollary}(Cauchy invariants equations on a Riemannian manifold). 
\label{th:InvVorticity}
Let $\varphi_t$ be the Euler flow. We set $x=\varphi_t$ and $v=\dot{\varphi}_t$, with
$v_0=\dot{\varphi}_0$. Then we have
\begin{equation}
\label{DefEqInvCau}
dv_k\wedge dx^k = \varphi_t^\ast\omega =\omega_0:= dv_0^\flat.
\end{equation}  
\end{corollary}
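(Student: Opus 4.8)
The plan is to derive \eqref{DefEqInvCau} in two ways, matching the statement that both a direct argument and an appeal to Theorem~\ref{th:GCI} are available. For the direct proof, I would start from the fact, established in \eqref{vort-2-form-eq} and summarised in Remark item~1 of the previous subsection, that the vorticity $2$-form is Lie-advected, so by the Lie derivative theorem \eqref{LDT} we have $\varphi_t^\ast\omega = \omega_0$. This already gives the second and third equalities in \eqref{DefEqInvCau} once we recall $\omega_0 = dv_0^\flat$ by definition \eqref{def_vorticity}. The work is therefore entirely in the first equality, i.e.\ in showing that the pullback $\varphi_t^\ast(dv^\flat)$, written out in Lagrangian coordinates, equals $dv_k\wedge dx^k$ where $v_k$ are the (covariant) components of the Eulerian velocity $1$-form and $x^k=\varphi_t^k$.

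The key computation goes as follows. Since pullback commutes with the exterior derivative, $\varphi_t^\ast(dv^\flat) = d(\varphi_t^\ast v^\flat)$. Now $\varphi_t^\ast v^\flat$ is the pullback of a $1$-form, which in local coordinates reads $(\varphi_t^\ast v^\flat)_i(a)\,da^i = v_j(x)\,\frac{\partial x^j}{\partial a^i}\,da^i = v_j(x)\,dx^j$, using the formula for pullback of $1$-forms recorded earlier in Sec.~\ref{sec:CLT} together with $dx^j = \frac{\partial x^j}{\partial a^i}\,da^i$. Applying $d$ and using $d(dx^j)=0$ (since $dx^j = d\varphi_t^j$ is already exact) together with the Leibniz rule for the exterior derivative on the product of the $0$-form $v_j(x)$ and the $1$-form $dx^j$, we get $d(\varphi_t^\ast v^\flat) = dv_j \wedge dx^j$, where $dv_j$ is the exterior derivative (in Lagrangian variables) of the scalar $v_j\circ\varphi_t$. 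This is exactly the left-hand side of \eqref{DefEqInvCau}. I would also note the alternative route: this corollary is simply Theorem~\ref{th:GCI} specialised to $p=2$, $\gamma = \omega$, $\alpha = v^\flat$; plugging these into \eqref{eqn:GCI} and noting that the generalised Kronecker symbol $\delta^{i_1}_{j_1}$ for a single index is just $\delta^{i_1}_{j_1}$, so that $\frac{1}{1!}\delta^{i_1}_{j_1}\,d\alpha_{i_1}\wedge dx^{j_1} = dv_k\wedge dx^k$, reproduces the claim.

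The only genuine subtlety — and the step I would flag as the main obstacle, though it is minor — is bookkeeping about which variables the exterior derivative acts in, and the identification $\varphi_t^\ast\omega=\omega_0$ as being \emph{equivalent} to the Lie-advection of $\omega$ rather than an independent fact. One must be careful that $dv_k$ on the left-hand side means $\partial v_k/\partial a^\ell\, da^\ell$ with $v_k = (v^\flat)_k\circ\varphi_t$ evaluated at the Eulerian point $x=\varphi_t(a)$, i.e.\ it is the \emph{Lagrangian} gradient of the pulled-back velocity component, not the Eulerian one; the chain rule relates the two and it is precisely this mixing of Eulerian and Lagrangian derivatives that gives the Cauchy invariants equation its characteristic form. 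I would present the direct proof in three short lines ($\varphi_t^\ast\omega=\omega_0$ by Lie-advection; $\varphi_t^\ast v^\flat = v_k\,dx^k$ by the pullback formula for $1$-forms; apply $d$ and use $d\circ d=0$ on $x^k$), and then add one sentence remarking that the same identity is the $p=2$ case of Theorem~\ref{th:GCI}.
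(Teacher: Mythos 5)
Your proposal is correct. The identities $\varphi_t^\ast\omega=\omega_0$ (from Lie-advection of $\omega$, equivalently \eqref{ConservationOfTwoVortictyForm}), $\varphi_t^\ast v^\flat = v_k\,dx^k$, and $d\circ d=0$ applied to $x^k=\varphi_t^k$ indeed give $dv_k\wedge dx^k=\varphi_t^\ast(dv^\flat)=\varphi_t^\ast\omega=\omega_0$, and your caveat about $dv_k$ being the Lagrangian differential of $v_k\circ\varphi_t$ is exactly the right point to flag. Your route differs from the paper's in how the first equality is obtained: the paper's direct proof expands $\varphi_t^\ast\omega$ componentwise, inserts $\omega_{lk}=\partial_l v_k-\partial_k v_l$, exploits the antisymmetry of $da^i\wedge da^j$ to merge the two terms, and then uses the chain rule $\frac{\partial v_k}{\partial x^l}\frac{\partial x^l}{\partial a^i}=\frac{\partial v_k}{\partial a^i}$; you instead invoke the naturality of the exterior derivative under pullback, $\varphi_t^\ast(d v^\flat)=d(\varphi_t^\ast v^\flat)$, so that only the elementary pullback formula for $1$-forms and the Leibniz rule are needed. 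Your version is shorter and makes transparent why exactness is the operative hypothesis (it is essentially the $p=2$ case of the mechanism behind Theorem~\ref{th:GCI}, as you note), whereas the paper's component calculation is self-contained at the level of index manipulation and displays explicitly the Eulerian--Lagrangian chain-rule step that characterises Cauchy's original derivation; both rest on the same input $\varphi_t^\ast\omega=\omega_0$.
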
  

\begin{proof}
We begin by showing that $\varphi_t^\ast\omega=dv_k\wedge dx^k$. 
Indeed,  we have
\begin{eqnarray*}
\varphi_t^\ast\omega&=&(\varphi_t^\ast\omega)_{ij}da^i\wedge da^j= 
\sum_{i<j}\frac{\partial x^l}{\partial a^i}\frac{\partial x^k}{\partial a^j}\omega_{lk}(x_t(a))da^i\wedge da^j\\
&=&\sum_{i<j}\frac{\partial x^l}{\partial a^i}\frac{\partial x^k}{\partial a^j}\left(
\frac{\partial v_k}{\partial x^l}-\frac{\partial v_l}{\partial x^k}\right)da^i\wedge da^j=
\sum_{i<j}\frac{\partial v_k}{\partial x^l}\left(\frac{\partial x^l}{\partial a^i}\frac{\partial x^k}{\partial a^j}
-\frac{\partial x^l}{\partial a^j}\frac{\partial x^k}{\partial a^i}\right)da^i\wedge da^j\\
&=&  \frac{\partial v_k}{\partial x^l}\frac{\partial x^l}{\partial a^i}\frac{\partial x^k}{\partial a^j}da^i\wedge da^j  
  = \frac{\partial v_k}{\partial a^i}\frac{\partial x^k}{\partial a^j}da^i\wedge da^j\\
&=&    dv_k\wedge dx^k. 
\end{eqnarray*}  
Corollary~\ref{th:InvVorticity} follows from the covariant
vorticity equation \eqref{vort-2-form-eq} or the conservation of
the vorticity $2$-form, i.e. $\varphi_t^\ast\omega=\omega_0:= dv_0^\flat$.
\end{proof}

\begin{remark}{\rm
\begin{itemize}
\item[1.]  (\textit{Contravariant formulation}).
  In terms of components, the Cauchy invariants equation \eqref{DefEqInvCau} reads
  \[
  \partial_{k}(\dot{x}^ig_{ij})\partial_l x^j -  \partial_{l}(\dot{x}^ig_{ij})\partial_k x^j= \omega_{0kl}, \quad  1\leq k < l \leq d.
  \]
  The contravariant form of this equation reads
  \[
  \varepsilon^{kli_1\ldots i_{d-2}} \partial_{k}(\dot{x}^ig_{ij})\partial_l x^j=\sqrt{g} \omega_0^{i_1\ldots i_{d-2}}, \quad  1\leq i_1 < \ldots < i_{d-2} \leq d,
  \] 
  where the $(d-2)$-vector $\vec{\omega}_0:=(\star \omega_0)^{\sharp^{d-2}}$ is defined componentwise by
  \[
  \omega_0^{i_1\ldots i_{d-2}} = \frac{1}{2\sqrt{g}}\varepsilon^{kli_1\ldots i_{d-2}}\omega_{0kl}.
  \] 
\item[2.] (\textit{Integrated (circulation) form of the Cauchy invariants equation}).
  Since the Cauchy invariant may be rewritten as an exact $2$-form, i.e.
  \[
  dv_k\wedge dx^k= d(v_k dx^k),
  \]
  using  Hodge's decomposition, we obtain
  \begin{equation}
    \label{eq:integratedCIE}
    v_kdx^k= v_0^\flat + d\psi + h,
  \end{equation}
  where $\psi$ is an arbitrary $0$-form (scalar function) and $h$ is a harmonic $1$-form.
  Let $c$ be a $1$-chain on the  manifold $M$. 
  Choosing the function $\psi$ with suitable value on the boundary
  $\partial c$ (if it exists),
  from the Stokes theorem we obtain 
  \[
  \int_c v_kdx^k=  \int_c v_0^\flat + \int_c h.
  \]
  Moreover if the Betti number $b_{1}(M)=0$,
  then the second term
  on the right-hand side of the previous formula vanishes. Some examples for which $b_1(M)=0$ are given in
  Appendix~\ref{ssec:CICF:HCHD}. Considering now a $2$-chain $c$, using the Stokes theorem, we obtain
  \[
  \int_{\partial c} v_kdx^k=  \int_{\partial c} v_0^\flat.
  \]
  This is the famous theorem of conservation of circulation, frequently
  ascribed to \citet{Kel69} but actually discovered by 
  \citet[][see also \citet{FV14}]{Han61}, using essentially the argument given above.
\item[3.](\textit{Variational derivation of the Cauchy invariants equation}). 
  The Cauchy invariants equation \eqref{DefEqInvCau} on a Riemannian
  manifold has a
  variational derivation, using the  relabelling symmetry and
  Noether's theorem without appealing to Theorem~\ref{th:GCI}
  (see Appendix~\ref{sec:Rsym2form}).
\end{itemize}
}\end{remark}

We turn now to a  corollary that clarifies the relationship between the Cauchy
invariants equation and the Cauchy vorticity
formula, which are actually Hodge dual of
each other.
We refer the reader to Appendix~\ref{ssec:CICF:HSECD} for detailed definition of the 
the Hodge duality operator $\star : \Lambda^p(\Omega)\rightarrow  \Lambda^{d-p}(\Omega)$,
which implements the already mentioned Hodge duality. Indeed, applying the
Hodge dual operator to \eqref{DefEqInvCau},
we obtain the following

\begin{corollary}(Cauchy vorticity formula on a Riemannian manifold).
\label{cor:CICF}
Under the same assumptions as in Corollary \ref{th:InvVorticity}, we
have the Cauchy vorticity formula, written in general as 
\begin{equation}
\label{cauchyformula}
\star(dv_k\wedge dx^k)=\star \varphi_t^\ast\omega = \star\omega_0,
\end{equation}
and, in the case of a  three-dimensional curved space, as 
\begin{equation}
\label{cauchyformula_d3}
{\omega^i} =  \frac{\partial x^i}{\partial a^j}{\omega_0^j}, \quad i=1,2,3, 
\end{equation}
where the vorticity vector is defined componentwise by
\begin{equation}
\label{def:vorticity_RM}
\omega^i=\frac{1}{2\sqrt{g}} \varepsilon^{ijk}\omega_{jk}, \quad i=1,2,3.
\end{equation}
\end{corollary}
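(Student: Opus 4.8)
The plan is to dispatch the general identity \eqref{cauchyformula} in one line and then carry out a short index computation for the three-dimensional case \eqref{cauchyformula_d3}. For \eqref{cauchyformula}, Corollary~\ref{th:InvVorticity} already furnishes the equality of $2$-forms $dv_k\wedge dx^k=\varphi_t^\ast\omega=\omega_0$ on $\Omega$; since the Hodge star $\star:\Lambda^2(\Omega)\to\Lambda^{d-2}(\Omega)$ is a fibrewise linear isomorphism (Appendix~\ref{ssec:CICF:HSECD}), applying it to each member of this chain of equalities yields \eqref{cauchyformula} verbatim, with no further work.

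For \eqref{cauchyformula_d3} I would imitate the ``direct simplified proof'' style of Corollary~\ref{th:InvVorticity}. I start from the coordinate expression obtained there, $(\varphi_t^\ast\omega)_{lm}(a)=\tfrac{\partial x^p}{\partial a^l}\tfrac{\partial x^q}{\partial a^m}\,\omega_{pq}(x)$, and use $\varphi_t^\ast\omega=\omega_0$ to write $\omega_{0lm}(a)=\tfrac{\partial x^p}{\partial a^l}\tfrac{\partial x^q}{\partial a^m}\,\omega_{pq}(x)$. Contracting with $\tfrac{1}{2\sqrt{g(a)}}\varepsilon^{jlm}$ builds $\omega_0^j(a)$ on the left by the definition \eqref{def:vorticity_RM}; on the right I invoke the $3\times3$ cofactor identity $\varepsilon^{jlm}\tfrac{\partial x^p}{\partial a^l}\tfrac{\partial x^q}{\partial a^m}=\det\!\big(\tfrac{\partial x}{\partial a}\big)\,\varepsilon^{rpq}\,\tfrac{\partial a^j}{\partial x^r}$ (equivalently, contracting two columns of a $3\times3$ matrix with $\varepsilon$ produces $\det$ times a row of the inverse). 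This leaves $\omega_0^j(a)=\tfrac{\det(\partial x/\partial a)}{2\sqrt{g(a)}}\,\varepsilon^{rpq}\,\tfrac{\partial a^j}{\partial x^r}\,\omega_{pq}(x)$. Finally I apply incompressibility in the form $\varphi_t^\ast\mu=\mu$, i.e. $\sqrt{g(\varphi_t(a))}\,\det(\partial\varphi_t/\partial a)=\sqrt{g(a)}$, so that $\det(\partial x/\partial a)/\sqrt{g(a)}=1/\sqrt{g(x)}$; the metric factors then collapse, and recognizing $\tfrac{1}{2\sqrt{g(x)}}\varepsilon^{rpq}\omega_{pq}(x)=\omega^r(x)$ once more by \eqref{def:vorticity_RM}, I obtain $\omega_0^j(a)=\tfrac{\partial a^j}{\partial x^r}\,\omega^r(x)$; multiplying by $\tfrac{\partial x^i}{\partial a^j}$ and summing over $j$ inverts the Jacobian and gives \eqref{cauchyformula_d3}. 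An even shorter alternative is to observe that \eqref{cauchyformula_d3} is nothing but the coordinate form of the pullback identity $\varphi_t^\ast\vec\omega=\vec\omega_0$, which follows from the Lie-advection invariance of the vorticity vector $\vec\omega=(\star\omega)^\sharp$ established just above from $[\sharp^{d-2}\star,\mathsterling_v]=0$ together with the Lie derivative theorem \eqref{LDT}.

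I expect the only genuinely delicate point to be bookkeeping: one must track carefully whether $\sqrt{g}$ is evaluated at the Lagrangian point $a$ or at the Eulerian point $x=\varphi_t(a)$, and it is precisely the mismatch between the two that forces the use of the incompressibility constraint --- without it the metric factors would not cancel and \eqref{cauchyformula_d3} would pick up a spurious Jacobian determinant. The remaining ingredient, getting the index placement right in the $3\times3$ Levi-Civita/cofactor identity, is purely mechanical. Neither is conceptually hard, so the whole corollary amounts to a short verification once these are pinned down.
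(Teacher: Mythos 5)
Your proposal is correct and takes essentially the same route as the paper: the general identity is just the Hodge star applied to Corollary~\ref{th:InvVorticity}, and your three-dimensional computation --- contracting the componentwise pullback identity with $\tfrac{1}{2\sqrt{\mathrm{g}_0}}\varepsilon^{jlm}$, using the $3\times 3$ Levi-Civita/cofactor identity, and invoking volume preservation through $\det(\partial x/\partial a)=\sqrt{\mathrm{g}_0/\mathrm{g}}$ --- is the same index argument the paper performs (merely streamlined by skipping the explicit lowering/raising steps), with the same careful bookkeeping of whether $\sqrt{\mathrm{g}}$ is evaluated at $a$ or at $x$. Your shorter alternative via $\varphi_t^\ast\vec{\omega}=\vec{\omega}_0$ is also sound and coincides with what the paper already establishes in Section~\ref{ssec:CFVE} from the commutation relation $[\sharp^{d-p}\,\star,\mathsterling_v]=0$.
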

\begin{proof}Eq.~\eqref{cauchyformula} is of course an immediate consequence of
\eqref{DefEqInvCau}. To derive \eqref{cauchyformula_d3} in the case
$d=3$, we make  use again of the index raising operator $(\cdot)^\sharp: \mathcal{T}_1^0(M)\rightarrow \mathcal{T}_0^1(M)$.
In the three-dimensional curved case, \eqref{cauchyformula} is an equality between $1$-forms. Applying
the raising operator to \eqref{cauchyformula},  we obtain an equality between ($1$-contravariant) vectors, given by
\begin{equation}
\label{cauchyformula_d3_2}
[\star(dv_k\wedge dx^k)]^\sharp=(\star \varphi_t^\ast\omega)^\sharp = (\star\omega_0)^\sharp.
\end{equation}
Now, we expand \eqref{cauchyformula_d3_2} and show that it is  equivalent to the Cauchy formula \eqref{cauchyformula_d3}.
We set the notation $\mathrm{g}_0=\mathrm{g}(a)$ and  $\mathrm{g}=\mathrm{g}(x)$.
First, in terms of components of a $1$-form, and using the inversion formula
\[
\mathrm{g}_0^{-1}\varepsilon^{ijk} g_{0kn}=\varepsilon_{lmn} g_0^{il} g_0^{jm},
\]
we have
\begin{equation*}
\widetilde{\omega}_{0i_1}:= (\star \omega_0)_{i_1} = \frac12\sqrt{\mathrm{g}_0} \varepsilon_{i_1j_1j_2}\omega_0^{j_1j_2}=
\frac12\sqrt{\mathrm{g}_0} \varepsilon_{i_1j_1j_2}g_0^{lj_1}g_0^{mj_2}\omega_{0lm} 
=\frac12 \mathrm{g}_0^{-1/2}\varepsilon^{plm}g_{0i_1p}\omega_{0lm}.
\end{equation*}
In terms of components of a vector, we then obtain
\begin{equation*}
(\widetilde{\omega}_0^\sharp)^s=g_0^{sq} \widetilde{\omega}_{0q}=\frac12
\mathrm{g}_0^{-1/2}g_0^{sq}\varepsilon^{plm}g_{0qp}\omega_{0lm}
=\frac12 \mathrm{g}_0^{-1/2}\delta_p^s\varepsilon^{plm}\omega_{0lm}=\frac12 \mathrm{g}_0^{-1/2}\varepsilon^{slm}\omega_{0lm}.
\end{equation*}
From the definition of the vorticity vector \eqref{def:vorticity_RM}, we then have
\begin{equation*}
\label{eqnOS}
{\omega}_0^s=([\star \omega_0]^\sharp)^s=\frac{1}{2\sqrt{\mathrm{g}_0}}\varepsilon^{slm}\omega_{0lm}.
\end{equation*}
Second, in terms of components  of a $1$-form, we have
\begin{equation*}
\widetilde{\omega}_{i_1}:=(\star\varphi_t^\ast\omega)_{i_1}=
\frac12\sqrt{\mathrm{g}_0}
\varepsilon_{i_1j_1j_2}(\varphi_t^\ast\omega)^{j_1j_2}=\frac12
\sqrt{\mathrm{g}_0}
\varepsilon_{i_1j_1j_2}g_0^{lj_1}g_0^{mj_2}(\varphi_t^\ast\omega)_{lm}
=\frac12\mathrm{g}_0^{-1/2}\varepsilon^{plm}g_{0i_1p}(\varphi_t^\ast\omega)_{lm}.
\end{equation*}
In terms of vector components, and using ${\rm det}(\partial x/\partial a)=\sqrt{\mathrm{g}_0/\mathrm{g}}$, we then obtain
\begin{eqnarray*}
\widetilde{\omega}^s&=&(\widetilde{\omega}^\sharp)^s=g_0^{sq} \widetilde{\omega}_{q}=
\frac12\mathrm{g}_0^{-1/2}\varepsilon^{plm}g_0^{sq}g_{0qp}(\varphi_t^\ast\omega)_{lm}
=\frac12\mathrm{g}_0^{-1/2}\delta_p^s\varepsilon^{plm} (\varphi_t^\ast\omega)_{lm}
=\frac12\mathrm{g}_0^{-1/2}\varepsilon^{slm}(\varphi_t^\ast\omega)_{lm}\\
&=&\frac12\mathrm{g}_0^{-1/2}\varepsilon^{slm}\frac{\partial x^i}{\partial a^l}\frac{\partial x^j}{\partial a^m}\omega_{ij}
= \frac12\mathrm{g}_0^{-1/2}{\rm det}\left(\frac{\partial x}{\partial a} \right)\varepsilon^{kij}
\frac{\partial a^s}{\partial x^k}\omega_{ij}
=\frac{\partial a^s}{\partial x^k}\frac{1}{2\sqrt{\mathrm{g}}}\varepsilon^{kij}\omega_{ij}
\\
&=&
\frac{\partial a^s}{\partial x^k}\omega^k,
\end{eqnarray*}
where we have used the definition of the vorticity vector \eqref{def:vorticity_RM}.
Therefore, we have 
\[
\frac{\partial a^s}{\partial x^k}\omega^k=\omega_0^s,
\]
which gives \eqref{cauchyformula_d3} after inversion.
The latter is the vector form of the  Cauchy formula for a
three-dimensional Riemannian manifold $(M,g)$.
\end{proof}

In dimensions $d>3$ the Cauchy vorticity formula is no more an equality in terms of
$1$-forms (or vectors by the lowering-raising duality) but an equality
in terms of $(d-2)$-forms (or $(d-2)$-contravariant tensors  by the lowering-raising duality).
Thus for $d>3$, there still exists a Cauchy-type formula for the
vorticity, but given in general
by \eqref{cauchyformula}.

Specializing further, we then consider the flat 3D case and obtain the relations actually written by
\citet{Cau27} in modern vector notation (Cauchy, of course, wrote them
component by component):

\begin{corollary}(The flat Euclidean case: Cauchy (1815)). Let
  $M=\R^3$. Then the Cauchy invariants equation reads
\begin{equation*}
\label{CI3DEFS}
\sum_k\overrightarrow{\nabla} \dot{x}^k\times \overrightarrow{\nabla} x^k=\vec{\omega}_0,
\end{equation*}
while the Cauchy vorticity formula reads
\begin{equation*}
\label{CF3DEFS}
\vec{\omega}= Dx \,\vec{\omega}_0.
\end{equation*}
\end{corollary}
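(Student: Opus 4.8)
The plan is to specialize Corollaries~\ref{th:InvVorticity} and~\ref{cor:CICF} to the case $M=\R^3$ equipped with the Euclidean metric $g_{ij}=\delta_{ij}$, for which $\sqrt{\mathrm{g}}=1$, the covariant and contravariant components of a vector coincide (so $v_k=\dot{x}^k$), and the vorticity vector defined in \eqref{def:vorticity_RM} reduces to the ordinary curl $\overrightarrow{\nabla}\times\vec v$. The only nonroutine ingredient is the translation of the $2$-form identity of Corollary~\ref{th:InvVorticity} into standard vector calculus by means of the Hodge star, together with the observation that in a flat space $1$-forms, vectors, and the Hodge duals of $2$-forms may all be identified with one another.

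First I would start from \eqref{DefEqInvCau}, namely $dv_k\wedge dx^k=\omega_0$, and apply the Hodge dual operator $\star:\Lambda^2(\R^3)\to\Lambda^1(\R^3)$ to both sides, exactly as in \eqref{cauchyformula}. On the right, $\star\omega_0$ becomes $\vec{\omega}_0$ by the definition \eqref{def:vorticity_RM} with $\mathrm{g}=1$ (after the harmless raising needed to view the result as a vector). On the left, I would use the elementary flat-space identity $\star(d\alpha\wedge d\beta)=\overrightarrow{\nabla}\alpha\times\overrightarrow{\nabla}\beta$, valid for any pair of scalar functions $\alpha,\beta$; this follows by writing $d\alpha\wedge d\beta=\sum_{i<j}(\partial_i\alpha\,\partial_j\beta-\partial_j\alpha\,\partial_i\beta)\,dx^i\wedge dx^j$ and recalling $\star(dx^i\wedge dx^j)=\varepsilon_{ijk}\,dx^k$, so that the components of $\star(d\alpha\wedge d\beta)$ are $\varepsilon_{ijk}\partial_i\alpha\,\partial_j\beta=(\overrightarrow{\nabla}\alpha\times\overrightarrow{\nabla}\beta)_k$. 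Applying this termwise with $\alpha=\dot{x}^k$, $\beta=x^k$ and summing over $k$ gives $\sum_k\overrightarrow{\nabla}\dot{x}^k\times\overrightarrow{\nabla}x^k=\vec{\omega}_0$, which is Cauchy's original vector form of the Cauchy invariants equation.

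For the Cauchy vorticity formula I would simply invoke \eqref{cauchyformula_d3}, ${\omega^i}=\frac{\partial x^i}{\partial a^j}{\omega_0^j}$ for $i=1,2,3$, which holds on any three-dimensional Riemannian manifold by Corollary~\ref{cor:CICF}, and rewrite it in matrix--vector notation: the coefficient matrix $\big(\partial x^i/\partial a^j\big)$ is precisely the Jacobian $Dx$ of the Lagrangian map, so $\vec{\omega}=Dx\,\vec{\omega}_0$. Alternatively, one can rederive \eqref{cauchyformula_d3} in the flat case directly, exactly as in the proof of Corollary~\ref{cor:CICF} but with $\mathrm{g}\equiv1$, which trivializes the raising/lowering operations and the determinant identity ${\rm det}(\partial x/\partial a)=\sqrt{\mathrm{g}_0/\mathrm{g}}$.

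The computation is entirely routine; the only place demanding a little care is the bookkeeping of the Hodge star and of the raising/lowering operators in the first step, to confirm that all $\varepsilon$-signs combine so that the left-hand side is exactly the sum of cross products of gradients, with no spurious factor of $\tfrac12$ or sign. Since everything is a direct consequence of results already established, there is no genuine obstacle here: this corollary is recorded mainly to make explicit contact with the component-by-component formulas written down by \citet{Cau27}.
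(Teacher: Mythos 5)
Your proposal is correct. The first half is essentially the paper's own argument: the paper likewise specialises the Hodge-dual (raised) form of \eqref{DefEqInvCau} to $g_{ij}=\delta_{ij}$, obtaining $[\star(dv_k\wedge dx^k)]^\sharp=\sum_k\overrightarrow{\nabla}\dot{x}^k\times\overrightarrow{\nabla}x^k$ and $[\star\omega_0]^\sharp=\vec{\omega}_0$, which is exactly your computation with $\star(dx^i\wedge dx^j)=\varepsilon_{ijk}\,dx^k$. For the vorticity formula your route differs slightly from the paper's: you simply quote \eqref{cauchyformula_d3} from Corollary~\ref{cor:CICF} (valid on any three-dimensional Riemannian manifold) and read off the Jacobian matrix, which is perfectly legitimate and shorter, since that corollary is already established; the paper instead gives a self-contained flat-space derivation, multiplying the vector Cauchy invariants equation by $Dx$ and using the triple-product identity $\sum_k\bigl(Dx[\overrightarrow{\nabla}\dot{x}^k\times\overrightarrow{\nabla}x^k]\bigr)^j=\sum_k\overrightarrow{\nabla}x^j\cdot\bigl(\overrightarrow{\nabla}\dot{x}^k\times\overrightarrow{\nabla}x^k\bigr)=\omega^j$. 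The paper's variant has the pedagogical merit of exhibiting directly, in elementary vector calculus, that Cauchy's invariants equation and Cauchy's vorticity formula are equivalent in flat space, whereas yours emphasises that the flat statement is a pure specialisation of the general Riemannian result; both are complete proofs.
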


\begin{proof}
For the three-dimensional Euclidean flat space ($M=\R^3$), we have $g_{ij}=\delta_{ij}$, so that
first we obtain
\[
[\star(dv_k\wedge dx^k)]^\sharp = \sum_k\overrightarrow{\nabla} v^k\times
\overrightarrow{\nabla} x^k= \sum_k\overrightarrow{\nabla} \dot{x}^k\times \overrightarrow{\nabla} x^k,
\]
and second, we obtain
$
[\star \omega_0]^\sharp=\vec{\omega}_0.
$
Therefore we obtain the classical vector form of the Cauchy invariants
found by \citet{Cau27}:
\[
\sum_k\overrightarrow{\nabla} \dot{x}^k\times \overrightarrow{\nabla} x^k=\vec{\omega}_0.
\]
Multiplying the latter by the Jacobian matrix $Dx$, and using the relation
\[
\sum_k \left(Dx\left[\overrightarrow{\nabla} \dot{x}^k\times \overrightarrow{\nabla} x^k\right]\right)^j
=  \sum_k  \overrightarrow{\nabla} {x}^j\cdot\left(\overrightarrow{\nabla}
\dot{x}^k\times \overrightarrow{\nabla} x^k\right)
= \omega^j,
\]
we obtain
\[
\vec{\omega}= Dx\, \vec{\omega}_0,
\]
which is the classical vector form of the Cauchy vorticity formula.
\end{proof}

\section{Local helicities in hydrodynamics and MHD}
\label{sec:AHMHD}
In this section we show that there are interesting instances of
applications of Theorem~\ref{th:GCI} to $p$-forms having $p > 2$. In
particular there are various local helicities. We shall not, here, discuss
global (space-integrated) helicity \citep{Mor61,Mof69}. By ``local'', we mean
without spatial integration. One well-known instance is the magnetic
helicity in ideal MHD flow, for which it was shown by \citet{Els56} that it is a
material invariant.  Actually, all 3D known global helicities (kinetic
helicity in hydrodynamics, magnetic and cross helicities in MHD) have local
counterparts, which are Lie-advection-invariant $3$-forms along fluid particle
trajectories (in fact, Hodge duals of material-invariant pseudo-scalars).

In what follows, we shall make repeated use of the standard
result that the exterior product of a $p$-form $\omega$ and of a
$q$-form $\gamma$,
both of which are Lie advected, is also Lie
advected. Indeed, we have
 
\begin{equation*}
\label{Lie_go}
\partial_t \gamma + \mathsterling_v \gamma =0, \qquad \partial_t \omega + \mathsterling_v \omega =0.
\end{equation*}
Then, using  the following identity (see Appendix~\ref{ssec:CICF:EADF})
\[
\mathsterling_v(\gamma\wedge\omega)=\mathsterling_v\gamma\wedge\omega+ \gamma\wedge\mathsterling_v\omega,
\]
we obtain
\[
\partial_t(\gamma\wedge\omega)=\partial_t\gamma\wedge\omega +\gamma\wedge\partial_t\omega
=-\mathsterling_v\gamma\wedge\omega - \gamma\wedge \mathsterling_v\omega =-\mathsterling_v(\gamma\wedge\omega),
\]
which establishes the Lie advection of $\gamma\wedge\omega$.

\subsection{Local helicity in ideal hydrodynamics}
\label{ex:v.omega}
Here we assume that $\Omega$ is of dimension three ($d=3$).  
Let us recall the covariant Euler equations \eqref{euler_eq_oneform},  written in terms of
the velocity circulation $1$-form $v^\flat$:
\begin{equation}
  \label{euler_eq_oneform_2}
  \partial_t v^\flat + \mathsterling_{v} v^\flat = d\kappa.
\end{equation}
Here the $0$-form $\kappa$, is given by
\[
\kappa:=\frac12 (v,v)_g - p, \quad \mbox{or}  \quad \kappa:=\frac12 (v,v)_g - h, \ \ \mbox{with} \ \ dh=dp/\rho,
\]
in the incompressible case and the barotropic compressible case, respectively.
Let us introduce the $0$-form $\ell$ defined by the following equation
\begin{equation}
\label{eqn_LieAdv_Lagrangian}
\partial_t \ell + \mathsterling_{v} \ell =\kappa.  
\end{equation}  
Equation \eqref{eqn_LieAdv_Lagrangian} can be integrated along the flow $\varphi_t$
generated by the velocity vector field $v$, since \eqref{eqn_LieAdv_Lagrangian} is equivalent to
\begin{equation}
  \label{eqn_LieAdv_Lagrangian_2}
  \frac{d}{dt} \ell \circ \varphi_t =\kappa \circ \varphi_t. 
\end{equation}
Integrating \eqref{eqn_LieAdv_Lagrangian} in time, we obtain
\[
\ell(t,x) = \ell(0,a) + \int_0^t \kappa \circ \varphi_\tau\, d\tau,
\]
with the initial condition $\ell(0,a)=\ell_0(a)$. The function $\ell$ appears 
for the first time in the work of \citet{Web68} and might be called the Weber 
function.
Let us introduce, $u$, the modified velocity circulation  $1$-form defined by
\begin{equation}
  \label{def:u_1-form}
  u = v^\flat - d\ell. 
\end{equation}  
From the definition \eqref{def:u_1-form},
using \eqref{euler_eq_oneform_2}-\eqref{eqn_LieAdv_Lagrangian}, the $1$-form
$u$ satisfies
\begin{equation}
  \label{u_eq_oneform}
  \partial_t u + \mathsterling_{v} u = 0,
\end{equation}
and is thus Lie-advected. The $1$-form $u$ appears for the first time in
\citet{Cle59}, where it  takes the form $u = m d\psi$. Here, $m$ and $\psi$
are two material invariants (Lie-advected $0$-forms), now called the Clebsch 
variables; $u$ might thus be called the Clebsch $1$-form and the associated
vector the Clebsch velocity.
Of course,  the vorticity $2$-form
$
\omega=du=dv^\flat,
$
still satisfies the Lie advection equation
\begin{equation}
  \label{omega_eq_2form}
  \partial_t \omega + \mathsterling_{v} \omega = 0.
\end{equation}
From \eqref{u_eq_oneform}-\eqref{omega_eq_2form}, we deduce that the
local helicity
$3$-form $\sigma\in\Lambda^3(\Omega)$, which is defined by
\begin{equation*}
  \label{def:spirality}
  \sigma =u\wedge \omega=(v^\flat-d\ell)\wedge dv^\flat=v^\flat \wedge d v^\flat -d\ell\wedge dv^\flat, 
\end{equation*}
satisfies
\begin{equation}
  \label{sigma_eq_3form}
  \partial_t \sigma + \mathsterling_{v} \sigma = 0.
\end{equation}
This is a result of  
\citet[][where helicity is called spirality]{Ose88}.
Taking the Hodge dual of \eqref{sigma_eq_3form} and using the properties of the
Lie derivative (see Appendix~\ref{ssec:CICF:LD}) and of the Hodge dual operator (see Appendix~\ref{ssec:CICF:HSECD}),
we observe that the scalar local helicity $\star \sigma$ also satisfies a Lie-advection equation; thus it is also
a local conserved quantity,  as shown by \citet{Kuz83} in the 3D flat space.
Given that $\sigma$ is a $3$-form in a three-dimensional space, we obviously have $d\sigma =0$, and thus
$\sigma$ is closed on $\Omega$.
The situation is different for $d > 3$, because the $4$-form
$dv^\flat\wedge dv^\flat$ no longer vanishes.
Indeed, the wedge product is  not commutative in general (see
Appendix~\ref{ssec:CICF:EADF}); hence, the wedge product
$\alpha \wedge \alpha$ is identically zero only if the degree of the differential form $\alpha$ is
  odd (as is the case for the cross-product of two identical vectors).
Hence, $\sigma$ is not
closed; nevertheless, the helicity $3$-form $\sigma$ is still a local
invariant since \eqref{sigma_eq_3form} remains valid on Riemannian
manifolds of any dimension.

Thus local helicity, as a Lie-advection invariant $3$-form, actually exists in any
dimension $d\ge 3$, although it cannot in general  be associated (by Hodge
duality) to a material-invariant scalar.

Returning to the three-dimensional case,
we now suppose that the Betti number $b_3=0$ (see
Remark~\ref{rem:closedexact} and Appendix~\ref{ssec:CICF:HCHD}). This
guarantees that the closed $3$-form $\sigma$ is exact - that is, there exists a $2$-form
$\pi\in\Lambda^2(\Omega)$ such that
\begin{equation}
  \label{sigma=dpi}
  \sigma=d\pi, \quad \pi\in\Lambda^2(\Omega).
\end{equation}
From \eqref{sigma_eq_3form}-\eqref{sigma=dpi}, and using Theorem~\ref{th:GCI},
we obtain the following Cauchy invariants equation
\begin{equation}
  \label{eqn:CIpi}
  \frac12 \delta_{ij}^{kl}\,d\pi_{kl}\wedge dx^{i}\wedge dx^{j}=\sigma_{0}.
\end{equation}
In principle $\sigma_{0}=u_0\wedge \omega_0=v_0^\flat\wedge dv_0^\flat-dl_0\wedge dv_0^\flat$, but if we
choose the initial condition $l_0=0$,  we obtain
$\sigma_0=v_0^\flat\wedge dv_0^\flat$.
As stated in Corollary~\ref{Cor:inversionGCI},
\eqref{eqn:CIpi} can actually be inverted to obtain the $2$-form
$\pi$. In the present case, this is particularly simple: from
\eqref{eqn:CIpi}, using the inverse Lagrangian map, one obtains
componentwise
\begin{equation*}
  \label{eqn:CIpi2}
  \frac{\partial \pi_{ij}}{\partial a^l}=\sigma_{0lmn}\frac{\partial a^m}{\partial x^i}
  \frac{\partial a^n}{\partial x^j}.
\end{equation*}
Taking the divergence of this equation and inverting a Laplacian, one 
formally obtains
\[
\pi_{ij} = \delta^{kl}\Delta_a^{-1}\frac{\partial}{\partial{a^k}} \left( 
\sigma_{0lmn}\frac{\partial a^m}{\partial x^i}
\frac{\partial a^n}{\partial x^j}\right),
\]
where $\Delta_a^{-1}$ denotes the formal inverse of the Laplacian operator $\Delta_a=\sum_{i=1}^3\partial_{a^i}^2$
in cartesian coordinates, and $\delta^{kl}=1$ if $k=l$ and zero otherwise.

\subsection{Local helicities in ideal MHD}

\subsubsection{Local magnetic helicity}
\label{ex:A.B}
Here we assume that $\Omega$ is of dimension three ($d=3$).
From definition \eqref{B-2form}, and given that the Lie derivative and the exterior
derivative commute, integration of the induction equation \eqref{B-2form-eq}
leads to the following equation for the magnetic potential $1$-form: 
\begin{equation}
  \label{A-1form-eq_00}
  d(\partial_t A + \mathsterling_v A) = 0. 
\end{equation}
Using Hodge's decomposition for closed forms (Appendix~\ref{ssec:CICF:HCHD}) and \eqref{A-1form-eq_00},
there exists a harmonic $1$-form $\mathfrak{h}$ such that
\begin{equation}
  \label{A-1form-eq_01}
  \partial_t A + \mathsterling_v A = dK + \mathfrak{h}, 
\end{equation}
with $K$ an arbitrary $0$-form (scalar function) depending on the
choice of gauge condition for  the magnetic potential $1$-form $A$. 
We now assume that the Betti number $b_1=0$, as is the case, e.g.,
when the manifold is simply connected, contractible
or has a positive Ricci curvature (see Appendix~\ref{ssec:CICF:HCHD} and references therein). This ensures the
vanishing of the harmonic  $1$-form $\mathfrak{h}$, so that
\eqref{A-1form-eq_01} reduces to
\begin{equation}
  \label{A-1form-eq}
  \partial_t A + \mathsterling_v A = dK.
\end{equation}
We now introduce the $0$-form  $L$, which
is defined by the following equation
\begin{equation}
  \label{eqn_LieAdv_L}
  \partial_t L + \mathsterling_{v} L = K.  
\end{equation}  
Equation \eqref{eqn_LieAdv_L} can be integrated along the flow $\varphi_t$
generated by the velocity vector field $v$, since \eqref{eqn_LieAdv_L} is equivalent to
\begin{equation}
  \label{eqn_LieAdv_L_2}
  \frac{d}{dt} L \circ \varphi_t = K \circ \varphi_t. 
\end{equation}
Integrating \eqref{eqn_LieAdv_L_2} in time, we obtain
\[
L(t,x) = L(0,a) + \int_0^t K \circ \varphi_\tau\, d\tau,
\]
with the initial condition $L(0,a)=L_0(a)$.
We also introduce, $\mathcal{A}$, the modified magnetic potential $1$-form defined by
\begin{equation}
  \label{def:A_1-form}
  \mathcal{A} = A - dL. 
\end{equation}  
From the definition \eqref{def:A_1-form}, and
using \eqref{A-1form-eq}-\eqref{eqn_LieAdv_L}, the $1$-form
$\mathcal{A}$ satisfies
\begin{equation}
  \label{A_eq_oneform}
  \partial_t \mathcal{A} + \mathsterling_{v} \mathcal{A} = 0.
\end{equation}
From \eqref{B-2form-eq} and \eqref{A_eq_oneform},
we infer immediately that the magnetic helicity $3$-form 
$h\in\Lambda^3(\Omega)$,
which is defined by
\begin{equation*}
  \label{hm-3form}
  h=\mathcal{A}\wedge B = \mathcal{A}\wedge d A = A\wedge d A - dL\wedge dA,
\end{equation*}
satisfies
\begin{equation}
  \label{hm-3form-eq}
  \partial_t h + \mathsterling_v h =0. 
\end{equation}
Taking the Hodge dual of \eqref{hm-3form-eq} and using the properties of the
Lie derivative (see Appendix~\ref{ssec:CICF:LD}) and of the Hodge dual operator (see Appendix~\ref{ssec:CICF:HSECD}),
we observe that the scalar magnetic helicity $\star h$ also satisfies a Lie-advection equation; thus it is also
a local conserved quantity, as shown  first by \citet[][see also \citet{Wol58}]{Els56} in the 3D flat space.
Given that $h$ is a $3$-form in a three-dimensional space, we obviously have $dh =0$, and thus
$h$ is closed on $\Omega$.
The situation is different for $d > 3$, because the $4$-form $dA\wedge
dA$ no longer vanishes; hence $h$ is not closed, but the magnetic
helicity $3$-form $h$ is still a local invariant, since
\eqref{hm-3form-eq} remains valid on Riemannian manifolds of any
dimension provided that the Betti number $b_1=0$. Returning to the three-dimensional case, we now suppose
that the Betti number $b_3=0$ (see Remark~\ref{rem:closedexact} and
Appendix~\ref{ssec:CICF:HCHD}). This guarantees that the closed form
is exact, that is 
there exists a $2$-form $\alpha\in\Lambda^2(\Omega)$ such that
\begin{equation}
  \label{h=dalpha}
  h=d\alpha, \quad \alpha\in\Lambda^2(\Omega).
\end{equation}
From \eqref{hm-3form-eq}-\eqref{h=dalpha}, and using Theorem~\ref{th:GCI},
we obtain yet another Cauchy invariants equation, namely
\begin{equation}
  \label{eqn:CIalpha}
  \frac12 \delta_{ij}^{kl}\,d\alpha_{kl}\wedge dx^{i}\wedge dx^{j}=h_0.
\end{equation}
In principle $h_{0}=\mathcal{A}_0\wedge B_0=A_0\wedge dA_0-dL_0\wedge dA_0$, but if we
choose the initial condition $L_0=0$, we obtain $h_0=A_0\wedge dA_0$.
Equation \eqref{eqn:CIalpha}
can be solved, similarly to what was done in Sec~\ref{ex:v.omega}, to obtain 
the $2$-form $\alpha$ as 
\[
\alpha_{ij} = \delta^{kl}\Delta_a^{-1}\frac{\partial}{\partial{a^k}} \left( 
h_{0lmn}\frac{\partial a^m}{\partial x^i}
\frac{\partial a^n}{\partial x^j}\right).
\]

\subsubsection{Local cross-helicity}
\label{ex:v.B}
Here we assume that $\Omega$ is of dimension three ($d=3$).
We define the cross-helicity $3$-form $\xi\in\Lambda^3(\Omega)$ by
\[
\xi= u\wedge B.
\]
First from  \eqref{B-2form-eq}  and \eqref{u_eq_oneform}, we find that the $3$-form $\xi$
satisfies
\begin{equation}
  \label{xi_eq_3form}
  \partial_t \xi + \mathsterling_{v} \xi = 0.
\end{equation}
Taking the Hodge dual of \eqref{xi_eq_3form} and using the properties of the
Lie derivative (see Appendix~\ref{ssec:CICF:LD}) and of the Hodge dual operator (see Appendix~\ref{ssec:CICF:HSECD}),
we observe that the scalar cross-helicity $\star \xi$ also satisfies a
Lie advection equation; thus it is also
a local conserved quantity,  as shown by \citet{Kuz83} for the 3D flat space.
Given that $\xi$ is a $3$-form in a three-dimensional space,
we obviously have $d\xi =0$, and thus
$\xi$ is closed on $\Omega$. 
We now assume that
the Betti number $b_3=0$ (see Remark~\ref{rem:closedexact} and
Appendix~\ref{ssec:CICF:HCHD}). This guarantees that the closed $2$-form
$\xi$ is exact, that is, 
there exists a $2$-form $\chi\in\Lambda^2(\Omega)$
such that
\begin{equation}
  \label{xi=dchi}
  \xi=d\chi, \quad \chi\in\Lambda^2(\Omega).
\end{equation}
From \eqref{xi_eq_3form}-\eqref{xi=dchi}, and using Theorem~\ref{th:GCI},
we obtain still another Cauchy invariants equation
\begin{equation}
  \label{eqn:CIxi}
  \frac12 \delta_{ij}^{kl}\,d\chi_{kl}\wedge dx^{i}\wedge dx^{j}=\xi_{0}.
\end{equation}
In principle $\xi_{0}= u_0\wedge B_0=v_0^\flat\wedge dA_0-d\ell_0\wedge dA_0$, but if we
choose the initial condition $\ell_0=0$, then we obtain $\xi_0=v_0^\flat\wedge dA_0$.
Equation \eqref{eqn:CIxi},
can be solved to find the $2$-form $\chi$ by proceeding along the
same line as in Sec~\ref{ex:v.omega}. We thus
formally obtain
\[
\chi_{ij} = \delta^{kl}\Delta_a^{-1}\frac{\partial}{\partial{a^k}} \left( 
\xi_{0lmn}\frac{\partial a^m}{\partial x^i}
\frac{\partial a^n}{\partial x^j}\right).
\]

\subsubsection{Local extended helicities}
\label{ex:emhd_h}
Here, we consider local helicities associated to the extended ideal compressible
MHD equations \eqref{ECMHDm}-\eqref{ECMHDB} of Section~\ref{ss:ECMHD}. As 
shown by \citet{LMM16}, equations \eqref{ECMHDB} can be rewritten in
such a way that the unknowns become the magnetic potential $1$-forms $A_{\pm}$, instead of the
magnetic field $2$-forms $B_\pm$, with $B_\pm=dA_\pm$ and $A_\pm:=A+(d_e^2/\rho) \star d \star B+ \kappa_\pm v^\flat$.
More precisely, the magnetic potential $1$-forms $A_{\pm}$ satisfy
\begin{equation}
  \label{emhd_A}
\partial_t A_\pm +\mathsterling_{v_\pm} A_\pm=d\psi_\pm,  
\end{equation}
with the earlier defined vector fields $v_\pm:=v-\kappa_{\mp}\,(\star\, d \star B)^\sharp /\rho$.
Explicit expressions of the $0$-forms $\psi_\pm$  are not needed here \cite[see][]{LMM16}.
Let us now introduce the $0$-forms $L_\pm$, which
are defined by the following equations
\begin{equation}
  \label{eqn_LieAdv_Lpm}
  \partial_t L_\pm + \mathsterling_{v_\pm} L_\pm = \psi_\pm,  
\end{equation}
with initial condition $L_\pm(0,a)=L_{0\,\pm}(a)$.
Equations \eqref{eqn_LieAdv_Lpm} can be integrated along the Lagrangian flows $\varphi_{\pm\, t}$
generated by the vector fields $v_\pm$, similarly to what was done in Section~\ref{ex:A.B}. 
Let us introduce, $\mathcal{A}_\pm$, the modified magnetic potential $1$-forms defined by
\begin{equation}
  \label{def:Apm_1-form}
  \mathcal{A}_\pm = A_\pm - dL_\pm. 
\end{equation}  
From the definition \eqref{def:Apm_1-form}, and
using \eqref{emhd_A}-\eqref{eqn_LieAdv_Lpm}, the $1$-forms
$\mathcal{A}_\pm$ satisfy
\begin{equation}
  \label{Apm_eq_oneform}
  \partial_t \mathcal{A}_\pm + \mathsterling_{v_\pm} \mathcal{A}_\pm = 0.
\end{equation}
From \eqref{ECMHDB} and \eqref{Apm_eq_oneform},
we infer immediately that the \textit{extended magnetic helicity} $3$-forms 
$h_\pm$, here defined by
\begin{equation*}
  \label{hmpm-3form}
  h_\pm=\mathcal{A}_\pm\wedge B_\pm = \mathcal{A}_\pm\wedge d A_\pm = A_\pm\wedge d A_\pm - dL_\pm\wedge dA_\pm,
\end{equation*}
satisfy
\begin{equation}
  \label{hmpm-3form-eq}
  \partial_t h_\pm + \mathsterling_{v_\pm} h_\pm =0. 
\end{equation}
From \eqref{hmpm-3form-eq} we  obtain that the extended magnetic helicity $3$-forms 
$h_\pm$ are local invariants. By spatial integration, these local
conservation laws imply also the
known global conservation laws for
the integrals of the  $3$-forms $\mathcal{K}_\pm:= A_\pm\wedge d A_\pm$, established by \citet{LMM16}.
Indeed, noting that $h_\pm=\mathcal{K}_\pm - d(L_\pm dA_\pm)$, using the Stokes theorem, 
the Lie-derivative theorem \eqref{LDT} and equation \eqref{hmpm-3form-eq},
we obtain, for any domain $\Omega$,
\[
0=\int_{\varphi_{\pm\, t}(\Omega)}
\partial_t h_\pm + \mathsterling_{v_\pm} h_\pm
=\frac{d}{dt}\int_{\varphi_{\pm\, t}(\Omega)}h_\pm
=\frac{d}{dt}\int_{\varphi_{\pm\, t}(\Omega)}\mathcal{K}_\pm +
\frac{d}{dt}\int_{\partial\varphi_{\pm\, t}(\Omega)}L_\pm dA_\pm=
\frac{d}{dt}\int_{\varphi_{\pm\, t}(\Omega)}\mathcal{K}_\pm,
\]
where we have supposed that the generalised vorticities $B_\pm$ vanish on the boundaries of
$\varphi_{\pm\, t}(\Omega)$. In the three-dimensional case $d=3$, taking the Hodge dual of \eqref{hmpm-3form-eq}
and using the properties of the Lie derivative (see Appendix~\ref{ssec:CICF:LD}) and of the Hodge dual operator
(see Appendix~\ref{ssec:CICF:HSECD}), we observe that the scalar extended magnetic helicities $\star h_\pm$
also satisfy Lie-advection equations; thus they are also local
conserved quantities. In a three-dimensional space $\Omega$,
given that $h_\pm\in \Lambda^3(\Omega)$ are  $3$-forms, we obviously have $dh_\pm =0$, and thus
$h_\pm$ is closed on $\Omega$. We now suppose
that the Betti number $b_3=0$ (see Remark~\ref{rem:closedexact} and
Appendix~\ref{ssec:CICF:HCHD}), which guarantees that closed forms
are exact. Then  there exist $2$-forms $\alpha_\pm\in\Lambda^2(\Omega)$ such that
\begin{equation}
  \label{hpm=dalphapm}
  h_\pm=d\alpha_\pm, \quad \alpha_\pm\in\Lambda^2(\Omega).
\end{equation}
From \eqref{hmpm-3form-eq}-\eqref{hpm=dalphapm}, and using Theorem~\ref{th:GCI},
we obtain two more Cauchy invariants equations
\begin{equation}
  \label{eqn:CIalphapm}
  \frac12 \delta_{ij}^{kl}\,d\alpha_{\pm\,kl}\wedge dx_{\pm}^{i}\wedge dx_{\pm}^{j}=h_{\pm\,0},
\end{equation}
where $x_\pm$ are the Lagrangian maps generated by the vector fields $v_\pm$.
In principle $h_{\pm\,0}=\mathcal{A}_{\pm\,0}\wedge B_{\pm\,0}=A_{\pm\,0}\wedge dA_{\pm\,0}-dL_{\pm\,0}\wedge dA_{\pm\,0}$, but if we
choose the initial conditions $L_{\pm\,0}=0$, we obtain $h_{\pm\,0}=A_{\pm\,0}\wedge dA_{\pm\,0}=\mathcal{K}_{\pm\,0}$.
Equations \eqref{eqn:CIalphapm}
can be solved, similarly to what was done in Sec~\ref{ex:v.omega}, to obtain 
the $2$-forms $\alpha_\pm$ as 
\[
\alpha_{\pm\,ij} = \delta^{kl}\Delta_a^{-1}\frac{\partial}{\partial{a^k}} \left( 
h_{\pm\,0lmn}\frac{\partial a^m}{\partial x_\pm^i}
\frac{\partial a^n}{\partial x_\pm^j}\right).
\]

\subsection{Other high-order local invariants in hydrodynamics}
Here we consider a $d$-dimensional Riemannian manifolds $(M,g)$, with $d$
an odd natural integer and $\Omega$ a bounded region of $M$. Again,
we consider the velocity circulation $1$-form $u$, which is
defined by \eqref{def:u_1-form}. Using the $1$-form $u$, we define the
$d$-form $J\in\Lambda^d(\Omega)$ \citep{Ser84, GF93} by
\begin{equation}
  \label{dformJ}
  J= u\wedge (\wedge \, du)^{(d-1)/2},
\end{equation}
where $ (\wedge \,du)^{(d-1)/2}$ stands for  $(d-1)/2$ times the exterior product of the $2$-form
$du$. It was proven by \citet{GF93} that $J$ is Lie advected by the velocity field
$v$. Indeed, first the $1$-form $u$ satisfies the Lie-advection equation \eqref{u_eq_oneform}.
Second, taking the exterior derivative of equation \eqref{u_eq_oneform}
the $2$-form $du$ satisfies the same Lie-advection equation \eqref{u_eq_oneform}, because
Lie derivative and exterior derivative commute. Therefore
we obtain
\begin{equation}
  \label{LAFJ}
  \partial_t J + \mathsterling_{v} J = 0.
\end{equation}
Since $J\in\Lambda^d(\Omega)$, we obviously have $dJ =0$, and thus $J$
is closed on $\Omega$.
We now assume again that
the Betti number $b_{p}=0$ (see Remark~\ref{rem:closedexact} and
Appendix~\ref{ssec:CICF:HCHD}). This guarantees that the closed form
$J$ is exact - that is, there exists a $(d-1)$-form $I$
such that $J= dI$.
From exactness of the $d$-form $J$ and  \eqref{dformJ}-\eqref{LAFJ},  using Theorem~\ref{th:GCI},
we then obtain our last Cauchy invariants equation
\[
\frac{1}{(d-1)!}
\delta_{l_1 \ldots l_{d-1}}^{j_1\ldots j_{d-1}}\,
dI_{j_1\ldots  j_{d-1}} \wedge dx^{l_1}\wedge \ldots
\wedge  dx^{l_{d-1}} = J_0.
\]
In principle $J_{0}=u_0\wedge  (\wedge  \,dv_0^\flat)^{(d-1)/2}
=v_0^\flat\wedge  (\wedge  \,dv_0^\flat)^{(d-1)/2} -d\ell_0\wedge   (\wedge \, dv_0^\flat)^{(d-1)/2}$, but if we
choose a gauge such that $\ell_0=0$,  we obtain $J_0=v_0^\flat\wedge  (\wedge \, dv_0^\flat)^{(d-1)/2}$.
By  Corollary~\ref{Cor:inversionGCI}, the $(d-1)$-form $I$ can be
written as
\[
I_{i_1\ldots  i_{d-1}}=\delta^{k\ell}\Delta_a^{-1}\frac{\partial}{\partial a^k}
\left(
J_{0\ell j_1\ldots  j_{d-1}} \frac{\partial a^{j_1}}{\partial x^{i_1}}\ldots\frac{\partial a^{j_{d-1}}}{\partial x^{i_{d-1}}}
\right).
\]

\section{Conclusion and open problems}
 \label{sec:conc}

A key result of this paper, with all manners of applications to fluid
mechanics, is Theorem~\ref{th:GCI} of Sec.~\ref{sec:ARACLI} on generalised
Cauchy invariants equations.  A straightforward instance, is the \citet{Han61}
proof that the \cite{Cau27} invariants are equivalent to the \citet{Hel58}
theorem on the Lagrangian invariance of the vorticity flux through an
infinitesimal surface element. Our result is much more general, stating that
any Lie-invariant and exact $p$-form has an associated generalised Cauchy invariants 
equation, together with a Hodge dual formulation that generalises Cauchy's
vorticity formula.   The result, when
applied to suitable $3$-forms, also implies various generalisations of local
helicity conservation laws for Euler and MHD flow. There are several ways in
which the full nonlinear ideal MHD equations (compressible or incompressible)
can be recast as Lie-advection problems, leading to Cauchy invariants equations. It
is however not clear at the moment if such formulations lead to interesting
results on time-analyticity and numerical integration by Cauchy-Lagrange-type
methods \citep{ZF14,PZF16}. Similar questions arise for the extended MHD
models discussed in Section~\ref{ss:ECMHD}.

Cauchy-type formulations exist already for the compressible Euler--Poisson
equations in both an Einstein--de Sitter universe \citep[][see also
  \citet{EB97}]{ZF14} and a $\Lambda$CDM universe \citep{RVF15}. It is now clear
that the results are applicable to compressible
models, such as the barotropic fluid equations, and to the Euler--Poisson
equations or compressible MHD for fluid plasmas.

We remind the reader that problems with a Cauchy invariants formulation have
potentially a number of applications.  For example, we believe that Cauchy's
invariants should play an important part in understanding the regularity of
classical solutions to the 3D incompressible Euler equations through the
depletion phenomenon. Indeed, the Cauchy invariants involve finite sums of
vector products of gradients. Individual gradients are typically growing in 
the course of time but the constancy of the invariants put some
geometrical constraints on, for example, their alignments. This may, in due time, lead to the discovery of new estimates helping
to establish 3D regularity results, possibly for all times.

We also note that the Cauchy invariants formulation for the 3D incompressible
Euler equation allows constructive proofs of the regularity of Lagrangian map
through recursion relations among time-Taylor coefficients. These can then in
principle be implemented numerically, without being limited by the
Courant--Friedrichs--Lewy condition on time steps \citep{PZF16}. Given that
Cauchy invariants formulation apply both to flow in Euclidean (flat) space and
to flow on Riemannian curved spaces of any dimension, it is natural to ask if
the constructive and numerical tools just mentioned can be extended to flow in
curved spaces. This would allow us, for example, to numerically study the energy
inverse cascade on negatively curved spaces, recently investigated by
\citet{FG14} from an analytical point of view. It would also probably help
with flow in relativistic cosmology \citep{BO12, AEA15}.  

When leaving flat space, vector quantities involving tangent spaces at two or
more spatially distinct locations cannot be simply added or averaged. This
problem was encountered by \citet{GV16} in trying to handle the Generalised
Lagrangian Mean (GLM) theory on curved spaces; they solved it by using
pullback transport and optimal transport techniques. Another difficulty occurs
with time-Taylor series. Time derivatives of different orders, even when they
are evaluated at the same location, do live in tangent spaces of different
orders and cannot be readily combined.  Classical tools of differential
geometry, such as the exponential map, parallel transport, Lie series or
Lie transformations \citep{Nay73, DF76, Car81, Ste86} could be useful to
overcome this difficulty.

Finally, even in flat space, a generalised-coordinate formulation of the Cauchy
invariants equation can be useful in designing Cauchy-Lagrange
numerical schemes in non-cartesian coordinates. This could help the
investigation of  swirling axisymmetric flow in a cylinder, for which
finite-time
blowup is predicted by some numerical studies \citep{LH14a, LH14b}. In 
\citet{BF17} it was
shown that a constructive proof of finite-time regularity, based
on recursion relations adapted to wall-bounded Euler flow is available. 
The main difficulty is the high-precision implementation, needed to allow
reliable extrapolation without getting too close to the putative
blowup time. \\[-5.ex]

\section*{Acknowledgements}
We are grateful to Peter Constantin, Boris Khesin,
Manasvi Lingam, Philip J.  Morrison, Rahul Pandit and
anonymous referees for useful remarks and references.
This work was supported by the VLASIX and EUROFUSION projects respectively
under the grants No ANR-13-MONU-0003-01 and EURATOM-WP15-ENR-01/IPP-01. 

\appendix \section{Geometric and variational developments of the incompressible Euler equations}
\label{sec:GVDIE}

\subsection{Geometric interpretation of the incompressible Euler equations}
\label{sec:geodiffeuler}
We start by introducing briefly the notions of Lie groups and Lie algebra, 
which are important in the geometric view of
the incompressible Euler equations. A Lie group is a differentiable manifold
$G$ endowed with an associative multiplication, that is, a map
\[
\begin{tabular}{rll}
$ G\times G$ & $\rightarrow $ & $ G $   \\ 
$ (\eta,\sigma) $ & $\mapsto $ & $ \eta \sigma $
\end{tabular}
\]
making $G$ into a group and such that $(\tau \eta)\sigma=\tau(\eta \sigma)$ (associativity). 
Moreover there is an element $e\in G$ called the identity such that $e\eta=\eta e=\eta$. 
Such multiplication mapping, as well as, the inversion mapping 
\[
\begin{tabular}{rll}
$G$& $\rightarrow$ & $G$\\    
$\eta$ & $\mapsto$ & $\eta^{-1}, \quad \mbox{with} \quad \eta\eta^{-1}=e$,
\end{tabular}
\]
must be differentiable. 
To the Lie group $G$, we can naturally associate the Lie algebra $\mathfrak{g}$ defined by
\[
\mathfrak{g}=TG_e, 
\]
i.e. the tangent vector space of $G$ at the identity $e\in G$. In fluid
dynamics, the space $TG$ represents the Lagrangian (material) description
while the space $\mathfrak{g}$ represents the Eulerian (spatial)
description. For more details about Lie groups, Lie algebra, and their
applications in physics, we refer the reader, for example, to \citet{Arn66,
  AMR88, AK98, BA02, BCA10, DK00, Fec06, Fra12, HSS09, Ibr92, Ibr94, Ibr13,
  II07, Olv93}.

Here, the flow takes place on an oriented $d$-dimensional Riemannian manifold $(M,g)$
with metric volume form $\mu=\sqrt{\mathrm{g}} da^1\wedge \ldots \wedge
da^{d}\equiv\sqrt{\mathrm{g}}da$, where $\sqrt{\mathrm{g}}=\sqrt{{\rm
    det}(g_{ij})}$ (see Appendix~\ref{ssec:CICF:RM}).  Let $\Omega$ be a bounded
region of $M$.  In the \citet{Arn66} geometric interpretation of the
incompressible Euler equations, the solutions 
can be viewed as geodesics of the right-invariant Riemannian metric
given by the kinetic energy on the infinite-dimensional group of
volume-preserving diffeomorphisms. Indeed, let us define ${\rm
  SDiff}(\Omega,\mu)$ as the group of diffeomorphisms $\varphi:\Omega
\rightarrow \Omega$ preserving the metric volume form $\mu$,
i.e. $\varphi^\ast\mu =\mu$. Here the group multiplication is the composition
mapping denoted by ``$\circ$'' and $\varphi^\ast \mu$ is the pullback of the
$d$-form $\mu$ through the diffeomorphism $\varphi$. A precise definition of
the action on a tensor $ \Uptheta$ of the pullback operator $\varphi^\ast$ is
given in Appendix~\ref{ssec:CICF:PP}, but roughly speaking it consists in
evaluating the tensor $ \Uptheta$ at the point $\varphi(a)$, $a\in\Omega$
(that is the right composition of $ \Uptheta$ with $\varphi$), while taking
into account the deformation of the structure induced by the map $\varphi$
(reminiscent of a Jacobian matrix).  For the volume form $\mu$,  the
$d$-covariant antisymmetric tensor
\begin{equation*}
\label{def_form_mu}
\mu(a)=
\sqrt{\mathrm{g}(a)}da^{1}\wedge\ldots\wedge da^{d}=
\frac{1}{d!}\delta_{i_1\ldots  i_d}^{1\ldots.. d}\sqrt{\mathrm{g}(a)}da^{i_1}\wedge\ldots\wedge da^{i_d},
\end{equation*}
where $\delta_{i_1\ldots i_d}^{1\ldots .. d}$ is the generalised Kronecker symbol 
(see Appendix~\ref{sec:CICF}), we obtain by pullback
\begin{eqnarray*}
\label{def_pullback_of_form_mu}
\varphi^\ast\mu&=&\frac{1}{d!}\delta_{j_1\ldots j_d}^{1\ldots.. d}
\frac{\partial \varphi^{j_1}}{\partial a^{i_1}}\ldots
\frac{\partial \varphi^{j_d}}{\partial a^{i_d}}
\sqrt{\mathrm{g}(\varphi(a))}da^{i_1}\wedge\ldots\wedge da^{i_d}
=\frac{1}{d!}\delta_{i_1\ldots i_d}^{1\ldots.. d}
{\rm det}\left(\frac{\partial \varphi}{\partial a}\right) 
\sqrt{\mathrm{g}(\varphi(a))}da^{i_1}\wedge\ldots\wedge da^{i_d}\\
&=& {\rm det}\left(\frac{\partial \varphi}{\partial a}\right) \sqrt{\mathrm{g}(\varphi(a))}
da^{1}\wedge\ldots\wedge da^{d}.
\end{eqnarray*}
$G:={\rm SDiff}(\Omega,\mu)$ is a Lie group when $\Omega$ is a
compact differentiable manifold. Even if it not so, we
can associate to $G$ the Lie algebra $\mathfrak{g}:=TG_e$ consisting of all 
divergence-free vector fields $v$ tangent to the boundary (if it is not empty), i.e.  such that 
\begin{equation*} 
\nabla_i  v^i= 0, \ \ \mbox{on} \ \ \Omega, \quad \mbox{and} \quad
(v,\nu) =0, \ \  \mbox{on} \ \ \partial\Omega,
\end{equation*}
where $\nabla_k$ is the covariant derivative and $\nu$ denotes the unit outer normal vector at the boundary 
$\partial \Omega$. The covariant derivative $\nabla_k$ is a generalisation to curved spaces of the classical
partial derivative $\partial_k$ to Euclidean spaces (for a
more detailed definition, see Appendix~\ref{ssec:CICF:RCCD}).

In the algebra $\mathfrak{g}$, we define the scalar product
of two vector fields $v_1, v_2 \in \mathfrak{g}$, as 
\begin{equation}
\label{sp-g}
\langle v_1, v_2 \rangle_{\mathfrak{g}} = \int_{\Omega} (v_1,v_2)_g\, \mu,
\end{equation}
where the scalar product $(\cdot, \cdot)_g$, induced by the Riemannian metric $ds^2=g=g_{ij}da^i \otimes da^j$,
is given by $(v,w)_g  = g_{ij}v^iw^j, \ \,  v,\, w\in TM_a, \ a\in M$. 
Finally let us introduce the right translation acting on the group $G$.
Every element $\varphi$ of the group $G$ defines diffeomorphisms of the group onto 
itself:
\begin{equation}
\label{right-T}
R_{\varphi}: G \rightarrow G, \quad \quad R_{\varphi}\psi = \psi {\varphi}, \ \ \forall \psi\in G.
\end{equation}
The induced map on the tangent bundle $TG$ will be denoted by
\begin{equation}
\label{right-T-dual}
R_{\varphi\ast}: TG_\psi \rightarrow TG_{\psi\varphi}, \quad  \ \ \forall \psi\in G.
\end{equation}
Then a Riemannian metric on the group $G$ is called right-invariant if it is preserved 
under all right translations $R_\varphi$, i.e., if the derivative of the right translation carries
every vector to a vector of the same length. Thus it is sufficient to give a right-invariant
metric at one point of the group (for instance the identity), since the metric can be
carried over to the remaining points of the group by right translations.

We now consider the flow of a uniform ideal (incompressible and non-viscous) fluid in
the region $\Omega$. Here, and henceforth, by ``flow'' we understand a Lagrangian map
$M\ni a\rightarrow \varphi_t(a)\in M$, which, at this point, need not be a solution of the
Euler equations. Such a flow is given by a curve $t\rightarrow \varphi_t$ in
the group  ${\rm SDiff}(\Omega,\mu)$. This means that the diffeomorphism $\varphi_t$ maps
every particle of the fluid from the position $a$ it had at time $0$ to the position 
$x$ at time $t$.

If $\varphi_t$ is to be a solution of the Euler equations then,
according to the variational formulation \citep[see, e.g.,][]{Arn66}, the curve $\varphi_t$ is a  geodesic of the
group ${\rm SDiff}(\Omega,\mu)$. Such a curve extremizes the (Maupertuis) action 
defined as the time-integral of the kinetic energy:
\begin{equation}
\label{action_KE}
\mathcal{A}_K:=\frac{1}{2}\int_0^Tdt  \langle v(t),  v(t) \rangle_{\mathfrak{g}},
\end{equation}
where $v(t)$ is the Eulerian velocity vector field belonging to $\mathfrak{g}$.
This formulation is explicitly given in \citet{Arn66} but was probably already
known to \citet{Lag88} who never wrote it explicitly
because he switched quickly from variational formulations to so-called 
virtual velocity formulations.

It easily shown that the kinetic energy of the moving fluid is a
right-invariant Riemannian metric on the group ${\rm
  SDiff}(\Omega,\mu)$. Indeed, suppose that after time $t$ the flow of the
fluid gives a diffeomorphism $\varphi_t$, and the velocity at this moment of
time is given by the Eulerian vector field $v$. Then the diffeomorphism
realized by the flow after time $t+ dt$ (with $dt \ll 1$) will be
\begin{equation}
\label{flow_def}
\varphi_{t+dt} = \exp(vdt)\varphi_t + o(dt), 
\end{equation}
where $\tau\rightarrow \exp(v\tau)$ is in one-parameter group with vector $v$,
i.e.  the Lagrangian flow of the differential equation defined by the vector
field $v$. From \eqref{flow_def} and using the definitions
\eqref{right-T}-\eqref{right-T-dual} we have
\begin{equation*}
\label{RTg}
R_{{\varphi_t}^{-1}}\left(\frac{\varphi_{t+dt}-\varphi_{t}}{dt}\right) =
R_{{\varphi_t}^{-1}}\left(\frac{\exp(vdt)-e}{dt}\right)\varphi_t + o(1),
\end{equation*}
which, after taking the limit $dt\rightarrow 0$, leads to 
\begin{equation}
\label{def_v_eulerien}
v=R_{\varphi_t^{-1}\ast}\dot{\varphi}_t=\dot{\varphi}_t \circ {\varphi}_t^{-1}\quad 
\mbox{ or } \quad \dot{\varphi}_t = R_{\varphi_t\ast} v=v\circ {\varphi}_t.
\end{equation}
In mathematical language the velocity field $v$ is in the algebra
$\mathfrak{g}$ and is obtained from the vector $\dot{\varphi}_t$, tangent to
the group at the point ${\varphi}_t$, by right translation. In fluid-dynamics
terms the vector field $v=v_t(x)$ is the Eulerian velocity field.  We pass
from the Lagrangian to the Eulerian description by right translations. We note that if
we replace $\varphi$ by the composition $\varphi \circ \eta$, for a fixed
(time-independent) map $\eta \in {\rm SDiff}(\Omega,\mu)$, then
$\dot{\varphi}_t \circ {\varphi}_t^{-1}$ is independent of $\eta$. This
reflects the right invariance of the Eulerian description ($v$ is invariant
under composition of $\varphi$ by $\eta$ on the right). Therefore
$t\rightarrow {\varphi}_t$ is the geodesic, on the group ${\rm
  SDiff}(\Omega,\mu)$, of the right-invariant Riemannian metric given by the
quadratic form \eqref{sp-g}. From the Hamiltonian least action principle we
obtain the following Euler equations \eqref{euler_eq_contra} in contravariant
form. For the sake of completeness, details of the derivation are given in
Appendix~\ref{sec:eulerFLAP}.  Let $v\in\mathfrak{g}$ be the velocity field
defined by the right translation \eqref{def_v_eulerien}. Then there exists a
scalar function $p: ]0,T] \times \Omega \ni (t,x)\rightarrow \R$, the
    so-called pressure function, such that $(v,p)$ satisfy the following Euler
    equations
\begin{equation}
\label{euler_eq_contra}
\partial_t v^i+ v^k\nabla_kv^i= -g^{ik}\partial_k p,  \quad x \in\Omega, \ t\in]0,T]. 
\end{equation}

\subsection{Derivation of the Euler equations from a least action principle}
\label{sec:eulerFLAP}

From the discussion of Appendix~\ref{sec:geodiffeuler}, the geodesic motions
$t\rightarrow \varphi_t$ on ${\rm SDiff}(\Omega,\mu)$, which correspond to the
right-invariant Riemann metric defined by \eqref{sp-g}, are given by the
extrema of the action \eqref{action_KE} where $\dot{\varphi}_t =
R_{\varphi_t\ast} v=v(t,\varphi_t)$, under condition
$\varphi_t^\ast\mu=\mu$. To perform the extremization of the action
\eqref{action_KE} over ${\rm SDiff}(\Omega,\mu)$, it is convenient to impose
the volume-preservation constraint $\varphi_t^\ast\mu=\mu$ through a Lagrange
multiplier $\lambda(t,a)$ by adding to the action \eqref{action_KE} the term
\begin{equation}
\label{action_IC}
\mathcal{A}_I:=\int_0^T \lambda (\varphi_t^\ast\mu-\mu)\, da dt,
\end{equation}
We now compute the first 
variation of the action 
\begin{equation}
\label{action_T}
\mathcal{A}(\varphi,\lambda,\Omega)=\mathcal{A}_K(\varphi,\Omega) + \mathcal{A}_I(\varphi,\lambda,\Omega).
\end{equation}
We start with $\delta \mathcal{A}_K$. For its evaluation, we mainly use an integration by parts in time,
the symmetry of the metric tensor $g_{ij}$, the
definition of the covariant derivative (see Appendix~\ref{ssec:CICF:RCCD}), the change of variable 
$x=\varphi_t(a)=\varphi(t,a)$, 
the equations $\dot{\varphi}_t = v(t,\varphi_t)$ and $\varphi_t^\ast\mu=\mu$.
For the first variation of 
$\delta \mathcal{A}_K$ with volume preservation $\varphi= \varphi_t=\varphi(t,a)$, we then obtain
\begin{eqnarray}
\delta \mathcal{A}_K(\varphi,\Omega)[\delta\varphi] &=& \frac{1}2 \delta \int_0^Tdt\int_{\Omega}\mu(x)\,
g_{ij}(x)v^i(t,x)v^j(t,x)\nonumber\\
&=&  \frac{1}2 \delta \int_0^Tdt\int_{\Omega}\mu(a)\,
g_{ij}(\varphi(t,a))\partial_t\varphi^i(t,a) \partial_t\varphi^j(t,a) \nonumber\\
&=&  \frac{1}2 \int_0^Tdt\int_{\Omega}\mu(a)\,
\partial_k g_{ij}(\varphi(t,a))\delta \varphi^k(t,a) 
\partial_t\varphi^i(t,a) \partial_t\varphi^j(t,a)\nonumber\\
&& + \int_0^Tdt\int_{\Omega}\mu(a)\,
g_{ij}(\varphi(t,a))
\partial_t\varphi^i(t,a) \partial_t\delta\varphi^j(t,a)\nonumber\\
&=& \int_0^Tdt\int_{\Omega}\mu(x) \,\delta\varphi^j(t,\varphi_t^{-1}(x))\left\{
 - g_{ij}(x)\left[\partial_tv^i(t,x) + v^k(t,x)\partial_kv^i(t,x) \right]\right.\nonumber\\
&&\left. + \frac{1}2\partial_j g_{ik}(x)v^i(t,x)v^k(t,x) - \partial_k  g_{ij}(x)v^i(t,x)v^k(t,x) 
\right\}\nonumber\\
&=& -\int_0^Tdt\int_{\Omega}\mu \,u^jg_{ij}\left\{
\partial_tv^i + v^k\partial_kv^i + \frac12 g^{im}(\partial_kg_{lm}+\partial_lg_{km}-\partial_mg_{lk})v^kv^l
\right\}\nonumber\\
&=& -\int_0^Tdt\int_{\Omega}\mu \,u^jg_{ij}\left\{
\partial_tv^i + v^k\nabla_kv^i. 
\right\},\label{eqn_DAK}
\end{eqnarray}
Here $u^j(t,x)=\delta\varphi^j(t,\varphi_t^{-1}(x))$ and 
$\partial\equiv \partial_{a}$ denotes the partial derivative with respect the Lagrangian parameter
$a$ (initial position).
Next, for the first variation of 
$\delta \mathcal{A}_I$, using the definition of the volume form $\mu$ and
the following identities (see Appendix~\ref{ssec:CICF:PKD})
\begin{equation*}
  \frac{\partial \,{\rm det}( D_a\varphi)}{\partial (\partial_j \varphi^k)}=
  {\rm det} (D_a \varphi) ([D_a\varphi]^{-1})_k^j, \quad \partial_k \mathrm{g} = \mathrm{g}g^{ij}\partial_kg_{ij},
\end{equation*}  
we obtain
\begin{eqnarray*}
\delta  \mathcal{A}_I(\varphi,\lambda,\Omega)[\delta\varphi,\delta\lambda] &=&
\delta \int_0^T dt \int_{\Omega} da\, \lambda(t,a)\left(\sqrt{\mathrm{g}(\varphi(t,a))} {\rm det}(D_a\varphi(t,a))
-\sqrt{\mathrm{g}(a)} \right) \\
&=& \int_0^T dt \int_{\Omega} da\, \delta \lambda(t,a)\left(\sqrt{\mathrm{g}(\varphi(t,a))} {\rm det}(D_a\varphi(t,a))
-\sqrt{\mathrm{g}(a)} \right)\\
&& +
\int_0^T dt \int_{\Omega} da\, \lambda(t,a) \sqrt{\mathrm{g}(\varphi(t,a))}{\rm det}(D_a \varphi(t,a)) \\
&& \left(
\frac12 g^{ij}(\varphi(t,a))\partial_kg_{ij}(\varphi(t,a))\delta\varphi^k(t,a)
+ ([D_a\varphi(t,a)]^{-1})_k^j  \partial_j \delta \varphi^k(t,a)\right)\\
&=& \int_0^T dt \int_{\Omega} da\, \delta \lambda(t,a)\left(\sqrt{\mathrm{g}(\varphi(t,a))} {\rm det}(D_a\varphi(t,a))
-\sqrt{\mathrm{g}(a)} \right) \\
&& +
\int_0^T dt \int_{\Omega} \mu(x) \,p(t,x)\left(\frac12 g^{ij}(x)\partial_kg_{ij}(x)u^k(t,x)
+ \partial_k u^k(t,x)\right).
\end{eqnarray*}  
Here we have introduced the pressure function $p$ by setting $p(t,x)= \lambda(t,\varphi_t^{-1}(x))$.
Using an integration by parts in the last term of this equation, we finally obtain
\begin{eqnarray}
\delta  \mathcal{A}_I(\varphi,\lambda,\Omega) [\delta\varphi,\delta\lambda]&=&
\int_0^T dt \int_{\Omega} da\, \delta \lambda(t,a)\left(\sqrt{\mathrm{g}(\varphi(t,a))} {\rm det}(D_a\varphi(t,a))
-\sqrt{\mathrm{g}(a)} \right) \nonumber\\
&& -
\int_0^T dt \int_{\Omega} \mu(x) \,u^k(t,x)\partial_kp(t,x) \label{eqn_DAI},
\end{eqnarray}  
where we have used the boundary condition $(u,\nu)=0$ on $\partial \Omega$ for the infinitesimal variation
$u(t,x)=\delta\varphi(t,\varphi_t^{-1}(x))$.
Setting the first variation $\delta \mathcal{A}$ to zero, and using \eqref{action_T} and
\eqref{eqn_DAK}-\eqref{eqn_DAI}, we obtain the Euler equations \eqref{euler_eq_contra},
together with the volume-preserving condition $\varphi_t^\ast\mu=\mu$, which is
equivalent to the incompressibililty condition $\nabla_i v^i=0$ for the  velocity field.

\subsection{Derivation of the Cauchy invariants equation from the
  relabelling symmetry and a variational principle}
\label{sec:Rsym2form}
In this appendix, from the relabelling symmetry, i.e.
the invariance of the action under relabelling transformations, 
we recover the Cauchy invariants equation without appealing to Theorem~\ref{th:GCI}.
Here we follow the spirit of the proof given by \citet{FV14} and references therein for
the Euclidean case. The reader is also referred to this for historical discussion and description
of the use of different Hamiltonian principles or 
least action principles in Lagrangian coordinates.
Such a strategy  does not directly make use of Noether's theorem,
but  is reminiscent of its proof.
Before stating the result, we give the formal definition of a relabelling transformation.

\begin{definition}
\label{def:RelabSym}
A relabelling transformation is a map $\Omega \ni a\rightarrow \eta(a)\in \Omega$ such that
\[
\eta(a)=a + \delta a(a), \quad \delta a \in \mathfrak{g}, 
\]
i.e. with
\[
\nabla_i \delta a^i=0 \quad \mbox{and} \quad (\delta a, \nu)=0.
\]
In other words the vector field $\delta a $ is the infinitesimal generator of
a group of volume-preserving diffeomorphisms of $\Omega$ that leave the boundary $\partial \Omega$ invariant.
\end{definition}

\begin{theorem}(Cauchy invariants equation from the relabelling symmetry and variational principle). \label{th:InvCauchy}
  Let $\varphi_t$ be the Euler flow. We set $x=\varphi_t$ and $v=\dot{\varphi}_t$, with
  $v_0=\dot{\varphi}_0$. Then the invariance of the action \eqref{action_KE} of Appendix~\ref{sec:geodiffeuler} 
  under relabelling transformations of Definition \ref{def:RelabSym}
  implies the following Cauchy invariants conservation law:
  \begin{equation}
    \label{OrgCIE}
     dv_k\wedge dx^k = \omega_0:= dv_0^\flat.
\end{equation}  
\end{theorem}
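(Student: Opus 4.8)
The plan is to obtain \eqref{OrgCIE} by exploiting the invariance of the Maupertuis action $\mathcal{A}_K$ of \eqref{action_KE} under the relabelling transformations of Definition~\ref{def:RelabSym}, following the classical Noether-type argument but without invoking the Noether theorem as a black box. First I would fix an Euler flow $\varphi_t$ (a critical point of $\mathcal{A}_K$ under volume-preserving variations) and introduce a one-parameter family of relabelled flows $\varphi_t^\varepsilon := \varphi_t \circ \eta_\varepsilon$, where $\eta_\varepsilon$ is the flow of a divergence-free vector field $\delta a \in \mathfrak{g}$ tangent to $\partial\Omega$, with $\eta_0 = \mathrm{Identity}$. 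Because $\eta_\varepsilon$ is time-independent and volume-preserving, two things happen: the constraint $\varphi_t^{\varepsilon\,\ast}\mu = \mu$ is preserved, so we stay on $\mathrm{SDiff}(\Omega,\mu)$; and, as noted after \eqref{def_v_eulerien}, the Eulerian velocity is unchanged, $v_t^\varepsilon \circ \varphi_t^\varepsilon = \dot\varphi_t^\varepsilon$ but with $v_t^\varepsilon = v_t$ — more precisely the kinetic energy $\langle v(t),v(t)\rangle_{\mathfrak g}$ is literally invariant under composition on the right by the fixed $\eta_\varepsilon$ (change of variables in the space integral, Jacobian one). Hence $\mathcal{A}_K(\varphi^\varepsilon) = \mathcal{A}_K(\varphi)$ identically in $\varepsilon$, so $\tfrac{d}{d\varepsilon}\big|_{\varepsilon=0}\mathcal{A}_K(\varphi^\varepsilon) = 0$.

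Next I would compute this derivative a second way, using that $\varphi_t$ is already a critical point of $\mathcal{A}_K$ under \emph{volume-preserving} variations. The variation field here is $\delta\varphi_t = (T\varphi_t)\,\delta a = \tfrac{d}{d\varepsilon}\big|_0 \varphi_t\circ\eta_\varepsilon$, which is a legitimate volume-preserving variation (its Eulerian counterpart $w := \delta\varphi_t\circ\varphi_t^{-1}$ is divergence-free and tangent to the boundary, being the pushforward of $\delta a$). Since the Euler equations hold, the bulk Euler--Lagrange term vanishes, and from the computation of $\delta\mathcal{A}_K$ in Appendix~\ref{sec:eulerFLAP} (see \eqref{eqn_DAK}) what survives is a pure boundary term in \emph{time}: namely
\[
0 = \frac{d}{d\varepsilon}\bigg|_{0}\mathcal{A}_K(\varphi^\varepsilon)
= \bigg[\int_\Omega \mu(a)\, g_{ij}(\varphi_t(a))\,\partial_t\varphi_t^i(a)\,(\delta a)^k\,\partial_{a^k}\varphi_t^j(a)\bigg]_{t=0}^{t=T}.
\]
Because $\delta a$ is an arbitrary divergence-free field tangent to $\partial\Omega$, and $T$ is arbitrary, this identity forces the time-independence of the $1$-form on $\Omega$ (in Lagrangian coordinates) whose components are $\big(v_j\circ\varphi_t\big)\,\partial_{a^k}\varphi_t^j$, \emph{modulo an exact $1$-form}: the pairing of $\delta a$ with a $1$-form $\theta$ is $\int_\Omega \theta_k (\delta a)^k\,\mu$, and this vanishes for all divergence-free $\delta a$ tangent to the boundary exactly when $\theta$ is exact (Hodge-theoretic orthogonality, as used throughout Section~\ref{ss:SIGCI}). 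Writing $\theta := \varphi_t^\ast v^\flat = v_j\circ\varphi_t\, dx^j$ with $x = \varphi_t$, we conclude $\partial_t \theta = d(\text{something})$, hence $\partial_t\, d\theta = 0$, i.e. $d(\varphi_t^\ast v^\flat)$ is constant in $t$; since $\varphi_0 = \mathrm{Identity}$, this constant equals $dv_0^\flat = \omega_0$.

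Finally I would identify $d(\varphi_t^\ast v^\flat)$ with the left-hand side of \eqref{OrgCIE}. Since the exterior derivative commutes with pullback, $d(\varphi_t^\ast v^\flat) = \varphi_t^\ast(dv^\flat) = \varphi_t^\ast\omega$, and the direct computation already carried out in the proof of Corollary~\ref{th:InvVorticity} shows $\varphi_t^\ast\omega = dv_k\wedge dx^k$ (here $v_k = (v_k)\circ\varphi_t$ and $x^k = \varphi_t^k$). This yields $dv_k\wedge dx^k = \omega_0$, which is \eqref{OrgCIE}. The main obstacle I anticipate is the middle step: extracting a clean pointwise conclusion from the vanishing of the boundary term for all admissible $\delta a$. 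One must be careful that $\delta a$ ranges only over divergence-free fields tangent to $\partial\Omega$, so the conclusion is not that $\varphi_t^\ast v^\flat$ itself is $t$-independent but only that it is so up to a time-dependent exact $1$-form (a gradient) — which is precisely why one applies $d$ at the end, killing the ambiguity and recovering the Cauchy invariants equation rather than the stronger (and false-in-general) Weber-type statement $\varphi_t^\ast v^\flat = v_0^\flat$.
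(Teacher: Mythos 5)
Your proposal is correct, and at its core it is the paper's own relabelling-symmetry argument (Appendix~\ref{sec:Rsym2form}): the same induced variation $\delta\varphi_t=(\partial\varphi_t/\partial a^k)\,\delta a^k$, and the same use of the Euler equations together with $\nabla_i\delta a^i=0$ and tangency of $\delta a$ to kill the bulk/pressure contribution (the paper's term $I_1$). Where you genuinely deviate is in the final bookkeeping. The paper does not isolate a temporal boundary term: it keeps the double space--time integral, writes $\delta a^i=\mathrm{g}^{-1/2}\partial_j\xi^{ij}$ with an arbitrary skew potential $\xi$, integrates by parts in space and uses the fundamental lemma to obtain the pointwise statement that the antisymmetric part of $\partial_t\partial_k\bigl(v_i(t,\varphi_t)\,\partial_{a^n}\varphi_t^i\bigr)$ vanishes, then integrates in time and wedges with $da^k\wedge da^n$. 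You instead integrate by parts in time first, so that invariance of \eqref{action_KE} becomes equality of the Noether charge $\int_\Omega (\varphi_t^\ast v^\flat)(\delta a)\,\mu$ at $t=0$ and $t=T$, and you eliminate the arbitrary $\delta a\in\mathfrak{g}$ by the Helmholtz--Hodge orthogonality (divergence-free fields tangent to $\partial\Omega$ are the orthogonal complement of gradients), concluding that $\varphi_T^\ast v^\flat-v_0^\flat$ is exact before applying $d$. The two devices are interchangeable: the paper's $\xi$-potential is in effect a constructive version of your orthogonality step (compactly supported $\xi$ already yields $d(\varphi_T^\ast v^\flat-v_0^\flat)=0$, so the full Hodge decomposition is slightly more than is needed, though perfectly valid on a compact manifold with boundary), and your closing identification $\varphi_t^\ast\omega=dv_k\wedge dx^k$ is exactly the computation in the proof of Corollary~\ref{th:InvVorticity}. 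Your remark that the admissible variations only determine $\varphi_t^\ast v^\flat$ up to an exact $1$-form, which is why one lands on the Cauchy invariants equation rather than a Weber-type identity, correctly captures why the exterior derivative must be taken at the end, just as in the paper's wedging step.
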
  

\begin{proof}
\label{sec:ECIRS}
The idea is first to compute the first-order variation of the action integral
\[
\mathcal{A}_K(\varphi,\Omega)=\frac12 \int_0^Tdt \int_\Omega \mu(a)
g_{ij}(\varphi_t(a)) \partial_t\varphi_t^i(a)
\partial_t\varphi_t^j(a),
\]
induced by the relabelling transformations of Definition~\ref{def:RelabSym}.
The variation of $\mathcal{A}_K(\varphi,\Omega)$ is given by
\begin{eqnarray}
\delta \mathcal{A}_K(\varphi,\Omega)[\delta\varphi] &=&
\frac{1}2 \delta \int_0^Tdt\int_{\Omega}\mu(a)\,
g_{ij}(\varphi_t(a))\partial_t\varphi_t^i(a) \partial_t\varphi_t^j(a) \nonumber\\
&=&  \frac{1}2 \int_0^Tdt\int_{\Omega}\mu(a)\,
\partial_l g_{ij}(\varphi_t(a))\delta \varphi_t^l(a) 
\partial_t\varphi_t^i(a) \partial_t\varphi_t^j(a)\nonumber\\
&&
+ \int_0^Tdt\int_{\Omega}\mu(a)\,
g_{ij}(\varphi_t(a))
\partial_t\delta\varphi_t^i(a) \partial_t\varphi_t^j(a).\label{eqn:VCI1}
\end{eqnarray}
The relabelling transformation of  Definition~\ref{def:RelabSym} induces
a change in the Lagrangian flow $\varphi_t$ at time $t$, given by
\begin{equation}
\label{eqn:deltaphi}
\delta \varphi_t = \frac{\partial \varphi_t}{\partial a^i} \delta\eta^i = \frac{\partial \varphi_t}{\partial a^i} \delta a^i.
\end{equation}
Substituting  \eqref{eqn:deltaphi} in \eqref{eqn:VCI1}, and using the product rule, we obtain
\begin{eqnarray}
\delta \mathcal{A}_K(\varphi,\Omega)[\delta a]
&=&\int_0^T\int_{\Omega}\mu(a)\,\left\{\frac12
\partial_l g_{ij}(\varphi_t(a)) \frac{\partial \varphi_t^l(a)}{\partial a^m} 
\partial_t\varphi_t^i(a) \partial_t\varphi_t^j(a) \delta a^m \right.\nonumber\\
&& \left.
+   g_{ij}(\varphi_t(a)) \partial_t\left(  \frac{\partial \varphi_t^i(a)}{\partial a^n}\right)
\partial_t\varphi_t^j(a) \delta a^n\right\}\nonumber
\\
&=&\int_0^T\int_{\Omega}\mu(a)\,\left\{\frac12
\partial_l g_{ij}(\varphi_t(a)) \frac{\partial \varphi_t^l(a)}{\partial a^m} 
\partial_t\varphi_t^i(a) \partial_t\varphi_t^j(a) \delta a^m \right.\nonumber\\
&& \left.
+   \partial_t\left( g_{ij}(\varphi_t(a)) \frac{\partial \varphi_t^i(a)}{\partial a^n}
\partial_t\varphi_t^j(a)\right) \delta a^n
- \partial_t\left( g_{ij}(\varphi_t(a)) \partial_t\varphi_t^j(a)\right)
\frac{\partial \varphi_t^i(a)}{\partial a^n}\delta a^n
\right\}\nonumber\\
&=&\int_0^T\int_{\Omega}\mu(a)\,\left\{\frac12
\partial_l g_{ij}(\varphi_t(a)) \frac{\partial \varphi_t^l(a)}{\partial a^m} 
\partial_t\varphi_t^i(a) \partial_t\varphi_t^j(a) \delta a^m\right.
\nonumber\\
&& \left.
- \partial_kg_{ij}(\varphi_t(a)) \partial_t\varphi_t^k(a)\partial_t\varphi_t^j(a)
\frac{\partial \varphi_t^i(a)}{\partial a^m}\delta a^m
-g_{ij}(\varphi_t(a)) \partial_t^2\varphi_t^j(a)
\frac{\partial \varphi_t^i(a)}{\partial a^m}\delta a^m
\right\}
\nonumber\\
&&
+\int_0^T\int_{\Omega}\mu(a)\, 
\partial_t\left( g_{ij}(\varphi_t(a)) \frac{\partial \varphi_t^i(a)}{\partial a^n}
\partial_t\varphi_t^j(a)\right) \delta a^n\nonumber\\
&=& I_1 + I_2.\label{eqn:VCI2}
\end{eqnarray}
First, we show that $I_1=0$. From \eqref{eqn:VCI2} and using the definition of the covariant derivative, we obtain
\begin{eqnarray*}
I_1 &=& 
\int_0^Tdt\int_{\Omega}\mu(a) \,\frac{\partial \varphi_t^j}{\partial a^m} \delta a^m\left\{
- g_{ij}(\varphi_t)\left[\partial_tv^i(t,\varphi_t) + v^k(t,\varphi_t)\partial_kv^i(t,\varphi_t)
 \right]\right.\nonumber\\
&&\left. + \frac{1}2\partial_j g_{ik}(\varphi_t)v^i(t,\varphi_t)v^k(t,\varphi_t) -
\partial_k  g_{ij}(\varphi_t)v^i(t,\varphi_t)v^k(t,\varphi_t) 
\right\}\nonumber\\
&=& -\int_0^Tdt\int_{\Omega}\mu(a) \,\frac{\partial \varphi_t^j}{\partial a^m} \delta a^m
g_{ij}(\varphi_t)\left\{
\partial_tv^i(t,\varphi_t) + v^k(t,\varphi_t)\partial_kv^i(t,\varphi_t)\right. \nonumber\\
&& \left.+ \frac12
g^{im}(t,\varphi_t)(\partial_kg_{lm}(t,\varphi_t)+\partial_lg_{km}(t,\varphi_t)-\partial_mg_{lk}(t,\varphi_t))
v^k(t,\varphi_t)v^l(t,\varphi_t)
\right\}\nonumber\\
&=& -\int_0^Tdt\int_{\Omega}\mu(a) \, \frac{\partial \varphi_t^j}{\partial a^m} \delta a^m g_{ij}(\varphi_t)\left\{
\partial_tv^i(t,\varphi_t)+ v^k(t,\varphi_t)\nabla_kv^i(t,\varphi_t) 
\right\}.\label{eqn:VCI3}
\end{eqnarray*}
Using the Euler equations \eqref{euler_eq_contra}, the term $I_1$ becomes
\begin{eqnarray*}
I_1&=&\int_0^Tdt\int_{\Omega}\mu(a) \,\delta a^m \frac{\partial \varphi_t^j}{\partial a^m}  g_{ij}(\varphi_t)
g^{ik}(\varphi_t)  \partial_k p (t,\varphi_t) 
=
\int_0^Tdt\int_{\Omega}\mu(a) \, \delta a^m \frac{\partial \varphi_t^j}{\partial a^m} 
\delta_j^k\partial_k p (t,\varphi_t) \nonumber\\
&=& \int_0^Tdt\int_{\Omega}\mu(a) \,\delta a^m \frac{\partial \varphi_t^k}{\partial a^m} 
\partial_k p (t,\varphi_t)
=\int_0^Tdt\int_{\Omega}\mu(a) \, \delta a^m \frac{\partial p}{\partial a^m}.
\label{eqn:VCI4}
\end{eqnarray*} 
Now, we recall that $\nabla_i\delta a^i=\mathrm{g}^{-1/2}\partial_i(\sqrt{\mathrm{g}}\delta a^i)=0$,  and $(\delta, \nu)=0$. Therefore,
using an integration by parts in space, the term $I_1$ becomes
\begin{equation*}
I_1=\int_0^Tdt\int_{\Omega}\mu(a) \, \delta a^i \frac{\partial p}{\partial a^i}
=- \int_0^Tdt\int_{\Omega}\mu(a) \, \nabla_i \delta a^i p +
\int_0^Tdt\int_{\partial \Omega} d\Gamma \sqrt{\mathrm{g}(a)} \, p\, (\delta a, \nu) 
=0.
\end{equation*}
Finally, we deal with the term $I_2$ defined in \eqref{eqn:VCI2}.
For this, we use the property that $\delta a\in \mathfrak{g}$, i.e.
$\nabla_n \delta a^n=0$ and $(\delta a,\nu)= \delta_{ij}\delta a^i \nu^j=0$.
Here, $\delta_{ij}$ is the metric tensor of an Euclidean space with cartesian
coordinates, i.e. $\delta_{ij}=0$ if $i\neq j$ and $\delta_{ij}=1$ if $i=j$. Such a vector
$\delta a$ can be constructed from a skew-symmetric $2$-contravariant tensor
$\xi^{ij}$ satisfying the following constraints:
\begin{equation}
\label{constonxi0}
\xi^{ij}+\xi^{ji}=0\ \ \mbox{on}\ \ \Omega, \quad
\delta_{ij}\xi^{ik}\nu^j=0\ \ \forall k \ \ \mbox{on}\ \ \partial \Omega, \quad \mbox{and} \quad
\delta_{ij}\xi^{ik}\partial_k\nu^j=0 \ \ \mbox{on}\ \ \partial \Omega.
\end{equation}
Indeed, if we set
\begin{equation}
\label{structdeltaeta0}
\delta a^i = \frac{1}{\sqrt{\mathrm{g}}} \partial_j \xi^{ij},
\end{equation}
then, using \eqref{constonxi0}, we find that $\nabla_i \delta a^i=0$ and
$(\delta a,\nu)= \delta_{ij}\delta a^i \nu^j=0$. We observe that a
skew-symmetric $2$-contravariant tensor $\xi^{ij}$ satisfying 
$\xi_{|_{\partial \Omega}}^{ij}=0$, satisfies also the boundary conditions \eqref{constonxi0}.   
Using \eqref{constonxi0}-\eqref{structdeltaeta0}, the term $I_2$ becomes
\begin{eqnarray*}
I_2&=& 
\int_0^T\int_{\Omega}\mu(a)\, 
\partial_t\left( g_{ij}(\varphi_t(a)) \frac{\partial \varphi_t^i(a)}{\partial a^n}
\partial_t\varphi_t^j(a)\right) \delta a^n\nonumber\\
&=&-\int_0^T\int_{\Omega}da\, \partial_k
\partial_t\left( g_{ij}(\varphi_t(a)) \frac{\partial \varphi_t^i(a)}{\partial a^n}
\partial_t\varphi_t^j(a)\right) \xi^{nk}  \nonumber\\
&&
+   \int_0^Tdt\int_{\partial \Omega} d\Gamma
\partial_t\left( g_{ij}(\varphi_t(a)) \frac{\partial \varphi_t^i(a)}{\partial a^n}
\partial_t\varphi_t^j(a)\right) \xi^{nk}\nu^m\delta_{km}
\nonumber\\
&=&  -\int_0^T\int_{\Omega}da\, 
\partial_t\partial_k\left( g_{ij}(\varphi_t(a)) \frac{\partial \varphi_t^i(a)}{\partial a^n}
\partial_t\varphi_t^j(a)\right) \xi^{nk}. 
\end{eqnarray*}  
The action $\mathcal{A}_K(\varphi,\Omega)$ should be invariant under relabelling transformations.
Thus the variation of the action integral, i.e.
$ \delta \mathcal{A}_K$, must vanish. Therefore we have $I_2=0$, i.e.
\[
-\int_0^T\int_{\Omega}da\, 
\partial_t\partial_k\left( g_{ij}(\varphi_t(a)) \frac{\partial \varphi_t^i(a)}{\partial a^n}
\partial_t\varphi_t^j(a)\right) \xi^{nk} = 0. 
\]
Since the $\xi^{nk}$'s are arbitrary, we obtain
\[
\frac{d}{dt}\partial_k\left( g_{ij}(\varphi_t(a)) \frac{\partial \varphi_t^i(a)}{\partial a^n}
\partial_t\varphi_t^j(a)\right) =0, \quad \forall k,\, n =1,\ldots, d.
\]
Integration in time of these equations leads to
\[
\partial_k\left(v_i(t,\varphi_t(a))\frac{\partial \varphi_t^i(a)}{\partial a^n}\right)
=\partial_{k}v_{0n}, \quad \forall k,\, n =1,\ldots, d. 
\]
Multiplying  these equalities by $da^k\wedge da^n$ and summing over the indices $k$ and $n$, we
obtain
\[
d(v_idx^i)=dv_0^\flat \quad \mbox{i.e.} \quad dv_i\wedge dx^i=\omega_0:=dv_0^\flat,
\]
which ends the proof.
\end{proof}  

\subsection{Conservation of the vorticity $2$-form, directly from Noether's theorem}
\label{sec:CV2FFNT}
As we shall now show, when Noether's theorem is literally applied to the
variational formulation of the Euler equations in conjunction with the
relabelling symmetry, it does not yield the Cauchy invariants but the
conservation (under pullback) of the vorticity $2$-form.

For this, we introduce
the Lagrangian density $\mathcal{L}_K$ associated to the action integral \eqref{action_KE}.
Since by definition we have
\begin{equation}
  \label{action_KE2}
\mathcal{A}_K(\varphi,\Omega)= \int_{0}^T dt \int_\Omega \mu(a)\, \mathcal{L}_K(a,\varphi,\partial \varphi),
\end{equation}
then, from \eqref{action_KE}, we obtain
\begin{equation}
\label{def:LK}  
\mathcal{L}_K(a,\varphi,\partial \varphi)=\frac12
g_{ij}(\varphi_t) \partial_t \varphi_t^i \partial_t \varphi_t^j. 
\end{equation}
In this definition of the Lagrangian density, $\partial \varphi$ denotes any first-order
partial derivative of $\varphi$ with respect to space or time variables.
Let us now define the energy-momentum tensor $T_\beta^\alpha$ by
\begin{equation}
\label{Tem}
T_\beta^\alpha=\frac{\partial\mathcal{L}_K}{\partial(\partial_\alpha \varphi_t^\gamma)}
\partial_\beta \varphi_t^\gamma-\mathcal{L}_K\delta_\beta^\alpha,
\end{equation}
where the contravariant (resp. covariant) index $\alpha$ (resp. $\beta$)
denotes space-time independent variables. The relabelling transformations, as
given in Definition~\ref{def:RelabSym} in Appendix~\ref{sec:Rsym2form}, lead us
to choosing the following functional variations
\[
\delta a, \ \ \mbox{such that}\ \ \nabla_i \delta a^i=0, \ \ \mbox{and}\ \ (\delta a, \nu)=0;
\quad \delta \varphi\equiv0, \quad \delta \partial\varphi\equiv0.
\]
Using these functional variations and the relabelling symmetry (i.e.
invariance of the action integral \eqref{action_KE2} under relabelling transformations), from Noether's
theorem \citep{Hil51, CH66, Lan70, JS98, GPS01, GH04}, 
we obtain the following conservation law
\begin{equation}
\label{conslaw}
\nabla_\alpha T^\alpha = 0  \quad \mbox{where} \quad  T^\alpha= T_i^\alpha \delta a^i.
\end{equation}
More precisely, using \eqref{def:LK}-\eqref{Tem} and the properties
of the Euler flow $\varphi_t$, 
the components of the covariant contraction of the energy-impulsion tensor $T^\alpha$ are
\begin{eqnarray*}
T^t &=&  T_l^\alpha \delta a^l 
    = \frac{\partial}{\partial(\partial_t \varphi_t^i)}
\left(\frac12 g_{mn}(\varphi_t)\partial_t \varphi_t^m\partial_t \varphi_t^n\right)
\partial_l \varphi_t^i \delta a^l
= g_{ij}(\varphi_t)\partial_t\varphi_t^i\partial_l\varphi_t^j \delta a^l,\\
T^{i}&=&-\mathcal{L}_K\delta a^i
= -\frac12 g_{jk}(\varphi_t) \partial_t \varphi_t^j \partial_t \varphi_t^k 
\delta a^i.
\end{eqnarray*}
Using this equality and the boundary condition
$(\delta a,\nu)=0$ (since $\delta a \in \mathfrak{g}$), we obtain the 
boundary condition $(T,\nu)=0$, where $T$ is the vector of components $T^i$. 
Integrating the conservation law \eqref{conslaw}  on $\Omega$ and using the boundary condition
$(T,\nu)=0$,  we obtain
\begin{eqnarray}
\int_\Omega \mu \,\nabla_\alpha T^\alpha &=&
\int_\Omega da\, \sqrt{\mathrm{g}} \,\nabla_\alpha T^\alpha 
= \int_\Omega da\, \partial_t(\sqrt{\mathrm{g}} T^t)
+ \int_\Omega da\, \partial_i(\sqrt{\mathrm{g}} T^i) \nonumber\\
&=& \int_\Omega da\, \partial_t(\sqrt{\mathrm{g}} T^t)
+ \int_{\partial \Omega} d\Gamma \sqrt{\mathrm{g}}\, (T,\nu) 
=\frac{d}{dt}\int_\Omega da\, \sqrt{\mathrm{g}} T^t:=0.
\label{inv_integral}
\end{eqnarray}
We now give details of the calculation of the time integral invariant \eqref{inv_integral}.
For this, we use the property that $\delta a\in \mathfrak{g}$, i.e.
$\nabla_i \delta a^i=0$ and $(\delta a,\nu)= \delta_{ij}\delta a^i \nu^j=0$.
Here, $\delta_{ij}$ is the metric tensor of an Euclidean space with cartesian
coordinates, i.e. $\delta_{ij}=0$ if $i\neq j$ and $\delta_{ij}=1$ if
$i=j$. Such a vector
$\delta a$ can be constructed from a skew-symmetric $2$-contravariant tensor
$\xi^{ij}$, which satisfies the following constraints
\begin{equation}
  \label{constonxi}
\xi^{ij}+\xi^{ji}=0\ \ \mbox{on}\ \ \Omega, \quad
\delta_{ij}\xi^{ik}\nu^j=0\ \ \forall k \ \ \mbox{on}\ \ \partial \Omega, \quad \mbox{and} \quad
\delta_{ij}\xi^{ik}\partial_k\nu^j=0 \ \ \mbox{on}\ \ \partial \Omega.
\end{equation}
Indeed, if we define $\delta a$ by
\begin{equation}
\label{structdeltaeta}
\delta a^i = \frac{1}{\sqrt{\mathrm{g}}} \partial_j \xi^{ij},
\end{equation}
then using \eqref{constonxi} we obtain that $\nabla_i \delta a^i=0$ and
$(\delta a,\nu)= \delta_{ij}\delta a^i \nu^j=0$. We note that a
skew-symmetric $2$-contravariant tensor $\xi^{ij}$, satisfying 
$\xi_{|_{\partial \Omega}}^{ij}=0$, also satisfies the boundary conditions \eqref{constonxi}. 
Using \eqref{constonxi}-\eqref{structdeltaeta}, and an integration by parts in space,
the integral invariant becomes
\begin{eqnarray*}
\int_\Omega da\, \sqrt{\mathrm{g}}T^t&=&\int_\Omega  da\, \sqrt{\mathrm{g}}
g_{ij}(\varphi_t)\partial_t\varphi_t^i\partial_l\varphi_t^j \delta a^l
=\int_\Omega da\,
g_{ij}(\varphi_t)\partial_t\varphi_t^i\partial_l\varphi_t^j \partial_k \xi^{kl}
=\int_\Omega da\, v_j(t,\varphi_t)\partial_l\varphi_t^j \partial_k \xi^{kl}\\
&=&- \int_\Omega da\,
\left\{ \partial_i v_j(t,\varphi_t)\partial_k\varphi_t^i \partial_l\varphi_t^j +
 v_j(t,\varphi_t)\partial_{kl}\varphi_t^j 
 \right\} \xi^{kl}
 + \int_{\partial\Omega} d\Gamma\, v_j(t,\varphi_t)\partial_l\varphi_t^j \xi^{kl}\nu^m\delta_{km}\\
&=&-\frac12\int_\Omega da\,\left(
 \partial_i v_j(t,\varphi_t) - \partial_j v_i(t,\varphi_t)
 \right)\partial_k\varphi_t^i \partial_l\varphi_t^j \xi^{kl}
-\int_\Omega da\, v_j(t,\varphi_t)\partial_{kl}\varphi_t^j \xi^{kl}\\
&=&-\frac12\int_\Omega da\,\left(
 \partial_i v_j(t,\varphi_t) - \partial_j v_i(t,\varphi_t)
 \right)\partial_k\varphi_t^i \partial_l\varphi_t^j \xi^{kl}.
\end{eqnarray*}
Therefore, we obtain
\begin{equation}
\label{eqn_int_inv}
\int_\Omega da\,\xi^{kl}\, \frac{d}{dt} \varphi_t^\ast\omega_{kl} =0,
\end{equation}
where we have defined the components of the vorticity $2$-form $\omega_{kl}$ as 
\begin{equation*}
\omega_{kl}(t,x)= \partial_k v_l(t,x) - \partial_l v_k(t,x).
\end{equation*}
Since the functions $\xi^{kl}$'s are arbitrary and smooth, equality \eqref{eqn_int_inv}
implies
\[
\frac{d}{dt} \varphi_t^\ast\omega_{kl} =0,
\]
which implies
\begin{equation}
\label{finnoether}
 \varphi_t^\ast\omega=\omega_0.
\end{equation}
Here,
\[
\omega(t,x)=\sum_{i<j}\omega_{ij}(t,x)dx^i\wedge dx^j, \ \ \mbox{and}  \ \
\omega_0(a)=\omega(0,a)=\sum_{i<j}\omega_{ij}(0,a)da^i\wedge da^j=
\sum_{i<j}\omega_{0ij}(a)da^i\wedge da^j.
\]
Eq.~\eqref{finnoether} establishes the invariance of the vorticity $2$-form
under pullback.

\subsection{About Cartan's formula}
\label{appendixCF} 
The aim of this appendix it to establish the formula
\begin{equation}
\label{eqn:CFRecast}
d{\rm i}_vv^\flat +{\rm i}_vdv^\flat = (\nabla_v v)^\flat + \frac12 d (v,v)_g.
\end{equation}
First, using definitions of the interior product ${\rm i}_v$ and the exterior
derivative $d$, given 
in Appendix~\ref{ssec:CICF:EDIP}, and the symmetry of the Christoffel symbols in
the definition of the covariant derivative (see Appendix~\ref{ssec:CICF:RCCD}), for a vector field
$X\in\mathcal{T}_0^1(M)$ and a $1$-form $X\in\mathcal{T}_1^0(M)$, we obtain
\begin{equation}
\label{eqn:ixda}
{\rm i}_Xd\alpha = {\rm i}_X\left(\frac12(\partial_i \alpha_j - \partial_j \alpha_i)dx^i\wedge dx^j\right)
=X^j(\partial_j \alpha_i - \partial_i \alpha_j)dx^i=X^j(\nabla_j \alpha_i - \nabla_i \alpha_j)dx^i.
\end{equation} 
Second, using the same properties as for deriving \eqref{eqn:ixda}, we obtain
\begin{equation}
\label{eqn:dixa} 
d{\rm i}_X\alpha= \partial_i(X^j\alpha_j) dx^i=\nabla_i(X^j\alpha_j)dx^i =(X^j\nabla_i\alpha_j+ \alpha_j\nabla_iX^j)dx^i. 
\end{equation} 
Adding \eqref{eqn:ixda} and \eqref{eqn:dixa}, we obtain
\begin{equation*}
\label{eqn:ixda+dixa}
d{\rm i}_X\alpha+ {\rm i}_Xd\alpha = (X^j\nabla_j \alpha_i +   \alpha_j\nabla_iX^j)dx^i = 
(X^j\partial_j \alpha_i +   \alpha_j\partial_iX^j)dx^i=\mathsterling_X \alpha.
\end{equation*}
Using this equation with $X=v$ and $\alpha=v^\flat$, the lowering-raising operators 
and the property $\nabla_i g_{jk}=0$, we obtain
\begin{eqnarray*}
v^j\nabla_j v_i^\flat +  v_j^\flat \nabla_i v^j &=&
 v^j\nabla_j v_i^\flat + v^k g_{jk} \nabla_i v^j =  v^j\nabla_j v_i^\flat +  \frac12 g_{jk} \nabla_i(v^jv^k ) \\
&=&  v^j\nabla_j v_i^\flat +  \frac12 \nabla_i( g_{jk}  v^j v^k) =  v^j\nabla_j v_i^\flat +  \frac12 \partial_i( g_{jk} v^j v^k),
\end{eqnarray*}
which reexpresses \eqref{eqn:CFRecast} in terms of components.
For more details see, e.g., \citet[][Chap.IV, pp. 202--204]{AK98}.

\subsection{Proof of a commutation relation needed for the Lie-advection of the vorticity vector}
\label{appendixCR}
In Section~\ref{ssec:CFVE}, to establish the Lie-advection equation for the
vorticty vector, we have used a result on the commutation of the composition of
the raising operator with the Hodge dual operator and  the Lie
derivative. Here, we give a proof of the commutation relation
$[\sharp^{d-p}\,\star,\mathsterling_v]=0$ with the condition $\nabla_i
v^i=0$. We are also motivated by the observation  that we were not able to  
find a proof in the published literature. 

Let $\omega$
be a $p$-form. Using the definitions of the raising operator (see
Appendix~\ref{ssec:CICF:RM}) and of the Hodge dual operator (see
Appendix~\ref{ssec:CICF:HSECD}), and recognising the determinant of the metric
tensor in the following expression, we obtain
\begin{eqnarray}
\omega^{i_1 \ldots i_{d-p}}=\big([\star \omega]^{\sharp^{d-p}}\big)^{i_1 \ldots i_{d-p}}
&=&   \frac{1}{p!}\sqrt{g}\varepsilon^{j_1\ldots j_pl_1\ldots l_{d-p}} g^{j_1k_1} \ldots g^{j_pk_p} g^{i_1l_1} \ldots g^{i_{d-p}l_{d-p}} \omega_{k_1\ldots k_p} \nonumber\\
&=& \frac{1}{p!}\frac{1}{\sqrt{g}}\varepsilon^{k_1\ldots k_pi_1\ldots i_{d-p}}\omega_{k_1\ldots k_p}.\label{ACR1}
\end{eqnarray}
Using definitions of the Lie derivative (see Appendix~\ref{ssec:CICF:EADF}), of the raising and Hodge star operators,
and using the product rule to reveal the divergence of the vector field $v$ and the term 
$\partial_l \omega^{i_1 \ldots i_{d-p}}$ in the next expression, we obtain
\begin{eqnarray*}
  \big([\star \mathsterling_v\omega]^{\sharp^{d-p}} \big)^{i_1 \ldots i_{d-p}}&=&
  \frac{1}{p!}\frac{1}{\sqrt{g}}\varepsilon^{k_1\ldots k_pi_1\ldots i_{d-p}}
  (v^l\partial_l \omega_{k_1 \ldots k_{p}}+p\omega_{lk_2\ldots k_p}\partial_{k_1}v^l)\nonumber\\ 
  &=&v^l\partial_l \omega^{i_1 \ldots i_{d-p}} + \omega^{i_1 \ldots i_{d-p}}(\nabla_l v^l -\partial_l v^l)
  +\frac{1}{(p-1)!}\frac{1}{\sqrt{g}}\varepsilon^{k_1\ldots k_pi_1\ldots i_{d-p}}\omega_{lk_2\ldots k_p}\partial_{k_1}v^l\nonumber\\
  &=& T_1 + \frac{1}{(p-1)!}\frac{1}{\sqrt{g}}\varepsilon^{k_1\ldots k_pi_1\ldots i_{d-p}}\omega_{lk_2\ldots k_p}\partial_{k_1}v^l = T_1+T_2. \label{ACR2}
\end{eqnarray*}  
Using \eqref{ACR1}, the antisymmetry of $\omega$, and properties of generalised Kronecker symbols (see Appendix~\ref{ssec:CICF:PKD}), we obtain
\begin{eqnarray}
 \frac{(-1)^p\sqrt{g}}{(d-p)!}\varepsilon_{i_1\ldots i_{d-p}j_1\ldots j_{p}}\omega^{i_1\ldots i_{d-p}}
  &=&\frac{1}{p!(d-p)!} \varepsilon_{j_1\ldots j_pi_1\ldots i_{d-p}}\varepsilon^{k_1\ldots k_pi_1\ldots i_{d-p}}\omega_{k_1\ldots k_p}\nonumber\\
  &=&\frac{1}{p!}\delta_{j_1\ldots j_p}^{k_1\ldots k_p}\omega_{k_1\ldots k_p}=\omega_{j_1\ldots j_p}. \label{ACR3}
\end{eqnarray}  
Substituting \eqref{ACR3} in $T_2$, and using  properties of generalised Kronecker symbols, we obtain
\begin{eqnarray}
  T_2 &=& \frac{(-1)^p}{(p-1)!} \frac{1}{(d-p)! }
  \varepsilon^{k_1\ldots k_pi_1\ldots i_{d-p}}\varepsilon_{l_1\ldots l_{d-p} l k_2\ldots k_p}\omega^{l_1\ldots l_{d-p}}\partial_{k_1}v^l\nonumber\\
  &=& \frac{1}{(d-p)!}\delta_{l_1\ldots l_{d-p}l}^{i_1\ldots i_{d-p} k_1}\omega^{l_1\ldots l_{d-p}}\partial_{k_1}v^l\nonumber\\
 &=&\frac{1}{(d-p)!}\sum_{k=1}^{d-p+1}(-1)^{d-p+k+1}\delta_{l_k}^{k_1}\delta_{l_1\ldots \hat{l}_k \ldots l_{d-p+1}}^{i_1\ldots i_{d-p}}
  \omega^{l_1\ldots l_{d-p}}\partial_{k_1}v^{l_{d-p+1}},\label{ACR4}
\end{eqnarray}  
where we have set $l_{d-p+1}:=l$, and where the hat character $\,\hat{}\,$  indicates an index that is omitted from the sequence.
Using the antisymmetry of $\omega$, equation \eqref{ACR4} becomes
\[
T_2=\partial_l v^l\omega^{i_1\ldots i_{d-p}} +
\frac{1}{(d-p)!}\sum_{k=1}^{d-p}(-1)^{d-p+k+1}\delta_{l_1\ldots \hat{l}_k \ldots l_{d-p}l}^{i_1\ldots i_{d-p}}
  \omega^{l_1\ldots l_{d-p}}\partial_{l_k}v^l=T_{21} +T_{22}.
\]
Using properties of generalised Kronecker symbols, the antisymmetry of $\omega$, and relabeling some indices, we obtain
\begin{eqnarray*}
T_{22}&=&-\frac{1}{(d-p)!}\delta_{l_1\ldots l_{d-p}}^{i_1\ldots i_{d-p}}
\sum_{k=1}^{d-p}\omega^{l_1\ldots n_k \hat{l}_k\ldots l_{d-p}}\partial_{n_k}v^{l_k} \\
&=&-\frac{1}{(d-p)!} \sum_{k=1}^{d-p} \left( (-1)^{k+1} \delta_{l_1}^{i_k}\delta_{\hat{l}_1l_2\ldots l_{d-p}}^{i_1\ldots  \hat{i}_k \ldots i_{d-p}}
\partial_{n_1}v^{l_1}\omega^{n_1l_2\ldots l_{d-p}} \ + \ \ldots \ +\right.\\
&& \hspace{2.5cm}\left. (-1)^{k+d-p} \delta_{l_{d-p}}^{i_k}\delta_{l_1\ldots \hat{l}_{d-p}}^{i_1\ldots  \hat{i}_k \ldots i_{d-p}}
\partial_{n_{d-p}}v^{l_{d-p}}\omega^{l_1\ldots l_{d-p-1}n_{d-p}} \right)\\
&=&-\frac{1}{d-p} \sum_{k=1}^{d-p} \left( (-1)^{k+1}\partial_{l}v^{i_k}\omega^{li_1\ldots \hat{i}_k\ldots i_{d-p}} \ +\  \ldots\ +\ 
(-1)^{k+d-p}\partial_{l}v^{i_k}\omega^{i_1\ldots \hat{i}_{k} \ldots i_{d-p}l} \right)\\
&=&-\partial_{l}v^{i_1}\omega^{li_2\ldots i_{d-p}} \ -\ \ldots\ -\ \partial_{l}v^{i_{d-p}}\omega^{i_1\ldots i_{d-p-1}l}.
\end{eqnarray*}
Finally, putting all the terms together, using the condition $\nabla_i v^i=0$,
and remembering the definition of Lie derivative for tensors (see Appendix~\ref{ssec:CICF:LD}), we obtain
\begin{eqnarray*}
  \big([\star \mathsterling_v\omega]^{\sharp^{d-p}} \big)^{i_1 \ldots i_{d-p}}&=& T_1+T_2=T_1 + T_{21} +T_{22}\\
  &=& v^l\partial_l \omega^{i_1 \ldots i_{d-p}} - \partial_{l}v^{i_1}\omega^{li_2\ldots i_{d-p}} - \ldots -\partial_{l}v^{i_{d-p}}\omega^{i_1\ldots i_{d-p-1}l}\\
  &=&\big(\mathsterling_v[\star \omega]^{\sharp^{d-p}} \big)^{i_1 \ldots i_{d-p}},  
\end{eqnarray*}
which ends the proof.

\section{Differential geometry in a nutshell}
\label{sec:CICF} 
In this appendix we recall some notions of differential geometry.
There exist many classical textbooks of differential geometry on
manifolds, for example 
\citet{AMR88, Arn89, CB68, CDD77, DeR84, Fec06, Fla63, Fra12, Hel62, KN63, LR89, Sch80, Spi79, Ste64}.
This appendix is based on textbooks that we find pedagogical for our
intended readership
\citep{AMR88, Arn89, CDD77, DeR84, Fec06, Fra12}, to which we give precise references.

\subsection{Manifolds, tangent and cotangent bundles}
\label{ssec:CICF:Ma}
A manifold is a generalisation of the notion of a smooth surface in
Euclidean space.  The concept of manifold has proved to be useful
because they occur frequently, and not just as subsets embedded in an
Euclidean space. Indeed such a generalisation, eliminating the need for a
containing Euclidean space, makes the construction intrinsic to the manifold itself.
Usually a differentiable (smooth) manifold $M$ of dimension $d$ is
defined through a differentiable parametric representation, called
an atlas, which can be seen as a collection of charts
${(U_i,\phi_i)}_{i\in I}$ such that $M=\cup_{i\in I} U_i$.  A chart
$(U_i,\phi_i)$ is a local subset $U_i \subset M$ and local smooth
bijection $\phi_i$ from $U_i$ to an open subset of Banach space
(typically $\R^d$). The manifold $M$ is then constructed by patching
smoothly such objects together. For a formal definition of a
differentiable manifold we refer the reader to \citet[][Sec. III.A.1,
 pp. 111]{CDD77}, \citet[][Sec. 3.1, pp. 141]{AMR88} and
\citet[][Sec. 1.2c, pp. 19]{Fra12}.

The set of tangent vectors to $M$ at $a\in M$ forms a vector space $TM_a$. This space is called
the tangent space to $M$ at $a$. The union of the tangent spaces to $M$ at the various point of $M$,
i.e. $TM:= \cup_{a\in M}TM_a$, has a natural differentiable manifold structure, the dimension of which
is twice the dimension of $M$. This manifold is called the tangent bundle of $M$ and is denoted by
$TM$. The mapping $\pi:TM\rightarrow M$, which takes a tangent vector $V$ to the point $a\in M$
at which the vector is tangent to $M$ (i.e. $X\in TM_a$), is called the natural projection. The inverse
image of a point $a\in M$ under the natural projection, i.e. $\pi^{-1}(a)$, is the tangent space $TM_a$.
This space is called the fiber of the tangent bundle over the point $a$. A vector field on $M$ is
a (cross-)section of $TM$. A (cross-)section of a vector bundle assigns
to each base point $a\in M$ a vector in the fiber $\pi^{-1}(a)$ over $a$  and the addition and
scalar multiplication of sections takes place within each fiber
\citep[see, e.g.,][Sec. 2.2, pp. 48 and  III.B.3, pp. 132 in \citet{CDD77}]{Fra12}.

As for ordinary vector spaces, one can define the dual of the tangent
bundle, noted $T^\ast M$, which can be constructed through linear
forms, called $1$-forms or cotangent vectors, acting on vectors of the
tangent bundle $TM$.  The cotangent space to $M$ at $a$, noted $T^\ast M_a$, 
is the set of all cotangent vectors to $M$ at $a$. The cotangent
bundle is the union of the cotangent spaces to the manifold $M$ at all its points,
that is $T^\ast M:=\cup_{a\in M} T^\ast M_a$.  The cotangent
bundle $T^\ast M$ has a natural differentiable manifold structure, 
the dimension of which is twice the dimension of $M$.

Finally we introduce the notion of contractible manifolds.
Let $c:[0,1]\rightarrow M$ be a continuous map such that $c(0)=c(1)=a\in M$. We call $c$ a loop in $M$ 
at the point $a$. The loop is called contractible if there is a continuous map $H:[0,1]\times[0,1]\rightarrow M$
such that $H(t,0) = c(t)$ and $H(0,s) = H(1,s) = H(t,1) = a$ for all $t\in [0,1]$. Indeed $c_s(t) = H(t,s)$ has to be
viewed as a family of arcs connecting $c_0 = c$ to $c_1$, a constant arc. Roughly speaking, a loop is contractible when 
it can be shrunk continuously to the point $a$ by loops beginning and ending at $a$. The manifold $M$ is 
contractible to a point $a$, if every loop in $M$, which starts and ends at the point $a$ is contractible. 
In other words the manifold $M$ is contractible if there exists a vector field $u$ on $M$ which generates 
a flow $\eta_t: M\rightarrow M$, with $t\in[0,1]$, that gradually and smoothly shrinks the whole manifold 
$M$ to the point $a$, i.e. $\eta_0 = {\rm Id}_M$ and $\eta_1(x) = a$, $\forall x \in M$,
where the point $a$ is fixed and independent of $x$. For more details see \citet[][Sec. 1.6, pp. 33]{AMR88}
and \citet[][Sec. 9, pp. 192]{Fec06}.

\subsection{Tensors}
\label{ssec:CICF:Te}
Let $\{E_i\}_{i\in\N^*}$, $F$ be finite-dimensional vector spaces.
Let $\mathcal{L}^k(E_1,\ldots,E_k;F)$
be the vector space of continuous $k$-multilinear maps of $E_1\times \ldots \times E_k$ to $F$.
The special case of the linear form on $E$, i.e.  $\mathcal{L}(E,\R)$, is denoted $E^\ast$, the dual space of $E$.
If $\{e_1, . . . , e_d\}$ is an ordered basis of $E$, there
is a unique ordered basis of $E^\ast$, the dual basis $\{e^1, \ldots, e^d\}$, such that
$\langle e^j , e_i \rangle := e^j(e_i)= \delta_i^j$  where $\delta_i^j=1$ if  $i=j$
and $0$ otherwise. Here $\langle \cdot , \cdot \rangle$ denotes the natural pairing between
$E$ and $E^\ast$. Furthermore, for each $v\in E$,  $v=\langle e^i , v \rangle e_i$ and
for each and $\alpha \in E^\ast$, $\alpha=\langle \alpha , e_i \rangle e^i$. 

For a vector space $E$ we define
\[
\mathrm{T}_p^q(E) = \mathcal{L}^{q+p}(E^\ast,\ldots ,E^\ast,E, \ldots,E;\R),
\]
($q$ copies of $E^\ast$ and $p$ copies of $E$). Elements of $\mathrm{T}_p^q (E)$ are called tensors on $E$,
contravariant of order $q$ and covariant of order $p$; or simply of type $(q,p)$. Given
$ \Uptheta_1\in \mathrm{T}_{p_1}^{q_1} (E)$ and $ \Uptheta_2\in \mathrm{T}_{p_2}^{q_2} (E)$, the tensor product of
$ \Uptheta_1$ and $ \Uptheta_2$ is the tensor $ \Uptheta_1\otimes  \Uptheta_2\in \mathrm{T}_{p_1+p_2}^{q_1+q_2} (E)$ defined by
\begin{multline*}
( \Uptheta_1 \otimes  \Uptheta_2)(\alpha^1, \ldots, \alpha^{q_1}, \beta^1, \ldots, \beta^{q_2}, f_1,
\ldots, f_{p_1}, g_1, \ldots, g_{p_2})\\=  \Uptheta_1(\alpha^1, \ldots, \alpha^{q_1}, f_1,
\ldots, f_{p_1}) \Uptheta_2(\beta^1, \ldots, \beta^{q_2}, g_1, \ldots, g_{p_2}),
\end{multline*}
where $\alpha^j,\ \beta^j \in E^\ast$, and $f_j,\, g_j \in E$. The natural basis of $\mathrm{T}_p^q(E)$
of dimension $d^{p+q}$ is given by
\[
\{e_{i_1}\otimes\ldots\otimes e_{i_q}\otimes e^{j_1}\otimes\ldots\otimes e^{j_p}\ | \
{i_1}, \ldots,{i_q}, {j_1}, \ldots {j_p}=1, \ldots, d
\}.
\]
In this basis any tensor $ \Uptheta\in \mathrm{T}_p^q(E)$ reads
\[
 \Uptheta= \Uptheta_{j_1\ldots j_p}^{i_1\ldots i_q}e_{i_1}\otimes\ldots\otimes e_{i_q}\otimes e^{j_1}\otimes\ldots\otimes e^{j_p},
\]
where the components of $ \Uptheta$ are given by
\[
 \Uptheta_{j_1\ldots j_p}^{i_1,\ldots i_q}= \Uptheta(
e^{i_1}, \ldots, e^{i_q}, e_{j_1}, \ldots, e_{j_p}
).
\]
We refer to \citet[][Sec. 5.1, pp. 341]{AMR88} for the definition of standard operations
(linear combination, contraction, contracted product, interior product, change of basis formula,
tensoriality criterion, ...) on tensors.

Let $M$ be a manifold and $TM$ its tangent bundle. We call
$\mathrm{T}_p^q(M):=\mathrm{T}_p^q(TM)=\cup_{a\in M}\mathrm{T}_p^q(TM_a)$
the vector bundle of tensors contravariant of order $q$ and covariant of order $p$, or simply
of type $(q,p)$. We identify $\mathrm{T}_0^1(M)$ with the tangent bundle $TM$ and  call
$\mathrm{T}_1^0(M)$ the cotangent bundle of $M$, also denoted $T^\ast M$ (i.e. the set of linear forms
on $TM$). The zero section
of $\mathrm{T}_p^q(M)$ is identified with $M$. Recall that a section of a vector bundle assigns
to each base point $a\in M$ a vector in the fiber $\pi^{-1}(a)$ over $a$  and the addition and
scalar multiplication of sections takes place within each fiber. In the
case of $\mathrm{T}_p^q(M)$ these vectors are called tensors. The $\mathscr{C}^\infty$ sections
of $E$ are denoted by $\Gamma^\infty(E)$. Recall that a vector field on $M$ is a $\mathscr{C}^\infty$ section
of $TM$, i.e. an element of  $\Gamma^\infty(TM)$. Therefore a tensor field of type $(q,p)$ on
a manifold $M$ is a $\mathscr{C}^\infty$ section of $\mathrm{T}_p^q(M)$. We denote by $\mathcal{T}_p^q(M)$
the set  $\Gamma^\infty(\mathrm{T}_p^q(M))$. A covector field or a differential $1$-form is an
element of $\mathcal{T}_1^0(M)$.

For the tangent bundle $TM$, a natural chart is obtained by taking the
vector bundle (or tangent) map $T\phi:TM\rightarrow T\R^d=\R^d$, where
$\phi$ is an admissible chart of $M$.
This in turn induces a tensor bundle map
$(T\phi)_\ast:\mathrm{T}_p^q(M)\rightarrow \mathrm{T}_p^q(\R^d)$,
which constitutes a natural chart  on $\mathrm{T}_p^q(M)$. Indeed
let  $\phi: U\ni a \rightarrow x=\phi(a) \in U'\subset \R^d$ a chart on $M$.
Let $\{e_i\}_{1\leq i\leq d}$ (resp. $\{e^i\}_{1\leq i\leq d}$)
be a (resp. dual) basis of $\R_x^d$. Then 
$\partial/\partial {a^i}=\phi^\ast e_i =(T\phi)^{-1}\circ e_i \circ \phi = (\partial a^j/\partial x^i) e_j$ is
a basis of  $\mathcal{T}_0^1(U)$. The vector field $\partial/\partial {a^i}$ corresponds to the differentiation
$f \mapsto \partial f/\partial {a^i}$. 
In the same way the $1$-forms $da^i=\phi^\ast e^i=(\partial x^i/\partial a^j) e^j$
is a basis of $\mathcal{T}_1^0(U)$. Since
\[
\langle da^i, \partial/ \partial{a^j}\rangle :=
da^i(\partial/ \partial{a^j})=\frac{\partial x^i}{\partial a^l}e^l\left(\frac{\partial a^k}{\partial x^j}e_k \right)=
\frac{\partial x^i}{\partial a^l}\frac{\partial a^k}{\partial x^j}e^l(e_k)=
\frac{\partial x^i}{\partial a^l}\frac{\partial a^k}{\partial x^j}\delta_k^l=\frac{\partial x^i}{\partial x^j}=\delta_j^i,
\]
$\{da^i\}_i$ is the dual basis of $\{\partial/\partial {a^i}\}_i$ at every point of $U$. Let
\[
 \Uptheta_{j_1\ldots j_p}^{i_1\ldots i_q}= \Uptheta(da^{i_1}, \ldots, da^{i_q},\partial/ \partial {a^{j_1}},
\ldots, \partial/\partial {a^{j_p}}) \in \mathcal{F}(U),
\]
where $\mathcal{F}(U)$ is the set of mappings from $U$ into $\R$ that are of class $\mathscr{C}^\infty$.
Then at every point $a$ of $U$ the coordinate expression of a $(q,p)$-tensor field $ \Uptheta\in \mathcal{T}_p^q(M)$ is
\[
 \Uptheta_{|_{U}}= \Uptheta_{j_1\ldots j_p}^{i_1\ldots i_q}(a)\frac{\partial}{\partial {a^{i_1}}} \otimes\ldots\otimes \frac{\partial}{\partial{a^{i_q}}}
\otimes da^{j_1}\otimes \ldots \otimes da^{j_p}.
\]
For more details  see \citet[][Sec. 5.2, pp. 352]{AMR88},
\citet[][Sec. 2.5, pp. 47]{Fec06} and
\citet[][Sec. III.B.1, pp. 117 and Sec. III.B.4, pp. 135]{CDD77}.

\subsection{Riemannian manifolds}
\label{ssec:CICF:RM}
Sometimes when dealing with manifolds it is useful to quantify
geometric notions such as length, angles and volumes.  All such
quantities are expressed by means of the lengths of tangent vectors,
that is, as the square root of a positive definite quadratic form given
on every tangent space.

A Riemannian manifold
is a differentiable manifold $M$ together with a differentiable $2$-covariant tensor field
$g\in \mathcal{T}_2^0(M)$, called the metric tensor, such that: i) $g$ is symmetric, ii) for each $a\in M$, the 
bilinear form ${g_a}$ (this notation emphasises that $g$ is evaluated in $a$) is non-degenerate,
i.e. $g_a(v,w)=0$ for all $v\in TM_a$ if and only if $w=0$. Such a manifold is said to possess
a Riemannian structure. A Riemannian manifold (Riemannian structure) is called proper if
$g_a$ is a positive definite quadratic form  on every tangent space, i.e. 
$g_a(v,v)>0, \quad \forall v \in TM_a, \ \ v\neq 0, \ \ a\in M$. Otherwise the manifold is
called pseudo-Riemannian or is said to possess an indefinite metric. The tensor $g$ allows one
to define a metric on $M$ for measuring distances between two points on $M$. The Riemannian
metric is given by the infinitesimal line element $ds^2$ which is defined by the metric tensor $g$:
\[
ds^2=g=g_{ij}da^ida^j=g_{ij}(a)da^i\otimes da^j.
\]
The tensor $g$ endows each tangent vector space $TM_a$ with an inner or scalar product, 
$(\cdot, \cdot)_{g_a}$ called
also Riemannian metric and defined by: $\forall a \in M$
\[
\begin{tabular}{rcll}
$(\cdot,\cdot)_{g_a} :$&$TM_a \ \ \times \  \ TM_a$&$\rightarrow $&$\R $   \\ 
  &$\left(v=v^i\frac{\partial}{\partial a^i},\ w=w^i\frac{\partial}{\partial a^i}\right)$
  &$ \mapsto   $&$ (v,w)_{g_a}=g_{ij}(a)v^i(a)w^j(a)$,
\end{tabular}
\]
where the notation $(\cdot,\cdot)_{g_a}$ is to emphasise that the quadratic form is local, i.e. 
evaluated at the point $a\in M$; but most of the time it is omitted to simplify the notation into
 $(\cdot,\cdot)_{g}$. 
The components of $g$  are differentiable on $M$ and are given by
\[
g_{ij}(a)=\left(\frac{\partial}{\partial a^i},\frac{\partial}{\partial a^j}\right)_{g_a}
=\frac{\partial a^k}{\partial x^i}\frac{\partial a^l}{\partial x^j}(e_k,e_l).
\]
where $(\cdot,\cdot)$ denotes the usual scalar product in the Euclidean space, i.e. induced
by the constant diagonal metric $\delta_{ij}$, with unity on the diagonal. 
Therefore, using the inner product $(\cdot, \cdot)_{g}$, we get an isomorphism between
the tangent bundle $TM$ and the cotangent bundle $T^\ast M$. In particular, it induces 
an isomorphism of the spaces of sections, which is called the raising operator
 $(\cdot)^\sharp: \mathcal{T}_1^0(M) \rightarrow  \mathcal{T}_0^1(M) $, with its inverse, named
the lowering operator  $(\cdot)^\sharp: \mathcal{T}_0^1(M) \rightarrow  \mathcal{T}_1^0(M)$.
More precisely, such operators are defined by  
\[
\begin{tabular}{rll}
$(\cdot)^\sharp : \mathcal{T}_1^0(M)  $ &$  \rightarrow  $& $\mathcal{T}_0^1(M) $   \\ 
  $ \alpha $& $ \mapsto $  &$\alpha^\sharp=( \alpha_ida^i)^\sharp=(\alpha^\sharp)^i\frac{\partial}{\partial a^i},
  \ \ (\alpha^\sharp)^i=g^{ij}\alpha_j$,
\end{tabular}
\]
\[
\begin{tabular}{rll}
$(\cdot)^\flat : \mathcal{T}_0^1(M)  $ &$\rightarrow  $& $\mathcal{T}_1^0(M) $   \\ 
  $ v $& $ \mapsto $   &$ v^\flat=\left(v^i \frac{\partial}{\partial a^i}\right)^\flat= (v^\flat)_i da^i,
  \ \ (v^\flat)_i=g_{ij}v^j$,
\end{tabular}
\]
where $g_{ik}g^{kj}=\delta_{i}^j$. For more details we refer the reader to \citet[][Sec. V.A.1, pp. 285]{CDD77}.

\subsection{Pullback and pushforward}
\label{ssec:CICF:PP}
Let $M$, $N$ and $P$ be differentiable manifolds. Let $\varphi: M\ni a\rightarrow x=\varphi(a)\in N$ and $\psi: N\rightarrow P$
be diffeomorphisms. The pullback of $ \Uptheta\in \mathcal{T}_p^0(N)$ by $\varphi$ is defined by
\[
(\varphi^\ast  \Uptheta)(a)(v_1,\ldots,v_p)= \Uptheta(\varphi(a)) (T_a\varphi (v_1), \ldots, T_a\varphi(v_p)),
\]
for all $a\in M$, and $v_1,\ldots, v_p \in TM_a$. The map $T_a\varphi:TM_a\rightarrow TN_{x=\varphi(a)}$ is the tangent map
of $\varphi$ at $a\in M$, i.e. the Jacobian matrix $J_{\varphi}(a)=J(\varphi)(a)=(\partial \varphi /\partial a)(a)$.
The pullback $\varphi^\ast: \mathcal{T}_p^0(N) \rightarrow  \mathcal{T}_p^0(M)$ is a linear isomorphism, which
satisfies $\varphi^\ast( \Uptheta_1\otimes \Uptheta_2)=(\varphi^\ast \Uptheta_1)\otimes(\varphi^\ast \Uptheta_2)$ for any
$ \Uptheta_1\in \mathcal{T}_{p_1}^0(N)$ and $ \Uptheta_2\in
\mathcal{T}_{p_2}^0(N)$. The pullback, applied to the composition of two
maps, $\psi\circ\varphi$, satisfies the following rule: $(\psi\circ\varphi)^\ast=\varphi^\ast  \psi^\ast$. Since $\varphi$ is
a diffeomorphism, $\varphi^\ast$ is an isomorphism with inverse $(\varphi^\ast)^{-1}:=(\varphi^{-1})^{\ast}$.

The pushforward of $ \Uptheta\in \mathcal{T}_p^q(M)$ by $\varphi$ is defined by
\[
(\varphi_\ast  \Uptheta)(x)(\alpha^1,\ldots,\alpha^q,f_1,\ldots,f_p)
= \Uptheta(\varphi^{-1}(x))(\varphi^\ast \alpha^1,\ldots,\alpha^q,(T\varphi)^{-1}(f_1),\ldots,(T\varphi)^{-1}(f_p)),
\]
where $\alpha^i\in T^\ast N_x$ and $f_i\in TN_x$. Using the tensor bundle map
$(T\varphi)_\ast:\mathrm{T}_p^q(M)\rightarrow \mathrm{T}_p^q(N)$, the pushforward can be written
in compact form as $\varphi_\ast  \Uptheta:=(T\varphi)_\ast \circ  \Uptheta \circ \varphi^{-1}$.
The pushforward $\varphi_\ast: \mathcal{T}_p^q(M) \rightarrow  \mathcal{T}_p^q(N)$ is a linear isomorphism, which
satisfies $\varphi_\ast( \Uptheta_1\otimes \Uptheta_2)=(\varphi_\ast \Uptheta_1)\otimes(\varphi_\ast \Uptheta_2)$ for any
$ \Uptheta_1\in \mathcal{T}_{p_1}^{q_1}(M)$ and $ \Uptheta_2\in \mathcal{T}_{p_2}^{q_2}(M)$. The pushforward of map composition
verifies the following rule: $(\psi\circ\varphi)_\ast=\psi_\ast  \varphi_\ast$. Since $\varphi$ is
a diffeomorphism, $\varphi^\ast$ is an isomorphism with inverse $(\varphi_\ast)^{-1}:=(\varphi^{-1})_{\ast}$.
The pullback  of $ \Uptheta\in \mathcal{T}_p^q(N)$ by $\varphi$ is given by $\varphi^\ast \Uptheta=(\varphi^{-1})_\ast  \Uptheta$.
In other words we have $\varphi^\ast=(\varphi_\ast)^{-1}=(\varphi^{-1})_\ast$ and $\varphi_\ast=(\varphi^\ast)^{-1}=(\varphi^{-1})^\ast$.

For finite-dimensional manifolds, pullback and pushforward can be expressed in terms of coordinates. Setting
$m={\rm dim}(M)$ and $n={\rm dim}(N)$, the maps $x^j=\varphi^j(a^1,\ldots,a^m)$, with $j=1,\ldots,n$ denote
the local expression of the diffeomorphism $\varphi:M\rightarrow N$ relative to charts.
Taking into account that the tangent map $T\varphi$ of $\varphi$ is given locally by the
Jacobian matrix $J_\varphi=J(\varphi)=(\partial\varphi /\partial a)$, we obtain the following coordinate expressions
of the pushforward and the pullback.\\
If $ \Uptheta \in  \mathcal{T}_p^q(M)$ and $\varphi$ a diffeomorphism, the coordinates of the
  pushforward of $\varphi_\ast  \Uptheta$ are
  \begin{equation}
    \label{DefOfPushForward}
  (\varphi_\ast  \Uptheta)_{j_1\ldots j_p}^{i_1\ldots i_q}= \left(\frac{\partial x^{i_1}}{\partial a^{k_1}}\circ\varphi^{-1}\right)
  \ldots \left(\frac{\partial x^{i_q}}{\partial a^{k_q}}\circ\varphi^{-1}\right)
  \frac{\partial a^{l_1}}{\partial x^{j_1}} \ldots \frac{\partial a^{l_p}}{\partial x^{j_p}}
    \Uptheta_{l_1\ldots l_p}^{k_1\ldots k_q} \circ \varphi^{-1}.
   \end{equation}
If $ \Uptheta \in  \mathcal{T}_p^q(N)$ and $\varphi$ is a diffeomorphism, the coordinates of the
  pullback of $\varphi^\ast  \Uptheta$ are
  \[
  (\varphi^\ast  \Uptheta)_{j_1\ldots j_p}^{i_1\ldots i_q}= \left(\frac{\partial a^{i_1}}{\partial x^{l_1}}\circ\varphi\right)
  \ldots \left(\frac{\partial a^{i_q}}{\partial a^{l_q}}\circ\varphi\right)
  \frac{\partial x^{k_1}}{\partial a^{j_1}} \ldots \frac{\partial x^{k_p}}{\partial a^{j_p}}
    \Uptheta_{k_1\ldots k_p}^{l_1\ldots l_q} \circ \varphi.
   \]
In particular, if $ \Uptheta \in  \mathcal{T}_p^0(N)$ the coordinates of the
  pullback of $\varphi^\ast  \Uptheta$ are
  \[
  (\varphi^\ast \Uptheta)_{j_1\ldots j_p}= 
  \frac{\partial x^{k_1}}{\partial a^{j_1}} \ldots \frac{\partial x^{k_p}}{\partial a^{j_p}}
    \Uptheta_{k_1\ldots k_p}\circ \varphi.
   \]
   If $v=v^i(\partial/\partial a^i) \in  \mathcal{T}_0^1(M)$
   (resp.  $\alpha=\alpha_id a^i \in  \mathcal{T}_1^0(M)$) then
   $\varphi_\ast(v^i(\partial/\partial a^i))=v^j(\partial x^i/\partial a^j) (\partial/\partial x^i)$
   (resp.  $\varphi^\ast(\alpha_i d a^i)=\alpha_j(\partial x^j/\partial a^i) d a^i$).
   Therefore, using the map $g:N\rightarrow \R$, we obtain
   \[
   (\varphi_\ast v)g
   = \left(v^j\frac{\partial x^i}{\partial a^j}\frac{\partial }{\partial x^i}\right)g
   =v^j\frac{\partial x^i}{\partial a^j}\frac{\partial g}{\partial x^i}
   = v^j\frac{\partial}{\partial a^j}g(\varphi(a))
   =\left(v^j\frac{\partial}{\partial a^j}\right)\varphi^\ast g= v(\varphi^\ast g).
   \] 
From the above formula we see that the pullback of covariant tensors can be defined even
for maps that are not diffeomorphisms but only differentiable maps, i.e.  of class $\mathscr{C}^1$
\citep[see, e.g.,][Sec. 5.2, pp. 355; see also Sec. 3.1, pp. 54 in \citet{Fec06}]{AMR88}.

\subsection{Lie derivative}
\label{ssec:CICF:LD}

Concepts of Lie derivative and Lie advection have been presented in Sec.~\ref{sec:CLT}, where
the Lie-derivative theorem has also been stated. Here we give additional properties of
the Lie differentiation process.

From an algebraic point of view, the local coordinate expression of 
the Lie derivative $\mathsterling_{v}:\mathcal{T}_p^q(M) \rightarrow \mathcal{T}_p^q(M)$ of an
arbitrary tensor $ \Uptheta\in \mathcal{T}_p^q(M)$ is
\citep[see, e.g.,][Sec. 5.3, pp. 359; see also Sec. 4.3, pp. 72 in \citet{Fec06}]{AMR88}
\[
\mathsterling_{v} \Uptheta=(\mathsterling_{v} \Uptheta)_{j_1\ldots j_p}^{i_1\ldots i_q}
\frac{\partial}{\partial {a^{i_1}}} \otimes\ldots\otimes \frac{\partial}{\partial{a^{i_q}}}
\otimes da^{j_1}\otimes \ldots \otimes da^{j_p},
\]
where
\[
(\mathsterling_{v} \Uptheta)_{j_1\ldots j_p}^{i_1\ldots i_q}=\, v^\ell\partial_\ell \Uptheta_{j_1\ldots j_p}^{i_1\ldots i_q} \quad 
- \  \Uptheta_{j_1\ldots j_p}^{ki_2\ldots i_q}\partial_k v^{i_1} \, -\,  (\mbox{all upper indices})
\quad +\  \Uptheta_{l j_2\ldots j_p}^{i_1\ldots i_q}\partial_{j_1} v^{l} \, +\,  (\mbox{all lower indices}).
\]
Moreover the Lie derivative is a linear operator, a derivation (i.e. it satisfies the Leibniz rule):
\[
 \mathsterling_{v}(\omega + \lambda \theta)=\mathsterling_{v}\omega +
 \lambda \mathsterling_{v} \theta, \quad \mathsterling_{v}(\omega
 \otimes \gamma)= \mathsterling_{v}\omega \otimes \gamma +
 \omega\otimes \mathsterling_{v}\gamma, \quad\lambda\in \R,\,\, v \in \mathcal{T}_0^1(M), \,\,\gamma \in \mathcal{T}_s^r(M) ,\,\, \theta, \, \omega \in \mathcal{T}_p^q(M).
\] 
Furthermore, the Lie derivative  is natural with respect to the pushforward and pullback by any
diffeomorphism $\varphi:M\rightarrow N$, in the following sense
\[
\varphi_{t\ast}
 \mathsterling_{v} = \mathsterling_{\varphi_{t\ast} v}\varphi_{t\ast},
 \quad \varphi_{t}^{\ast} \mathsterling_{v} =
 \mathsterling_{\varphi_{t}^{\ast} v}\varphi_{t}^{\ast}.
\]
\subsection{Permutations, generalised Kronecker symbols and determinants}
\label{ssec:CICF:PKD}
The set $\mathfrak{S}_k$ is the permutation group on $k$ elements,
which consists of all bijections $\sigma:\{1,\ldots,k\}\rightarrow\{1,\ldots,k\} $,
usually given in the form a table
\[
\left(
\begin{tabular}{ccc}
$1$ &$\ldots $ & $k$\\
$\sigma(1)$ & $\ldots$ & $\sigma(k)$
\end{tabular}
\right),
\]
with the structure of a group under composition of maps. A transposition is a
permutation which swaps two elements of $\{1,\ldots,k\}$. 
A permutation is even (resp. odd) when it can be written as the product 
of an even (resp. odd) number of transpositions. When a permutation is even (resp.
odd) $\mathrm{sign}\, \sigma =+1$ (resp. $\mathrm{sign}\, \sigma =-1$) and 
$\mathrm{sign}(\sigma \circ  \tau)=(\mathrm{sign}\, \sigma)(\mathrm{sign}\,  \tau)$.
The dimension of $\mathfrak{S}_k$ is ${\rm dim}(\mathfrak{S}_k)=k!$. 

Let $\delta_{ij}$, $\delta_j^i$ and $\delta^{ij}$ be the first Kronecker symbols
defined by
\[
\delta_{ij}=\delta_j^i=\delta^{ij}=\left\{
\begin{tabular}{lll}
  $0$ & $\mbox{if}$ & $i\neq j$ \\
  $1$ & $\mbox{if}$ & $i = j$.
\end{tabular}  
\right.
\]
The generalised Kronecker symbol $\delta_{j_1\ldots j_p}^{i_1\ldots i_p}$ (also noted $\varepsilon_{j_1\ldots j_p}^{i_1\ldots i_p}$)
is defined by
\[
\delta_{j_1\ldots j_p}^{i_1\ldots i_p}=\left\{
\begin{tabular}{rl}
  $0$ & $\mbox{if } (i_1 \ldots i_p) \mbox{ is not a permutation of }  (j_1 \ldots j_p)$ \\
  $+1$ &  $\mbox{if } (i_1 \ldots i_p) \mbox{ is an even permutation of }  (j_1 \ldots j_p)$ \\
  $-1$ &  $\mbox{if } (i_1 \ldots i_p) \mbox{ is an odd permutation of }  (j_1 \ldots j_p)$.
\end{tabular}  
\right.
\]
Using the Laplace
expansion of determinant, the generalised Kronecker symbol $\delta_{j_1\ldots j_p}^{i_1\ldots i_p}$
can be recast in different forms:
\begin{eqnarray*}
\delta_{i_1\ldots i_{p}}^{j_1\ldots j_{p}}&=&
\left |
\begin{tabular}{lll}
  $\delta_{i_1}^{j_1}$ & $\ldots $ & $ \delta_{i_p}^{j_1} $\\
  $\vdots$ & $\ddots$ & $\vdots$\\
   $\delta_{i_1}^{j_p}$ &$\ldots $ &  $ \delta_{i_p}^{j_p} $
\end{tabular}
\right|  
  =\sum_{k=1}^p(-1)^{p+k}
  \delta_{i_k}^{j_p}\delta_{i_1\ldots \widehat{i}_k \ldots i_{p}}^{j_1\ldots j_{k}\ldots\widehat{j}_p},\\
  &=&\sum_{\sigma \in \mathfrak{S}_p}{\rm sign}(\sigma) \delta_{j_{\sigma(1)}}^{i_1}\ldots \delta_{j_{\sigma(p)}}^{i_p}
  =\sum_{\sigma \in \mathfrak{S}_p}{\rm sign}(\sigma) \delta_{j_1}^{i_{\sigma(1)}}\ldots \delta_{j_p}^{i_{\sigma(p)}},
\end{eqnarray*}
where the hat character $\,\hat{}\,$ indicates an index that is omitted from the sequence.
Moreover, the generalised Kronecker symbol $\delta_{j_1\ldots j_p}^{i_1\ldots i_p}$ satisfies the properties \citep[][Sec. 5.6, pp. 107]{Fec06}
\[
\frac{1}{p!} \delta_{k_1\ldots k_p}^{i_1\ldots i_p}\delta_{j_1\ldots j_p}^{k_1\ldots k_p}=\delta_{j_1\ldots j_p}^{i_1\ldots i_p}, \quad
\mbox{and} \quad 
\delta_{j_1\ldots j_p i_{p+1}\ldots i_{q}}^{i_1\ldots i_p i_{p+1}\ldots i_{q}}=\frac{(d-p)!}{(d-q)!}\delta_{j_1\ldots j_p}^{i_1\ldots i_p}.
\]
We also define the second Kronecker symbols $\varepsilon_{j_1\ldots j_p}$ and  $\varepsilon^{i_1\ldots i_p}$ by
\[
\varepsilon_{j_1\ldots j_p}=\delta_{j_1\ldots j_p}^{1\ldots p}, \quad 
\varepsilon^{i_1\ldots i_p}=\delta_{1\ldots p}^{i_1\ldots i_p} \ \
\mbox{and thus } \ \
\delta_{j_1\ldots j_p}^{i_1\ldots i_p}=\frac{1}{(d-p)!}\varepsilon^{i_1\ldots i_p k_{p+1}\ldots k_{d}}\varepsilon_{j_1\ldots j_p k_{p+1}\ldots k_{d}}.
\]
Finally let $d={\rm dim}(M)$, and $\varphi:M\rightarrow M$  be of class $\mathscr{C}^1$.
The determinant of the linear mapping (tangent map at the point $a$) $T_a\varphi:TM_a\rightarrow TM_a$, 
is noted $\mathrm{det}(T_a\varphi)=\mathrm{det}(\partial \varphi/\partial a)$ and is given by
\begin{multline*}
\mathrm{det}(T_a\varphi)
= \sum_{\sigma \in \mathfrak{S}_d}{\rm sign}(\sigma)  \frac{\partial \varphi^1}{\partial a^{\sigma(1)}}
\ldots  \frac{\partial \varphi^d}{\partial a^{\sigma(d)}}= 
\sum_{\sigma \in \mathfrak{S}_d}{\rm sign}(\sigma)  \frac{\partial \varphi^{\sigma(1)}}{\partial a^{1}}
\ldots  \frac{\partial \varphi^{\sigma(d)}}{\partial a^d}
= \varepsilon^{i_1\ldots i_d}  \frac{\partial\varphi^1}{\partial a^{i_1}}
\ldots  \frac{\partial\varphi^d}{\partial a^{i_d}}\\
=\varepsilon_{j_1\ldots j_d}  \frac{\partial\varphi^{j_1}}{\partial a^{1}}
\ldots  \frac{\partial\varphi^{j_d}}{\partial a^{d}}
=\frac{1}{d!}\varepsilon^{i_1\ldots i_d} \varepsilon_{i_1\ldots i_d} \mathrm{det}(T_a\varphi)
=\frac{1}{d!}\varepsilon^{i_1\ldots i_d} \varepsilon_{j_1\ldots j_d}  \frac{\partial\varphi^{j_1}}{\partial a^{i_1}}
\ldots  \frac{\partial\varphi^{j_d}}{\partial a^{i_d}}
=\frac{1}{d!}\delta^{i_1\ldots i_d}_{j_1\ldots j_d}  \frac{\partial\varphi^{j_1}}{\partial a^{i_1}}
\ldots  \frac{\partial\varphi^{j_d}}{\partial a^{i_d}}.
\end{multline*}

The inverse matrix components of an invertible matrix $A$ is given by $(A^{-1})_i^j=({\rm det}(A)^{-1})\Delta_i^j $,
where $\Delta_i^j$ is the $(i,j)$th minor, i.e. the determinant of a matrix which it is obtained from $A$
when the $i$th row and $j$th column are deleted. In other words we have ${\rm det}(A)\delta_k^j= A_i^j\Delta_k^i=\Delta_i^jA_k^i$.
Therefore, we obtain
\[
\frac{\partial\,{\rm det}(A)}{\partial A_i^j}=\Delta_j^i={\rm det}(A)(A^{-1})_j^i
\quad \mbox{so that} \quad d({\rm det}(A))={\rm det}(A) {\rm Tr}(A^{-1}dA),
\]
where ${\rm Tr(A)}$ denotes the trace of $A$, i.e. $\sum_i A_i^i$. Now, we consider the metric
tensor $g\in \mathcal{T}_2^0$ which can be identified to a matrix. We define the minor $a_{ij}:={\rm g} g^{ij}$,
with ${\rm g}=\sqrt{{\rm det}(g_{ij})}$. It then follows that
the differential of the determinant ${\rm g}$ is  $d{\rm g}=a_{ij}dg_{ij}={\rm g} g^{ij}dg_{ij}$. 
Furthermore, using partial derivatives, the differential of  ${\rm g}$ is 
$d{\rm g}=\partial_k {\rm g}da^k={\rm g} g^{ij}\partial_k g_{ij}da^k$, from which we infer by identification that
\[
\partial_k{\rm g}={\rm g}g^{ij}\partial_kg_{ij}=-{\rm g}g_{ij}\partial_kg^{ij}.
\]

\subsection{Exterior algebra and differential forms}
\label{ssec:CICF:EADF}
Let $E$ be a finite-dimensional vector space.
The space $\bigwedge^p(E)$, is the subspace of all
skew symmetric elements of $\mathcal{L}^p(E)$ or $\mathrm{T}_p^0(E)$,
i.e. all antisymmetric covariant $p$-tensors on $E$. An element of  $\bigwedge^p(E)$
is called an exterior $p$-form. The exterior product $\wedge$ (wedge or Grassmann product)
of a $p$-form and a $q$-form is a mapping
\[
\begin{tabular}{rclll}
$\wedge :$ &$\bigwedge^p(E) \ \times \ \bigwedge^q(E)$  &$  \rightarrow  $& $\bigwedge^{p+q}(E)$   \\ 
 & $(\alpha, \ \beta )$& $ \mapsto $   & $\alpha\wedge \beta$
\end{tabular}
\]
with $\alpha\wedge \beta $ defined by
\[
(\alpha\wedge \beta) (v_1,\ldots,v_{p+q})=\frac{1}{p!q!}\sum_{\sigma \in \mathfrak{S}_{p+q}}  
\mathrm{sign}(\sigma) \alpha(v_{\sigma(1)},\ldots,v_{\sigma(p)})
\beta(v_{\sigma(p+1)},\ldots,v_{\sigma(p+q)}),
\]
where $v_i\in E$ and $\mathfrak{S}_p$ is the permutation group on $p$ elements. 
Componentwise it is defined as
\[
(\alpha\wedge \beta)_{i_1\ldots i_{p+q}}=\frac{1}{p!q!}\delta_{i_1\ldots i_{p+q}}^{j_1\ldots j_pk_1\ldots k_q}
\alpha_{j_1\ldots j_{p}} \beta_{k_1\ldots k_{q}}.
\]
In particular, if $\alpha$ and $\beta$ are $1$-forms then $\alpha
\wedge \beta=\alpha \otimes \beta - \beta \otimes \alpha$. It follows
from the definition that the exterior product is $i)$ associative:
$(\alpha\wedge \beta) \wedge \gamma = \alpha\wedge (\beta\wedge
\gamma)$, $ii)$ bilinear: $\alpha\wedge (\beta +\gamma) = \alpha\wedge
\beta +\alpha\wedge \gamma$ and $\lambda (\alpha\wedge
\beta)=\lambda\alpha\wedge \beta=\alpha\wedge \lambda\beta$, with
$\lambda\in \R$, $iii)$ not commutative in general: $\alpha\wedge
\beta=(-1)^{pq}\beta\wedge \alpha$ if $\alpha\in\bigwedge^p(E)$,
$\beta\in\bigwedge^q(E)$. From the property $iii)$ it follows that
$(\wedge\, \alpha)^k$ is identically zero if the degree of $\alpha$ is
odd; but not otherwise. If $E$ is finite dimensional with $d={\rm
  dim}(E)$, then for $p>d$, $\bigwedge^p(E)=\{ 0\}$. Indeed the only
non zero components of a totally antisymmetric covariant $p$-tensor
are those in which all indices are different, a situation which can
never exist if $p>d$. For $0< p \leq d$, $ \bigwedge^p(E)$ has
dimension $d!/(p!(d-p)!)$. If $\{e_1, . . . , e_d\}$ is an ordered
basis of $E$ and its dual basis $\{e^1, \ldots, e^d\}$, a basis for $
\bigwedge^p(E)$ is
\[
\{e^{i_1}, \ldots, e^{i_p} \ | \ 1\leq i_1 < i_2<\ldots <i_p\leq d\}.
\]
Therefore any $\alpha\in  \bigwedge^p(E)$, can be expanded as
\[
\alpha= \sum_{i_1<\ldots < i_p}
\alpha_{i_1\ldots i_p}e^{i_1}\wedge \ldots \wedge e^{i_p}
=\frac{1}{p!}
\alpha_{i_1\ldots i_p}e^{i_1}\wedge \ldots \wedge e^{i_p}=
\frac{1}{p!}\delta_{j_1\ldots j_p}^{i_1\ldots i_p}
\alpha_{i_1\ldots i_p}e^{j_1}\otimes \ldots \otimes e^{j_p}.
\]
Given the tangent vector bundle $TM$ of a manifold $M$, we can construct fiberwise the vector bundle
$\bigwedge^p(M)$ of exterior differential $p$-form on the tangent spaces of $M$, as
\[
\bigwedge\nolimits^{\! p}(M)=\bigwedge\nolimits^{\! p}(TM)=\cup_{a\in M} \bigwedge\nolimits^{\! p}(TM_a).
\] 
The field of exterior differential $p$-form on a manifold $M$, denoted
 $\Lambda^p(M)$, is defined as the $\mathscr{C}^\infty$ section of $\bigwedge^p(M)$,
i.e.  $\Lambda^p(M)=\Gamma^\infty(\bigwedge^p(M))$. We have the following identifications:
$\Lambda^1(M)=\mathcal{T}_1^0(M)$ and  $\Lambda^0(M)=\mathcal{F}(M)$, where
$\mathcal{F}(M)$ is the set of mappings from $M$ into $\R$ that are of class
$\mathscr{C}^\infty$. As for the definition of tensors on a manifold, 
given $(U,\phi)$, an admissible local chart on $M$, the local expression on
$U$ of $\alpha \in \bigwedge^k(M)$ is given by
\[
\alpha_{|_U}= \sum_{i_1<\ldots < i_p}
\alpha_{i_1\ldots i_p}(a)da^{i_1}\wedge \ldots \wedge da^{i_p}=
\frac{1}{p!}
\alpha_{i_1\ldots i_p}(a) da^{i_1}\wedge \ldots \wedge da^{i_p}
=\frac{1}{p!}\delta_{j_1\ldots j_p}^{i_1\ldots i_p} 
\alpha_{i_1\ldots i_p}(a) da^{j_1}\otimes \ldots \otimes da^{j_p}.
\]
The differential $p$-form is of class $\mathscr{C}^k$, when 
the component maps $\alpha_{i_1\ldots i_p}: U\ni a \rightarrow \alpha_{i_1\ldots i_p}(a) \in \R$
are $k$ times continuously differentiable on $U$ or are differentiable functions of $a$
of class  $\mathscr{C}^k(U)$.

Pullback and pushforward of $p$-forms are just  special 
cases of general definitions given for tensors (see Appendix~\ref{ssec:CICF:Te}) since a $p$-form field 
is a totally antisymmetric covariant $p$-tensor field. Moreover we have the following
properties. Let $\varphi:M\rightarrow N$ be of class $\mathscr{C}^1$. Then
$\varphi^\ast:\Lambda^k(N)\rightarrow\Lambda^k(M)$ is
a homeomorphism of differential algebras, that is
\[
\varphi^\ast(\alpha\wedge \beta) = \varphi^\ast\alpha\wedge \varphi^\ast\beta, \quad
\varphi^\ast(\alpha +\lambda \gamma)=\varphi^\ast \alpha +\lambda \varphi^\ast \gamma,
\quad \alpha,\, \gamma\in \Lambda^p(N), \ \  \beta\in \Lambda^q(N), \ \ \lambda\in\R.
\]
Of course similar formulas hold also for the pushforward operator when $\varphi:M\rightarrow N$
is a diffeomorphism. The Lie derivative
is a derivative on $\Lambda^p(M)$, since it satisfies the Leibniz rule:
\[
\mathsterling_v(\alpha \wedge \beta)=\mathsterling_v\alpha \wedge \beta+ \alpha \wedge \mathsterling_v\beta, \quad
\alpha\in \Lambda^p(M), \ \  \beta\in \Lambda^q(M).
\]
From the definition of Lie derivative for tensors, the coordinate expression
for the Lie derivative $\mathsterling_v:\Lambda^p(M) \rightarrow \Lambda^p(M)$ 
of a $p$-form $\alpha $ is
\[
\mathsterling_v\alpha=\frac{1}{p!}v^l\partial_l\alpha_{i_1\ldots i_p} da^{i_1}\wedge \ldots \wedge da^{i_p}
+ \frac{1}{p!}\alpha_{i_1\ldots i_p} (
\partial_lv^{i_1}da^{l}\wedge da^{i_2}\wedge \ldots \wedge da^{i_p}
+ \ldots + 
 \partial_lv^{i_p}da^{i_1}\wedge \ldots \wedge da^{i_{p-1}}\wedge da^{l}).
\]
This can also be recast in a simpler form, which however is not antisymmetric, namely
\[
\mathsterling_v\alpha=\frac{1}{p!}
(v^k\partial_k\alpha_{i_1\ldots i_p} +p\alpha_{ki_2\ldots i_p}\partial_{i_1}v^{k})\,
da^{i_1}\wedge \ldots \wedge da^{i_p}.
\]
For more details we refer the reader to \citet[][Sec. 6.1, pp. 392; Sec. 6.3, pp. 417]{AMR88},
\citet[][Sec. IV.A.1, pp. 195]{CDD77} and \citet[][Sec. 5.3, pp. 102]{Fec06}.

\subsection{Exterior derivative and interior product}
\label{ssec:CICF:EDIP}
The exterior differentiation operator $d:\Lambda^p(M)\rightarrow\Lambda^{p+1}(M)$
maps a $p$-form $\alpha$ of class $\mathscr{C}^k$ into a $(p+1)$-form $d\alpha$
of class $\mathscr{C}^{k-1}$, called the exterior derivative of $\alpha$.
The operator $d$ is uniquely defined by the following properties: 
\begin{itemize}
\item[1.] $d$ is linear: $d(\alpha +\lambda \beta)=d\alpha +\lambda d\beta, \quad \lambda\in\R$,
  $\ \ \alpha,\, \beta\in\Lambda^p(M)$.
\item[2.] $d$ is an antiderivative; that is, $d$ is $\R$-linear and for
  $ \alpha\in\Lambda^p(M)$, and  $\beta\in\Lambda^q(M)$:
\item[] $ \ \  d(\alpha\wedge\beta)=d\alpha\wedge\beta + (-1)^p\alpha\wedge d\beta, \quad$ (``antiLeibniz'' product rule).
\item[3.] $d^2=dd=0$.
\item[4.] If $f\in\mathcal{F}(M)$ is a $0$-form, then $df$
  is the ordinary differential of $f$, i.e.  $df={\partial_i f} da^i$.
\item[5.] The operation $d$ is local: if $\alpha$ and $\beta$ coincide on an open set $U$,
  $d\alpha =d\beta$ on $U$; that is, the behaviour
\item[] $ \ \ $   of $\alpha$ outside $U$ does not affect
  $d\alpha_{|_U}$, i.e. $d(\alpha_{|_U})=(d\alpha)_{|_U}$.
\end{itemize}

Let $\varphi:M\rightarrow N$ be a diffeomorphism. Let $v\in\mathcal{T}_0^1(M)$,
 $\ \alpha\in \Lambda^p(M)$ and $\beta\in \Lambda^p(N)$. We have the properties: 
\[
\varphi^\ast (d\beta)=d(\varphi^\ast \beta), \quad
\varphi_\ast (d\alpha)=d(\varphi_\ast \alpha), \quad
d \mathsterling_v \alpha = \mathsterling_v d\alpha.
\]

The contracted multiplication or interior product (also called inner product) of a $p$-form $\alpha \in \Lambda^p(M)$
and a vector $v\in\mathcal{T}_0^1(M)$ is denoted ${\rm i}_v \alpha$. The operator
${\rm i}_v:\Lambda^{p}(M) \rightarrow \Lambda^{p-1}(M)$ is defined as follows.
\begin{itemize}
\item[1.]${\rm i}_v$ is an antiderivative; that is, ${\rm i}_v$ is $\R$-linear and for
  $ \alpha\in\Lambda^p(M)$, and  $\beta\in\Lambda^q(M)$:
\item[] $\, \ \  {\rm i}_v(\alpha\wedge\beta)=({\rm i}_v\alpha)\wedge\beta +
  (-1)^p\alpha\wedge ({\rm i}_v\beta), \quad $ (``antiLeibniz'' product rule).
\item[2.] ${\rm i}_vf=0$, $\ f\in\mathcal{F}(M)$; $\quad {\rm i}_vda^i=v^i$.  
\end{itemize}
Then by the ``antiLeibniz'' rule, the coordinate expression of the interior product of a $p$-form $\alpha$ is
\[
  {\rm i}_v\alpha=\frac{1}{(p-1)!}v^k\alpha_{ki_2\ldots i_p}da^{i_2}\wedge\ldots\wedge da^{i_p}
  =\sum_{i_1<\ldots< i_{p-1}}v^k\alpha_{ki_1\ldots i_{p-1}} da^{i_1}\wedge\ldots\wedge da^{i_{p-1}}.
\]
Let $\varphi:M\rightarrow N$ be a diffeomorphism.  Let $v, \, w \in\mathcal{T}_0^1(M)$,
$\ u\in\mathcal{T}_0^1(N)$, $\ \alpha\in \Lambda^p(M)$,
$\ \beta\in \Lambda^p(N)$,  $\ \gamma\in\Lambda^1(M)$,  and $f\in\mathcal{F}(M)$.
Using the commutator notation $[A,B]=AB-BA$, we have the properties:
\begin{itemize}
\item[1.] ${\rm i}_v^2={\rm i}_v{\rm i}_v=0$.
\item[2.] $\mathsterling_v \alpha= {\rm i}_v d\alpha + d{\rm i}_v\alpha$, (Cartan formula)
\item[3.] ${\rm i}_{fv}\alpha= f{\rm i}_v\alpha={\rm i}_vf\alpha$,
  $\ \ \ {\rm i}_v df=\mathsterling_vf$,
  $\ \ \ \mathsterling_{fv}\alpha=  f\mathsterling_{v}\alpha + df \wedge {\rm i}_v\alpha$.
\item[4.] $[\mathsterling_{v},{\rm i}_w]\alpha ={\rm i}_{[v,w]}\alpha$,
  $\ \ \  [\mathsterling_{v},\mathsterling_{w}]\alpha=\mathsterling_{[v,w]}\alpha$,
  $\ \ \ {\rm i}_v\mathsterling_{v} \alpha=\mathsterling_{v}{\rm i}_v \alpha$,
  $\ \ \ {\rm i}_v{\rm i}_wd \gamma=\mathsterling_v{\rm i}_w\gamma - \mathsterling_w{\rm i}_v\gamma -{\rm i}_{[v,w]}\gamma$.
\item[5.] 
  $ \varphi^\ast{\rm i}_u\beta={\rm i}_{\varphi^\ast u}  \, \varphi^\ast\beta$,
  $\ \ \ \varphi_\ast{\rm i}_v\alpha={\rm i}_{\varphi_\ast v}  \, \varphi_\ast\alpha$.
\end{itemize}  
The last formula of point $4.$, which expresses the exterior
derivative in terms of the Lie derivative, can be extended
to high-order form \citep[see, e.g.,][Sec. 6.4, pp. 431]{AMR88}. For more details about exterior
derivative and interior product, we refer the reader to
\citet[][Sec. IV.A.2 to Sec. IV.A.4, pp. 200]{CDD77} and  \citet[][Sec. 6.4, pp. 423]{AMR88}.

\subsection{Hodge dual operator and  exterior coderivative}
\label{ssec:CICF:HSECD}

Let $(M,g)$ be a $d$-dimensional Riemannian manifold with the volume form $\mu$.
The Hodge dual operator is defined as the unique isomorphism $\star:\Lambda^p(M)\rightarrow \Lambda^{d-p}(M)$,
which satisfies \citep[see, e.g.,][Sec. 6.2, pp. 411]{AMR88}
\begin{equation}
  \label{DefStarHodge}
\alpha \wedge \star \beta = (\!( \alpha, \beta )\!)_g \,\mu, \quad \alpha,\, \beta \in \Lambda^p(M).
\end{equation}
with
\[
(\!( \alpha, \beta )\!)_g=\frac{1}{p!}\,\alpha_{i_1\ldots i_p} \beta^{i_1\ldots i_p}=
\frac{1}{p!}\,\alpha_{i_1\ldots i_p} \beta_{j_1\ldots j_p}g^{i_1j_1} \ldots g^{i_pj_p}. 
\]
Using \eqref{DefStarHodge} with $\beta=da^{i_1}\wedge \ldots \wedge da^{i_p}$
and $\alpha=da^{j_1}\wedge \ldots \wedge da^{j_p}$ where $\{j_1,\ldots,j_p \}$
is the complementary set  of indices to  $\{j_{p+1},\ldots,j_d \}$, we obtain
\[
\star(da^{i_1}\wedge \ldots \wedge da^{i_p})
=\frac{1}{(d-p)!}\sqrt{\mathrm{g}} \varepsilon_{j_1\ldots j_d}g^{i_1j_1}\ldots g^{i_pj_p}da^{j_{p+1}}
\wedge \ldots \wedge da^{j_{d}}.
\]
Then the coordinate expression of the $(d-p)$-form $\star \alpha$, where $\alpha\in\Lambda^p(M)$,
is
\[
\star \alpha =\frac{1}{(d-p)!}(\star \alpha)_{i_1\ldots i_{d-p}} da^{i_{1}}
\wedge \ldots \wedge da^{i_{d-p}},
\]
with
\[
(\star \alpha)_{i_1\ldots i_{d-p}}
= \frac{1}{p!}\sqrt{\mathrm{g}}\,\varepsilon_{j_1\ldots j_p i_1\ldots i_{d-p}}\alpha^{j_1\ldots j_p} 
= \frac{1}{p!}\sqrt{\mathrm{g}}\, \varepsilon_{j_1\ldots j_p i_1\ldots i_{d-p}}
g^{j_1k_1}\ldots g^{j_pk_p} \alpha_{k_1\ldots k_p}.
\]
Let $\alpha,\, \beta\in\Lambda^p(M)$. Then the Hodge dual operator satisfies
$\alpha \wedge \star \beta=  \beta \wedge \star \alpha= (\!(\alpha,\beta )\!)_g\, \mu$,
$\star 1=\mu$, $\star \mu =1$, $\star\star\alpha= (-1)^{p(d-p)}\alpha$,
$(\!(\alpha,\beta )\!)_g= (\!(\star\alpha,\star\beta )\!)_g$. The Hodge dual is an
$\R$-linear operator, i.e. $\star(\alpha +\lambda \beta)=\star \alpha +\lambda \star\beta$,
$\lambda\in\R$.
In particular if $v$ and $w$ are two vectors of $\R^3$, and if $M=\R^3$, then
$v\times w =[\star(v^\flat\wedge w^\flat)]^\sharp$ and
$v\cdot w = \star(v^\flat\wedge \star w^\flat)$.

The codifferential operator (or exterior coderivative) $d^\star: \Lambda^{p}(M) \rightarrow \Lambda^{p-1}(M)$,
is an $\R$-linear operator which is defined by  \citep[see, e.g.,][Sec. 6.5, pp. 457]{AMR88}
\[
d^\star \alpha = (-1)^{d(p-1)+1}\star d\star\alpha.
\]
Since $d^2=0$, then $(d^\star)^2=d^\star \circ d^\star=0$.

Let $v\in \mathcal{T}_0^1(M)$ be a vector field on $M$. Then the unique function
$\mathrm{div}_\mu v\in \mathcal{F}(M)$ such that 
\[
\mathsterling_{v} \mu =:(\mathrm{div}_\mu v) \mu,
\]
is by definition called the divergence of $v$ \citep[see, e.g.,][Sec. 6.5, pp. 455]{AMR88}. Let $f, \, h\in \mathcal{F}(M)$,
with $f(a)\neq 0$, $\forall a\in M$. Then we have the formula
\[
\mathrm{div}_{f\mu}v=\mathrm{div}_{\mu}v+ f^{-1}{\mathsterling_v f}, \quad
\mathrm{div}_{\mu}(hv)= h\mathrm{div}_{\mu } v + \mathsterling_v h.
\]
Here, for a Riemannian manifold $(M,g)$ with an oriented chart $(a^1,\ldots,a^d)$ on $M$, the volume
form $\mu$ is given by \citep[see, e.g.,][Sec. 6.5, pp. 457]{AMR88}
\[
\mu(a)=\sqrt{\mathrm{g}(a)}\,da^1\wedge \ldots \wedge da^d
=
\frac{1}{d!}\delta_{i_1\ldots  i_d}^{1\ldots.. d}\sqrt{\mathrm{g}(a)}da^{i_1}\wedge\ldots\wedge da^{i_d},
 \quad \mbox{where}\ \ \mathrm{g}=\mathrm{det}(g_{ij}).
\]
Using the relation $i_{v}\mu=\star v^\flat$ and the Cartan formula we obtain
$(\mathrm{div}_\mu v)\mu:=\mathsterling_{v} \mu:= di_{v}\mu=
d\star v^\flat=-\star d^\star v^\flat=-(d^\star v^\flat)\star 1=
-(d^\star v^\flat)\mu$. Therefore
\[
\mathrm{div}_\mu v = -d^\star v^\flat=\frac{1}{\sqrt{g}}\partial_i(\sqrt{g}v^i).
\]

Let $\mathrm{Op}=\mathrm{Op}_ \Uptheta$ be an operator that depends on a tensor field $ \Uptheta$.
The operator $\mathrm{Op}$ is called natural with respect to the diffeomorphism $\varphi:M\rightarrow N$,\, if\,
$\varphi^\ast\mathrm{Op}_ \Uptheta=\mathrm{Op}_{\varphi^\ast \Uptheta}\varphi^\ast$. Of course we have
a similar definition with the pushforward operator since $\varphi_\ast=(\varphi^{-1})^\ast$. In the previous section, we have
seen that the Lie derivative, the interior product and the exterior derivative 
are natural with respect to diffeomorphisms.
For convenience we use now the following
notation: $\flat_g\equiv (\cdot)^\flat$, $\sharp_g\equiv (\cdot)^\sharp$, $\star_g\equiv\star$
and $d_g^\star\equiv d^\star$. All these operators are natural with respect to diffeomorphisms,
i.e.
\begin{equation}
  \label{naturalnessOf4operator}
  \varphi^\ast\flat_g = \flat_{\varphi^\ast g}\varphi^\ast, \quad
   \varphi^\ast\sharp_g = \sharp_{\varphi^\ast g}\varphi^\ast, \quad
   \varphi^\ast\ast_g = \ast_{\varphi^\ast g}\varphi^\ast, \quad
    \varphi^\ast d_g^\star = d_{\varphi^\ast g}^\star\varphi^\ast.
\end{equation}

Let $(M,g)$ and $(N,h)$ be two Riemanian manifolds, and $\varphi:M \rightarrow N$ a diffeomorphism.
The mapping $\varphi$ is called an isometry if $\varphi^\star h=g$ \citep[see, e.g.,][Sec. V.A.5, pp. 298]{CDD77}.
If $\varphi$ is an isometry, using \eqref{naturalnessOf4operator}, we then observe
that the commutators $[\varphi^\ast, \mathrm{Op}]$ with
$ \mathrm{Op}\in\{ \flat,\, \sharp,\, \star, \, d^\star \}$ vanish.

Let $\kappa\in \mathcal{T}_0^1(M)$. The vector field $\kappa$ on $(M,g)$
is called a Killing vector field if $\mathsterling_\kappa{g}=0$, that is it satisfies
the Killing equations
\begin{equation}
\label{eqn:killing:1}
(\mathsterling_{\kappa}g)_{ij}= \kappa^k\frac{\partial g_{ij}}{\partial a^k}
+ g_{kj}\frac{\partial \kappa^k}{\partial a^i} +  g_{ik}\frac{\partial \kappa^k}{\partial a^j}=0.
\end{equation}
Using the covariant derivative, the Killing equations \eqref{eqn:killing:1} can be recast as
\begin{equation}
\label{eqn:killing:2}
(\mathsterling_{\kappa}g)_{ij}= (\nabla_i \kappa^k)g_{kj}+ (\nabla_j \kappa^k)g_{ik}=  (\nabla_i \kappa_j)+ (\nabla_j \kappa_i)=0.
\end{equation}
Let us note that a Killing vector is always divergence-free, since the contraction
of the $2$-contravariant metric tensor $g^{ij}$ with the $2$-covariant tensor appearing in
\eqref{eqn:killing:2} gives $2\nabla_i\kappa^i=0$.
The Lie derivative theorem (see Sec.~\ref{sec:CLT}) implies that the vector field $\kappa$ generates a flow $f_t:M\rightarrow M$,
which leaves invariant the metric $g$, since $f_t^\ast g=g$. Thus
the flow $f_t$, induced by the  Killing vector field $\kappa$,
generates a family of isometries.
Since the operators $\flat_g$, $\sharp_g$, $\star_g$,  and $d_g^\star$ are natural
with respect to diffeomorphism we obtain $[f_t^\ast,\mathrm{Op}]=0$, with
$ \mathrm{Op}\in\{ \flat_g,\, \sharp_g,\, \star_g, \, d_g^\star \}$. Taking the derivative of
$[f_t^\ast,\mathrm{Op}]=0$ with respect to time $t$ at $t=0$, we obtain
\citep[see, e.g.,][Sec. 8.3, pp. 171]{Fec06}
\[
[\mathsterling_\kappa,\mathrm{Op}]=0, \quad \mathrm{Op}\in\{ \flat_g,\, \sharp_g,\, \star_g, \, d_g^\star \}.
\]

\subsection{Riemannian connection and covariant derivative}
\label{ssec:CICF:RCCD}
The velocity vector field lies in the tangent bundle, and so the
acceleration (the ``velocity of the velocity'') lies in the tangent bundle of the
tangent bundle.  The acceleration of the fluid is the rate of change of the
velocity vector field $v$ in the direction $v$ of a trajectory $t\rightarrow\varphi_t$
(with  $\dot{\varphi}_t(a)=v(t,\varphi_t(a))$) and is thus a special case
of what is called the directional derivative.
For clarity of this exposition and leaving apart physical considerations about acceleration,
we assume now that the vector field $v$ is time-independent. We also consider another time-independent
vector field $u$.
The directional derivative of $u$ in the direction of the vector field $v$,
which generates the flow $\varphi_t$, is noted $\nabla_v u$ and is defined by
\begin{equation}
  \label{DirDer}
\nabla_v u(a) = \lim_{t\rightarrow0} \frac{\mathcal{P}_\myparallel^\ast u(\varphi_t(a)) - u(a)}{t},
\end{equation}
where $\mathcal{P}_\myparallel^\ast u(\varphi_t(a))$ denotes a backward parallel transport
of the vector $u(\varphi_t(a))$. Since in an Euclidean space $\R^d$ all tangent spaces
are the same and identified with $\R^d$, the backward parallel transport $\mathcal{P}_\myparallel^\ast$
is just an infinitesimal rigid translation or shift, which alters neither length nor direction
of shifted vectors. However in the case of a manifold $M$, the vector $ u(\varphi_t(a))$ belongs
to the tangent space $TM_{\varphi_t(a)}$, while the vector $u(a)$ belongs to $TM_a$. Such vectors
lie in different vector spaces and thus their difference by using rigid translation has no meaning.
Therefore, on a manifold we need to introduce a rule of parallel transport (satisfying suitable requirements)
as a linear mapping connecting two different tangent spaces, namely
\[
\begin{tabular}{rcll}
$\mathcal{P}_{\myparallel, \varphi, a, b}: $ & $TM_{a}$ & $\rightarrow$ & $TM_{b=\varphi_t(a)}$\\
&$ w $ & $\mapsto$ & $\mathcal{P}_{\myparallel, \varphi, a, b} w$.
\end{tabular}
\]
Note that the rule of parallel transport takes as input not only
the edge point $a$ and $b$, but also a path $\varphi$ connecting
them. So, if a vector field $u$ is given at the point $a$, in addition
to a path from $a$ to $b$, the parallel transport of $u$ is uniquely
defined to the point $b$. Given another path the parallel transport is
unique as well, but the resulting transported vectors may well be different. The path-dependence of parallel transport is an
important and typical feature, which enables one to speak about the
curvature of the manifold. In fact the only situation in which all
parallel transport is independent of path is when there is no
curvature.  In spite of this, the infinitesimal limit in
\eqref{DirDer} is independent of the choice of the curve, so that it
may be used to define the so-called covariant derivative $\nabla_{w}u
$ of $u$ in the direction $w$, for any given vector $w\in TM_a$, since
the limit does not depend on how $w$ is extended to a vector field on
the whole manifold.  In addition, as we shall see, covariant
derivative and parallel transport can be extended to tensors.
Finally, observe that the vanishing of the covariant derivative on
some curve $t\rightarrow \varphi_t$ amounts to stating that the vector field $u$
behaves as if its values along the curve $ \varphi_t$ were arising  by
parallel transport to the whole curve of the value taken at a
particular point on the curve. Such a field along a curve is called an
autoparallel field. The covariant derivative thus measures the
deviation from being autoparallel.  We have seen that the
infinitesimal version of the parallel transport rule allows one to define a
differentiation of one vector field with respect to another one; this
differentiation process is called a linear connection and is noted
$\nabla$. In fact the role of the parallel transport and the covariant
derivative can be reversed.  Indeed, when it is technically feasible  to perform
the operation of covariant derivative, one can construct a
parallel transport rule, which is simply obtained by performing the
transport in such a way  that the covariant derivative vanishes. This is the usual
way of introducing the concept of linear connection on a manifold, which 
we now state formally.

To each vector field $w\in \mathcal{T}_0^1(M)$, one  associates an operator
$\nabla_w$, the covariant derivative along the field $w$, satisfying the following properties:
\begin{itemize}
\item[1.]It is a linear operator on the tensor algebra, which preserves the degree:
\begin{eqnarray*}
& & \nabla_w:  \mathcal{T}_p^q(M) \rightarrow \mathcal{T}_p^q(M),\\
  & & \nabla_w( \Uptheta_1
  + \lambda  \Uptheta_2)  = \nabla_w  \Uptheta_1 + \lambda\nabla_w \Uptheta_2, \quad \Uptheta_1,
  \   \Uptheta_2\in \mathcal{T}_p^q(M), \ \ \lambda\in\R.
\end{eqnarray*}
\item[2.] It is a derivative, i.e. it satisfies the Leibniz rule:
\[
\nabla_w( \Uptheta_1 \otimes  \Uptheta_2) =  \nabla_w \Uptheta_1 \otimes  \Uptheta_2
+   \Uptheta_1 \otimes \nabla_w  \Uptheta_2, \quad  \Uptheta_1\in \mathcal{T}_{p_1}^{q_1}(M), \
  \Uptheta_2\in \mathcal{T}_{p_2}^{q_2}(M).
 \]
\item[3.]
It is $\mathcal{F}$-linear with respect to $w$, i.e.
$
\nabla_{v+\lambda w} =\nabla_{v}+\lambda\nabla_{ w}. 
$
\item[4.]
$
\nabla_w f=w(f)=\mathsterling_wf, \quad \forall f\in \mathcal{F}(M)=\mathcal{T}_0^0(M).
$
\item[5.]
   $\nabla_w$ commutes with the operation of contracted multiplication.
\end{itemize}

Given $(U,\phi)$ an admissible local chart on a $d$-dimensional manifold $M$,
the natural basis for $\mathcal{T}_0^1(U)$ is $\{\partial/\partial a_i \}_{1\leq i\leq d}$,
while the natural basis for  $\mathcal{T}_1^0(U)$ is the dual basis $\{ da^i \}_{1\leq i\leq d}$.
The covariant derivative is uniquely specified by the coefficients of linear
connection $\Gamma_{jk}^i(a)$ with respect to the natural basis and  are functions defined by
\[
\nabla_i \partial_j =:  \Gamma_{ij}^k \partial_k , \quad
\nabla_i\, da^j =  -\Gamma_{ki}^j da^k, \quad \nabla_i:=\nabla_{\partial_i},
\]
with the notation $\partial_i \equiv \partial/\partial a^i$.
Let us note that if $ \Uptheta\in \mathcal{T}_p^q(M)$, then the covariant derivative (also called
the absolute differential) $\nabla  \Uptheta= \nabla_{\!\ell}  \Uptheta da^\ell= \nabla_{\!\ell}  \Uptheta \otimes da^\ell$ is
a tensor of type $(q,p+1)$. Therefore if $ \Uptheta\in \mathcal{T}_p^q(M)$, then $\nabla  \Uptheta \in \mathcal{T}_{p+1}^q(M)$
with
\[
\nabla  \Uptheta = \nabla_{\!\ell}  \Uptheta_{j_1\ldots j_p}^{i_1\ldots i_q} 
\frac{\partial}{\partial {a^{i_1}}} \otimes\ldots\otimes \frac{\partial}{\partial{a^{i_q}}}\otimes da^\ell
\otimes da^{j_1}\otimes \ldots \otimes da^{j_p},
\]
with
\[
\nabla_{\!\ell}  \Uptheta_{j_1\ldots j_p}^{i_1\ldots i_q}= \partial_\ell \Uptheta_{j_1\ldots j_p}^{i_1\ldots i_q}
\quad + \ \Gamma_{ik}^{i_1} \Uptheta_{j_1\ldots j_p}^{ki_2\ldots i_q}\ +\ \mbox{all upper indices} \quad -  
\ \Gamma_{ij_1}^{l} \Uptheta_{l j_2\ldots j_p}^{i_1\ldots i_q}\ -\ \mbox{all lower indices}. 
\]

Under a change of natural basis, resulting from a change of coordinates
$(a^i)_{1\leq i\leq d}\mapsto (\tilde{a}^i:=\tilde{a}^i(a))_{1\leq i\leq d}$,
the following transformation holds:
\[
\widetilde{\Gamma}_{jk}^i= \frac{\partial \tilde{a}^i}{\partial a^l}\frac{\partial a^m}{\partial \tilde{a}^j} 
\frac{\partial a^n}{\partial \tilde{a}^k}\Gamma_{mn}^l  +  \frac{\partial \tilde{a}^i}{\partial a^l}
\frac{\partial^2 a^l}{\partial \tilde{a}^j\partial \tilde{a}^k}.
\]
From this expression, we observe that the  coefficients of the linear
connection $\Gamma_{jk}^i$, called the Christoffel symbols of the second kind, are not tensors
since they do not satisfy the tensoriality criterion given by the change of coordinate formula
for the components of a tensor \eqref{DefOfPushForward}.
On a $\mathscr{C}^k$ $(k\geq 2)$ manifold a connection is said to be of class $\mathscr{C}^r$
if, in all charts of an atlas, the $\Gamma_{jk}^i$ are of class $\mathscr{C}^r$. If $r\leq k-2$
the definition is coherent and does not depend on the atlas. If $ \Uptheta$ is  of class $\mathscr{C}^{k-1}$
and the connection of class $\mathscr{C}^{k-2}$, then $\nabla  \Uptheta$ is of class $\mathscr{C}^{k-2}$.

Let $(M,\nabla)$ be a manifold endowed with a linear connection, $\nabla$ the corresponding
covariant derivative operator, $t\rightarrow \gamma_t$ a curve on $M$, and $V\in \mathcal{T}_0^1(M)$.
The absolute derivative of the field $V$ along $\gamma$ is defined as
\begin{equation}
  \label{ABDV}
\frac{DV(t)}{Dt}:= \nabla_{\dot{\gamma}}V.
\end{equation}
The vector field $V$ on $\gamma$ is called autoparallel if its absolute derivative along $\gamma$ vanishes, i.e.
if the right-hand side of \eqref{ABDV} vanishes. The straight lines that result from iteration
of the infinitesimal parallel transport of the velocity vector, i.e. the trajectories $a\mapsto\gamma_t(a)$, $a\in M$,
with zero acceleration ($\nabla_{\dot{\gamma}}\dot{\gamma}=0$), are called the affinely parametrised geodesics on $(M,\nabla)$.

A fundamental object associated to a manifold $(M,\nabla)$ with a linear connection is the
torsion operation $\mathfrak{t}$, defined by
\[
\begin{tabular}{rcll}
$\mathfrak{t}: $ & $\mathcal{T}_0^1(M)\times \mathcal{T}_0^1(M)$ & $\rightarrow$ & $\mathcal{T}_0^1(M)$\\
&$ (u,\, v)  $ & $\mapsto$ & $\mathfrak{t}(u,v)=\nabla_u\nabla_v -\nabla_v\nabla_u -[u,v]$.
\end{tabular}
\]
We observe that $\mathfrak{t}$ is antisymmetric since $\mathfrak{t}(u,v)=-\mathfrak{t}(v,u)$.
The torsion tensor field $\uptau\in\mathcal{T}_2^1(M)$ is defined by $\uptau(\alpha,u,v)=\alpha(\mathfrak{t}(u,v))$,
for all $u, \, v\in  \mathcal{T}_0^1(M)$ and $\alpha\in \Lambda^1(M)=\mathcal{T}_1^0(M)$. Using the natural basis one has
$[\partial/\partial a^j,\partial/\partial a^k]=0$, so that the components of $\uptau$ are given by
\[
\uptau_{jk}^i=\Gamma_{jk}^i-\Gamma_{kj}^i.
\]
On a Riemannian manifold there exists a unique linear connection such that $\uptau=0$
and $\nabla g =0$ (i.e. $\nabla_ig_{jk}= \nabla_ig^{jk}=0$). Such a connection is called
a Riemann-Levi-Civita (RLC) connection. The condition $\uptau=0$ means that the connection $\nabla$ is
torsion-free and thus that the  Christoffel symbols are symmetric. The condition
$\nabla g =0$, which is equivalent to stating that $\nabla$ is a metric connection, ensures
the preservation of length of vectors, which are generated by parallel transport.
For an RLC connection the Christoffel symbols can be expressed in terms
of the partial derivatives of the metric tensor $g$:
\[
\Gamma_{jk}^i=\frac12 g^{il}(\partial_j g_{kl} + \partial_k g_{jl}- \partial_l g_{jk}), \quad \Gamma_{jk}^i=\Gamma_{kj}^i.
\]

Let $(M,g,\nabla,\mu)$ a Riemannian manifold endowed with a RLC connection $\nabla$ and a volume form $\mu$.
If $v=v^i\partial_i$, then
\[
{\rm div}_\mu v=\nabla_i v^i=\frac{1}{\sqrt{g}}\partial_i(\sqrt{g}v^i).
\]
Commonly used differential operators such as the exterior derivative or the codifferential
can be expressed in terms of covariant derivatives \citep[][Sec. V.B.4, pp. 316; see also \citet{DeR84} Chapter V, $\S26$, pp. 106]{CDD77}.

A detailed  description of linear connections and parallel transport can be found
in \citet[][Sec. 15.2, pp. 372]{Fec06} and  \citet[][Sec. V.B.1, pp. 300]{CDD77}.
We refer the reader to \citet[][Sec. 15.3, pp. 382 and Chapter 15]{Fec06} and \citet[][Sec. V.B.2, pp. 308 and Chapter V]{CDD77}
for more details about RLC connections (e.g. curvature tensor).

\subsection{Incompressible or divergence-free vector fields}
\label{ssec:CICF:IVF}
Let $(M,g,\nabla,\mu)$ be a Riemannian manifold endowed with an RLC connection $\nabla$ and a volume form $\mu$.
We say that a vector field $v\in\mathcal{T}_0^1(M)$ is incompressible or divergence-free (with respect to $\mu$) 
if  $\mathrm{div}_\mu v=0$. A divergence-free time-dependent smooth vector field $v\in\mathcal{T}_0^1(M)$ 
is the infinitesimal generator of a one-parameter family of volume-preserving smooth maps 
$\varphi_t:M\rightarrow M$, which satisfy 
\[
\dot{\varphi_t}:=\frac{d\varphi_t}{dt}  =v(t,\varphi_t), \quad \varphi_0=e:={\rm Identity}.
\] 
Then $v$ is incompressible (i.e.  $\mathrm{div}_\mu v=0$)
if and only if the flow $\varphi_t:M\rightarrow M$ is volume preserving; that is
the local diffeomorphism $\varphi_t:U\rightarrow V$ is volume preserving with respect to
$\mu_{|_{U}}$ and $\mu_{|_{V}}$ for all $U\subset M$.
Let us introduce $J_\mu(\varphi_t)$, the Jacobian of the flow $\varphi_t$ with respect to the volume
form $\mu$, defined by
\begin{equation*}
\label{def:Jac}
J_\mu(\varphi_t) = {\varphi^\ast \mu}/{\mu}.
\end{equation*}
Then the time-evolution of  the Jacobian $J_\mu(\varphi_t)$ is given by the classical differential identity
\begin{equation}
\label{eqn:Jac}
\frac{d}{dt} J_\mu(\varphi_t)= J_\mu(\varphi_t) \nabla_i v^i \circ \varphi_t.
\end{equation}
From \eqref{eqn:Jac} we directly see that the volume-preserving
property
 of the flow,
in other words incompressibility,  $\varphi_t$, i.e $J_\mu(\varphi_t)=1$, is equivalent --- as 
in a flat space --- to the divergence-free condition for the vector field $v$, i.e $\nabla_i v^i=0$.
The differential identity \eqref{eqn:Jac} can be easily proved from  the Lie derivative theorem
(see  Sec.~\ref{sec:CLT}), which states that
\begin{equation}
\label{LieDTmu}
\frac{d}{dt} \varphi_t^\ast \mu = \varphi_t^\ast \left(\partial_t\mu + \mathsterling_{v} \mu\right),
\end{equation}
where $\mathsterling_{v} \mu $ is the Lie derivative of the volume form $\mu$ with
respect to the vector field $v$. From a geometric point of view, the Lie derivative of the form $\mu$ 
measures  the rate of change of volume of a parallelepiped spanned by $d$ vectors that are pushed forward 
by the flow $\varphi_t$ of $v$ (see Sec.~\ref{sec:CLT}).
Indeed, dividing \eqref{LieDTmu} by $\mu$, and using the properties
$
\partial_t\mu=0, 
$
and
$
\mathsterling_{v} \mu=:({\rm div}_\mu v)\mu= \nabla_i v^i \mu, 
$
we obtain
\begin{equation*}
\frac{d}{dt} J_\mu(\varphi_t)= \left[\varphi_t^\ast\left(\nabla_i v^i \mu\right)\right]/\mu=
\left[\left(\nabla_i v^i  \circ \varphi_t\right)\varphi_t^\ast\mu\right]/\mu=
 J_\mu(\varphi_t) \nabla_i v^i \circ \varphi_t.
\end{equation*}

\subsection{Integration of differential forms and the Stokes theorem}
\label{ssec:CICF:IDF}
The standard $p$-simplex in an oriented Euclidean space $\R^p$,
is the oriented convex closed set $\mathbb{S}_p=\{x\in\R^p\ | \ 0\leq x^i\leq 1, \ \sum_{i=1}^p x^i\leq 1 \}$.
The vertices, which generate $\mathbb{S}_p\subset \R^p$, are the $p+1$
points $V_0=(0,\ldots,0)$, $V_1=(1,0\ldots,0)$, $\ldots$, $V_p(0,\ldots,0,1)$.
We shall write  $\mathbb{S}_p=(V_0, \ldots, V_p)$.
Opposite to each vertex $V_k$ there is the $k$th face of $\mathbb{S}_p$, which
is not a standard Euclidean simplex, sitting as it does in $\R^p$ instead of $\R^{p-1}$.
We shall rather consider it as a singular simplex in $\R^p$. In order to do this we must
exhibit a specific map $f_{p-1}^{k}:\mathbb{S}_{p-1}\rightarrow \mathbb{S}_p$ given by
\[
f_{p-1}^0(y^1,\ldots,y^{p-1})=\left(
1-\sum_{i=1}^{p-1}y^i,y^1,\ldots,y^{p-1}
\right),  \ \mbox{and} \
f_{p-1}^k(y^1,\ldots,y^{p-1})=\left(
y^1,\ldots,y^{k-1},0,y^k,\ldots, y^{p-1} 
\right),
\]
if $k\neq 0$. A $\mathscr{C}^m$-singular $p$-simplex on a $\mathscr{C}^r$-manifolds $M$,
 $1\leq m\leq r$, is a  $\mathscr{C}^m$-map $\mathcal{S}_p:\mathbb{S}_{p}\rightarrow M$. The points
$\mathcal{S}_p(V_0), \ldots, \mathcal{S}_p(V_p)$ are the vertices of the singular $p$-simplex
$\mathcal{S}_p$ and the maps $\mathcal{S}_p\circ f_{p-1}^k:\mathbb{S}_{p-1}\rightarrow M$ are 
called the $k$th face of the singular  $p$-simplex $\mathcal{S}_p$. We emphasise that 
there is no restriction on the rank (dimension of the image in $M$) of the map $\mathcal{S}_p$;
for example the image of $\mathbb{S}_{p}$, which is also denoted by  $\mathcal{S}_{p}$ may be 
a single point in $M$. A ($\mathscr{C}^m$-singular) $p$-chain $c_p$ on $M$ is a finite linear combination
with real coefficients $\lambda_j\in\R$ of $\mathscr{C}^m$-singular $p$-simplexes $\{\mathcal{S}_{p,j}\}_{1\leq j\leq n}$; 
that is
$
c_p = \sum_{j=1}^n \lambda_j \mathcal{S}_{p,j}.
$
The boundary of a 
singular $p$-simplex  $\mathcal{S}_{p}$ is the $(p-1)$-chain $\partial \mathcal{S}_{p} $ defined by
\[
\partial  \mathcal{S}_{p}= \sum_{k=0}^p(-1)^k\mathcal{S}_{p}\circ f_{p-1}^k,
\]
and that of a singular $p$-chain is obtained by extending the operator
$\partial$ from simplexes  to chains by linearity. For example, in $\R^2$ the $2$-simplex is a triangle 
$\mathbb{S}_2=(V_0, V_1, V_2)$, and its boundary is the $1$-chain
$\partial \mathbb{S}_2= (V_1,V_2) - (V_0,V_2) + (V_0,V_1)$. 
Using the relation $ f_{p-1}^j\circ  f_{p-2}^j =  f_{p-2}^{i-1}\circ f_{p-2}^{i-1}$ for $j<i$, we can verify the
property
\[
\partial^2=\partial\circ \partial=0.
\]
The singular $p$-simplex $\mathcal{S}_p:\mathbb{S}_{p}\rightarrow M$ 
is the natural object over which one integrates $p$-forms of $M$ via
the pullback 
\[
\int_{\mathcal{S}_p} \alpha =\int_{\mathbb{S}_{p}} \mathcal{S}_p^\ast \alpha, \quad \alpha\in\Lambda^p(M).
\]
Integration of a $p$-form over a $p$-chain is easily obtained by linear extension. Finally, we give the Stokes theorem
on chains. If $c$ is any $p$-chain and $\alpha\in \Lambda^{p-1}(M)$, then
\[
\int_c d\alpha = \int_{\partial c} \alpha.
\] 
A detailed description of the Stokes theorem on chains can be found in
\citet[][Sec. 7.2C, pp. 495]{AMR88} and \citet[][Sec. 3.3, pp. 110 and Sec. 13.1, pp. 333]{Fra12}.

\subsection{From local to global geometry: Betti numbers and Hodge's generalisation of the Helmholtz decomposition}
\label{ssec:CICF:HCHD}

Throughout our study of hydrodynamics using a geometrical point of
view, we have encountered questions that depend on the global
topological structure of the space in which the flow takes place. One
frequently occurring example is the need to know under what conditions
a differential form  that is closed (i.e. has a vanishing exterior
derivative) is also exact (i.e. is the exterior derivative of
some other form). Another instance has do with the
generalisation of the well-known Helmholtz decomposition. The latter
states that in the full 3D space, any square integrable vector field can
be orthogonally decomposed into the sum of two vector fields, one being
a gradient and the other one a curl. In terms
of differential forms this amounts to decomposing a differential form
into the sum of an exact form and of a co-exact form. Actually, the correct
decomposition, called the Hodge decomposition, has sometimes a third
term, which is harmonic (of vanishing Laplacian).

The appropriate tool to address such gobal topological issues is known
as cohomology, a central subject in modern mathematics. Here we give
only a glimpse of some key results that matter for the geometrical
approach to fluid mechanics.  The emphasis will be on \textit{Betti
numbers} that give necessary and sufficient conditions for a closed
$p$-form to be exact.

Let $M$ (resp. $N$) be a differentiable manifold of
dimension $d$ (resp. $n$). Singular $p$-chains have been defined in Appendix~\ref{ssec:CICF:IDF}.
The collection of all singular $p$-chains of $M$ with coefficients in $\R$
forms an Abelian (commutative) group, the (singular) $p$-chain group of $M$ with coefficients in
$\R$, written $C_p(M; \R)$. The boundary operator $\partial$ defines the homomorphism
$\partial : C_p(M; \R) \rightarrow C_{p-1}(M; \R)$. Given a map
$\varphi: M \rightarrow N$ we have an induced homomorphism
$\varphi_\ast : C_p(M; \R) \rightarrow C_p(N; \R)$ and the boundary homomorphism $\partial$
is natural with respect to such maps, i.e.  $\partial \circ \varphi_\ast = \varphi_\ast \circ \partial$.
We define a (singular) $p$-cycle to be a $p$-chain $c_p$ whose boundary is 0. The collection
of all $p$-cycles,
\[
Z_p(M; \R):= \{c_p \in C_p\ |\ \partial c_p = 0\} = {\rm ker} \ \partial:C_p \rightarrow C_{p-1},
\]
that is, the kernel of the boundary homomorphism $\partial $, is a subgroup (the $p$-cycle group) of the chain group $C_p$.
We define a $p$-boundary $\beta_p$ to be a $p$-chain that is the boundary
of some $(p + 1)$-chain. The collection of all such chains
\[
B_p(M; \R) := \{\beta_p \in C_p\ | \beta_p = \partial c_{p+1},
\ \mbox{for some}\  c_{p+1} \in C_{p+1}\} = {\rm Im} \ \partial : C_{p+1} \rightarrow C_p,
\] 
the image or range of $\partial$, is a subgroup (the $p$-boundary group) of $C_p$. In addition,
$\partial \beta = \partial\partial c = 0$ implies that $B_p \subset Z_p\subset C_p$.
When considering closed forms, we observe that boundaries contribute nothing to integrals. Thus, when
integrating closed forms, we may identify two cycles if they differ by a boundary. Therefore
we say that two cycles $c_p$ and $c_p'$ in $Z_p(M; \R)$ are equivalent or homologous if they differ by a boundary,
that is, an element of the subgroup $B_p(M; \R)$ of $Z_p(M; \R)$.
The quotient group
\[
H_p(M; \R) := \frac{Z_p(M; \R)}{B_p(M; \R)},
\]
is called the $p$-th homology group.
When $B_p$ and $Z_p$ are infinite-dimensional, in many cases $H_p$ is
nevertheless finite-dimensional.
For example, this is the case when $M$ is a compact finite-dimensional manifold.
The dimension of the vector space $H_p$ is called the $p$-th Betti number, written $b_p = b_p(M)$ and defined by
\[
b_p(M) := {\rm dim }\  H_p(M;\R).
\]
In other words, $b_p$ is the maximum number of $p$-cycles on $M$, such that all real linear combinations 
with non-vanishing coefficients are never a boundary.
Since $\varphi_\ast$ commutes with the boundary homomorphism $\partial$, we know that
$\varphi_\ast$ takes cycles into cycles and boundaries into boundaries. Thus $\varphi_\ast$ sends
homology classes into homology classes, and we have an induced homomorphism
$\varphi_\ast : H_p(M; \R) \rightarrow H_p(N; \R)$. We now give some fundamental examples.
If $M$ is compact (path-)connected (any two points of $M$ can be connected by a piecewise smooth curves)
then $H_0(M,\R)=\R$ and $b_0(M)=1$. If $M$ is compact but not connected, i.e. it consists of $k$ connected
pieces then $H_0(M,\R)=\R^k$ and $b_0(M)=k$. If $M$ is a $d$-dimensional closed manifold
(compact manifold without boundary), then $H_p(M,\R)=0$ and $b_p(M)=0$, for $p>d$. If $M$ is compact and simply-connected
(i.e. path-connected and every path between two points can be continuously transformed, staying on $M$,
into any other such path while preserving the two endpoints in question; in other words $M$ is connected
and every loop in $M$ is contractible to a point) then  $H_1(M,\R)=0$ and $b_1(M)=0$. More examples can be found
in \citet[][Sec. 13.3, Chapter 13, pp. 347]{Fra12}.

We set $Z^p(M; \R)$ the subspace of $\Lambda^p(M)$ constituted of all closed $p$-forms, also called
$p$-cocyles. We set $B^p(M; \R)$ the subspace of $Z^p(M;\R)$ constituted of all exact $p$-forms, also called
$p$-coboundaries. Integration allows us to associate to each closed $p$-form on $M$ a linear fonctional
on $p$-cycles. This linear functional remains the same if we add to a closed $p$-form an exact $p$-form or
if we add to a $p$-cycle a $p$-boundary. Therefore this linear functional defines a linear transformation
from the quotient space $Z^p(M;\R)/B^p(M;\R)$ to $H_p^\ast(M;\R)$ that is the dual space of $H_p(M;\R)$. This
dual space is called the $p$-th cohomology vector space and is noted $H^p(M;\R)$. Moreover it can be shown
that this linear functional is an isomorphism: this is the celebrated
de Rham theorem \citep[see, e.g.,][Sec. 13.4, Chapter 13, pp. 355]{Fra12}.
Therefore we have
\[
H^p(M;\R):= H_p^\ast(M;\R) = \frac{Z^p(M;\R)}{B^p(M;\R)}.
\]
Two closed forms are equivalent or cohomologous if they differ by an exact form.
As a consequence a closed $p$-form is exact if and only if its integral on any $p$-cycles vanishes
or if it is cohomologous to zero.
Since a finite-dimensional vector space has the same dimension as its dual space, we have
${\rm dim}\ H^p(M;\R)=b_p(M)$ for $M$ compact, where $b_p(M)$ is the $p$-th Betti number.
Thus $b_p(M)$ is also the maximum number of closed $p$-forms on $M$, such that all linear
combinations with non-vanishing coefficients are not exact.
The knowledge of the Betti numbers of a given manifold
$M$ for $p\geq 1$ yields an exact quantitative answer to the question about exactness of a closed $p$-form:
\[
\mbox{a closed $p$-form is exact if and only if } b_p(M)=0. 
\]
From the Poincar\'e lemma \citep[see, e.g.,][Lemma 6.4.18]{AMR88},
if $M$ is a compact $d$-dimensional contractible manifold  (see Appendix~\ref{ssec:CICF:Ma} for
the definition), all the Betti numbers  $p\geq 1$ vanish, i.e.  $b_1(M)=\ldots = b_d(M) =0$,  and $b_0(M) = 1$.
Contractibility is, however, an excessivily strong constraint to
ensure the equivalence of closeness and exactness. For $p$-forms of a
given degree $p$,  the vanishing of just the   Betti
number, $b_p(M)$  is actually sufficient. Let us remark that from the duality between the
finite-dimensional vector spaces $H^p(M;\R)$ and $H_p^\ast(M;\R)$ exactness of $p$-form can be
determined from the topological properties of $M$.

The Laplace-de Rham operator $\Delta_{\rm H}: \Lambda^p(M) \rightarrow
\Lambda^p(M)$ is defined by $\Delta_{\rm H} := d d^\star + d^\star d=(d
+d^\star)(d + d^\star)$. A form $\alpha$ for which $\Delta_{\rm H} a = 0$ is
called harmonic.  Let $\mathcal{H}^p(M) :=\{ \alpha \in
\Lambda^p(M)\ | \ \Delta_{\rm H} \alpha=0\}$ denote the vector space of
harmonic $p$-forms. If $M$ is a closed Riemannian manifold (i.e.  a
compact boundaryless oriented Riemannian manifold) and $ \alpha \in
\Lambda^p(M)$, then $\Delta_{\rm H} \alpha = 0$ if and only if $d\alpha = 0$
and $d^\star \alpha = 0$. If $M$ is a compact Riemannian manifold with
boundary, the condition that $d\alpha =0$ and $d^\star \alpha=0$ is now
stronger than $\Delta_{\rm H} \alpha=0$. Thus the vector space of harmonic
$p$-form is defined by $\mathcal{H}^p(M) =\{ \alpha \in
\Lambda^p(M)\ | \ d \alpha= d^\star\alpha=0\}$.  The Hodge theorem
\citep[see, e.g.,][Theorem 14.28, Chapter 14, pp. 371; see also
  \citet{DeR84}, Theorem 22, Chapter V, $\S 1$, pp. 131]{Fra12} states
that if $M$ is a closed Riemannian manifold, then the vector space of
harmonic $p$-form is finite dimensional and the Poisson equation
$\Delta_{\rm H} \alpha = \rho$ has a solution if and only if $\rho$ is
orthogonal to $\mathcal{H}^p(M)$, that is $\langle h, \rho
\rangle_g=0$, for all $h\in \mathcal{H}^p(M)$ and where (see
Appendix~\ref{ssec:CICF:HSECD})
\[
\langle \alpha, \beta \rangle_g:=\int_M \alpha \wedge \star \beta= \int_M (\!( \alpha, \beta )\!)_g \mu, \quad \alpha,\ \beta \in
 \Lambda^p(M).
\]
If $h_1,h_2,\ldots, h_q$ is an orthonormal basis of $\mathcal{H}^p(M)$
and $\beta \in \Lambda^p(M)$ then $\beta-h:=\beta -\sum_i \langle
\beta, h_j \rangle_g h_j$ is orthogonal to $\mathcal{H}^p(M)$ and so,
by Hodge's theorem we can solve the equation $\Delta_{\rm H} \alpha =\beta -h$
for $\alpha\in \Lambda^p(M)$.  In other words, for any $\beta \in
\Lambda^p(M)$ on a closed Riemannian manifold $M$ we can write
\[
\beta=d(d^\star \alpha) + d^\star(d\alpha) + h.
\]
Thus any $p$-form $\beta$ on a closed Riemannian manifold $M$ can be
written as the sum of an exact $p$-form $d(d^\star \alpha)$, a
co-exact $p$-form $d^\star(d\alpha)$ and   a harmonic $p$-form
$h$. Hence, we obtain the Hodge decomposition
\[
\Lambda^p(M) = d\Lambda^{p-1}(M) \oplus d^\star\Lambda^{p+1}(M) \oplus \mathcal{H}^p(M),
\]
where the three subspaces are mutually orthogonal. As already observed
the Hodge decomposition generalises and extends 
the Helmholtz decomposition, for which the harmonic term is absent
(because
in $\R^d$,  the 1-cohomology $H^1=0$). In particular, from the Hodge decomposition, if
$\beta \in \Lambda^{p-1}(M)$ is closed on a closed manifold $M$, then
$\beta = d \alpha + h$ where $\alpha \in \Lambda^{p-1}(M)$ and  $h\in\mathcal{H}^p(M)$.
Thus in each $p$-cohomology vector space there is a unique harmonic representative, or in other words
the spaces $\mathcal{H}^p(M)$ and ${H}^p(M;\R)$ are isomorphic:
\[
{H}^p(M;\R)\equiv \mathcal{H}^p(M).
\]
The Hodge theorem and decomposition
have been extended to non compact spaces by \citet[][see Chapter V, $\S
  32$, pp. 136]{DeR84} and to
a compact Riemannian  manifold with boundary \citep[see,
  e.g.,][Sec. 7.5, pp. 541; see also \citet{Fra12}, Sec. 14.3, pp. 375
  and references therein]{AMR88}. In the latter case, the space of
closed (resp. exact) $p$-forms must be replaced
by the space of normal 
$p$-forms that are  closed (resp. exact). Furthermore, the space of
co-closed (resp. co-exact)  $p$-forms must be  replaced   by the space
of  co-closed (resp. co-exact) tangent $p$-forms \citep{Sch95}. Here, ``normal'' means  with
vanishing tangential components and ``tangent''  with vanishing normal
components.

Finally we recall the
Bochner theorem \citep[see, e.g.,][Theorem 14.33, Sec. 14.2, pp. 374]{Fra12}, which states
that if a closed Riemannian manifold $M$ has positive Ricci curvature,
then a harmonic $1$-form must vanish identically, and thus $M$ has first
Betti number $b_1 = 0$ and $1$-cohomology $H^1(M,\R)=0$.

\end{document}